\documentclass[11pt]{article}

\usepackage[utf8]{inputenc}
\usepackage[T1]{fontenc}
\usepackage{multicol}
\usepackage{amsthm}
\usepackage{amsmath}
\usepackage{amssymb}
\usepackage{mathtools}
\usepackage{dsfont}
\usepackage{bm}
\usepackage{bbm}
\usepackage{xparse}
\usepackage{physics}
\usepackage{empheq}
\usepackage{url}
\usepackage{hyperref}
\usepackage[affil-it]{authblk}
\usepackage{enumitem}
\usepackage{tikz}
\usetikzlibrary{quotes,angles,calc}
\usepackage{rotating}
\usepackage{graphicx}
\usepackage[linesnumbered,ruled,vlined]{algorithm2e}

\usepackage{pgfplots}
\pgfplotsset{compat = newest}

\usepackage[top=3cm,bottom=2cm,right=2cm,left=2cm]{geometry}

\theoremstyle{definition}
\newtheorem{theo}{Theorem}[section]
\newtheorem{lem}[theo]{Lemma}

\newtheorem{prop}[theo]{Proposition}
\newtheorem{defi}[theo]{Definition}

\newtheorem{theo*}{Theorem}

\theoremstyle{remark}
\newtheorem*{rk}{Remark}

\DeclareMathOperator{\Poi}{\text{Poi}}
\DeclareMathOperator{\Ber}{\text{Ber}}
\DeclareMathOperator{\Bin}{\text{Bin}}
\DeclareMathOperator{\maxi}{\text{maximize}}

\DeclareMathOperator{\st}{\text{subject to}}

\DeclarePairedDelimiterX\set[1]\lbrace\rbrace{#1}

\colorlet{darkgreen}{green!40!black}

\title{\bfseries Tight Approximation Guarantees for \\ Concave Coverage Problems}
\author{Siddharth Barman\footnote{Indian Institute of Science, Bangalore, India. \href{mailto:barman@iisc.ac.in}{\texttt{barman@iisc.ac.in}}} \qquad Omar Fawzi\footnote{Univ Lyon, ENS Lyon, UCBL, CNRS, Inria,  LIP, F-69342, Lyon Cedex 07, France. \href{mailto:omar.fawzi@ens-lyon.fr}{\texttt{omar.fawzi@ens-lyon.fr}}} \qquad Paul Fermé\footnote{Univ Lyon, ENS Lyon, UCBL, CNRS, Inria, LIP, F-69342, Lyon Cedex 07, France. \href{mailto:paul.ferme@ens-lyon.fr}{\texttt{paul.ferme@ens-lyon.fr}}}
}
\date{}

\begin{document}

\maketitle

\begin{abstract}
In the maximum coverage problem, we are given subsets $T_1, \ldots, T_m$ of a universe $[n]$ along with an integer $k$ and the objective is to find a subset $S \subseteq [m]$ of size $k$ that maximizes $C(S) := \abs{\bigcup_{i \in S} T_i}$. It is a classic result that the greedy algorithm for this problem achieves an optimal approximation ratio of $1-e^{-1}$.

In this work we consider a generalization of this problem wherein an element $a$ can contribute by an amount that depends on the number of times it is covered. Given a concave, nondecreasing function $\varphi$, we define $C^{\varphi}(S) \coloneqq \sum_{a \in [n]}w_a\varphi(\abs{S}_a)$, where $\abs{S}_a = \abs{\set{i \in S : a \in T_i}}$. The standard maximum coverage problem corresponds to taking $\varphi(j) = \min\{j,1\}$. For any such $\varphi$, we provide an efficient algorithm that achieves an approximation ratio equal to the \emph{Poisson concavity ratio} of $\varphi$, defined by $\alpha_{\varphi} := \min_{x \in \mathbb{N}^*} \frac{\mathbb{E}[\varphi(\Poi(x))]}{\varphi(\mathbb{E}[\Poi(x)])}$. Complementing this approximation guarantee, we establish a matching NP-hardness result when $\varphi$ grows in a sublinear way. 

As special cases, we improve the result of~\cite{BFGG20} about maximum multi-coverage, that was based on the unique games conjecture, and we recover the result of~\cite{DMMS20} on multi-winner approval-based voting for geometrically dominant rules. Our result goes beyond these special cases and we illustrate it with applications to distributed resource allocation problems, welfare maximization problems and approval-based voting for general rules. 
\end{abstract}

\section{Introduction}
Coverage functions are central objects of study in combinatorial optimization. Problems related to optimizing such functions arise in multiple fields, such as operations research~\cite{CFN77}, machine learning~\cite{FK14}, algorithmic game theory~\cite{DV15}, and information theory~\cite{BF17}. The most basic covering problem is the \emph{maximum coverage} one. In this problem, we are given subsets $T_1, \ldots, T_m$ of a universe $[n]$, along with a positive integer $k$, and the objective is to find a size-$k$ subset $S \subseteq [m]$ that maximizes the coverage function $C(S) \coloneqq  \abs{\bigcup_{i \in S} T_i}$. A fundamental result in the field of approximation algorithms establishes that an approximation ratio of $1-e^{-1}$ can be achieved for this problem in polynomial-time~\cite{Hochbaum96} and, in fact, this approximation guarantee is tight, under the assumption that ${\rm P} \not= {\rm NP}$~\cite{Feige98}.

Note that in the maximum coverage problem, an element $a \in [n]$ is counted at most once in the objective, even if $a$ appears in several selected sets. However, if we think of elements $a \in [n]$ as goods or resources, there are many settings wherein the utility indeed increases with the number of copies of $a$ that get accumulated. Motivated, in part, by such settings, we consider a generalization of the maximum coverage problem where an element $a$ can contribute by an amount that depends on the number of times it is covered. 

Given a function $\varphi: \mathbb{N} \rightarrow \mathbb{R}_+$, an integer $k \in \mathbb{N}$, a universe of elements $[n]$, positive weights $w_a$ for each $a \in [n]$, and  subsets $T_1,\ldots,T_m \subseteq [n]$, the $\varphi$-\textsc{MaxCoverage} problem entails maximizing $C^{\varphi}(S) \coloneqq \sum_{a \in [n]}w_a  \varphi(\abs{S}_a)$ over subsets $S \subseteq [m]$ of cardinality $k$; here $\abs{S}_a = \abs{\set{i \in S : a \in T_i}}$.

This work focuses on functions $\varphi$ that are nondecreasing and concave (i.e., $\varphi(i+2) - \varphi(i+1) \leq \varphi(i+1) - \varphi(i)$ for $i \in \mathbb{N}$). We will also assume that the function $\varphi$ is normalized in the sense that $\varphi(0) = 0$ and $\varphi(1) = 1$.\footnote{One can always replace a generic $\varphi$ to a normalized one without changing the optimal solutions through a simple affine transformation.} Our approximation guarantees are in terms of the \emph{Poisson concavity ratio} of $\varphi$, which we define as follows:
\begin{align}
\label{eq:def-alpha-varphi}
\alpha_{\varphi} := \inf_{x \in \mathbb{N}^*} \ \frac{\mathbb{E}[\varphi(\Poi(x))]}{\varphi(\mathbb{E}[\Poi(x)])} =  \inf_{x \in \mathbb{N}^*} \ \frac{\mathbb{E}[\varphi(\Poi(x))]}{\varphi(x)} \ .
\end{align} 

Here $\Poi(x)$ denotes a Poisson-distributed random variable with parameter $x$. We will write $\alpha_{\varphi}(x) \coloneqq \frac{\mathbb{E}[\varphi(\Poi(x))]}{\varphi(x)}$, with $\alpha_{\varphi}(0) = 1$, and hence (see Proposition \ref{prop:minNotInf}), $\alpha_{\varphi}  = \min_{x \in  \mathbb{N}^*} \alpha_{\varphi}(x) = \inf_{x \in \mathbb{R}_+} \alpha_{\varphi}(x)$.\footnote{We require $\varphi$ to be defined for nonnegative integers and will extend it over $\mathbb{R}_+$ by considering its piecewise linear extension.}

Our main result is that the $\varphi$-\textsc{MaxCoverage} problem admits an efficient $\alpha_\varphi$-approximation algorithm, when $\varphi$ is normalized nondecreasing concave, and this approximation guarantee is tight when $\varphi$ grows sublinearly. Formally,  

\begin{theo*}
For any normalized nondecreasing concave function $\varphi$, there exists a polynomial-time $\alpha_\varphi$-approximation algorithm for the $\varphi$-\textsc{MaxCoverage} problem. Furthermore, for $\varphi(n) = o(n)$, it is {\rm NP}-hard to approximate the $\varphi$-\textsc{MaxCoverage} problem within a factor better than $\alpha_\varphi + \varepsilon$, for any constant $\varepsilon >0$.
\label{theo:main}
\end{theo*}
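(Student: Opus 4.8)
The plan is to treat the two halves of the statement separately, the algorithmic half via a concave relaxation plus rounding, the hardness half via a Feige-style gap-preserving reduction.

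\textbf{The approximation algorithm.} First observe that $C^{\varphi}$ is a monotone submodular function (monotonicity from $\varphi$ nondecreasing, submodularity because the marginal $\varphi(\abs{S}_a+1)-\varphi(\abs{S}_a)$ is nonincreasing in $S$ by concavity of $\varphi$). I would work with its multilinear extension $G$. The starting point is the concave relaxation
$\mathrm{LP}(x) := \sum_{a\in[n]} w_a\,\widehat{\varphi}\!\big(\sum_{i:a\in T_i} x_i\big)$ over the polytope $\{x\in[0,1]^m : \sum_i x_i = k\}$, where $\widehat{\varphi}$ is the piecewise-linear extension of $\varphi$; since $\widehat{\varphi}$ is concave piecewise linear with integer breakpoints and its argument is always at most $k$, it is a minimum of finitely many affine functions on $[0,k]$, so $\mathrm{LP}$ can be maximized in polynomial time by a linear program. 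Let $x^\star$ be an optimum; as $\widehat{\varphi}$ agrees with $\varphi$ on integers, $\mathrm{LP}(x^\star)\ge \mathrm{OPT}$. I then round $x^\star$ to a size-$k$ set by pipage rounding: along each direction $e_i-e_j$ the multilinear extension $G$ is convex (since $\partial^2 G/\partial x_i\partial x_j\le 0$ and $G$ is affine in each coordinate), so one can repeatedly drive a coordinate to an endpoint without decreasing $G$, terminating at a vertex, i.e.\ a size-$k$ set $S$ with $C^{\varphi}(S)=G(S)\ge G(x^\star)$.

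\textbf{The key inequality.} It remains to show $G(x^\star)\ge \alpha_\varphi\,\mathrm{LP}(x^\star)$. Writing $y_a:=\sum_{i:a\in T_i} x^\star_i$, the coverage count of $a$ under independent rounding of $x^\star$ is $X_a=\sum_{i:a\in T_i}\Ber(x^\star_i)$, a sum of independent Bernoullis with mean $y_a$, and $G(x^\star)=\sum_a w_a\,\mathbb{E}[\varphi(X_a)]$. The technical heart is the lemma that a Poisson-binomial law is dominated in the convex order by the Poisson law with the same mean; I would prove it in two steps: (i) a swapping/Schur-convexity argument (as in Hoeffding) reducing to equal probabilities $\Bin(r,y_a/r)$, using that $\varphi(0)-2\varphi(1)+\varphi(2)\le 0$; (ii) showing $\Ber(p)\preceq_{\mathrm{cx}}\Ber(p/2)+\Ber(p/2)$, iterating and using closure of the convex order under convolution and limits, so that $\Bin(r,y_a/r)\preceq_{\mathrm{cx}}\Poi(y_a)$. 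Since $\varphi$ is concave this gives $\mathbb{E}[\varphi(X_a)]\ge \mathbb{E}[\varphi(\Poi(y_a))] = \alpha_\varphi(y_a)\,\widehat{\varphi}(y_a)\ge \alpha_\varphi\,\widehat{\varphi}(y_a)$ by the definition of $\alpha_\varphi$ as an infimum over $\mathbb{R}_+$ (Proposition \ref{prop:minNotInf}); summing over $a$ yields $G(x^\star)\ge\alpha_\varphi\,\mathrm{LP}(x^\star)\ge\alpha_\varphi\,\mathrm{OPT}$, as required. I expect this convex-order comparison, and its careful passage to the limit, to be the main obstacle on the algorithmic side.

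\textbf{The hardness.} For the lower bound I would give a gap-preserving reduction following Feige's proof of the $1-e^{-1}$ inapproximability of maximum coverage, starting from a hard Label-Cover / two-prover one-round instance supplied by the PCP theorem, but replacing Feige's partition system by a ``$\varphi$-Poisson'' set system calibrated to an integer minimizer $x^\star\in\arg\min_{x\in\mathbb{N}^\star}\alpha_\varphi(x)$ (which exists by Proposition \ref{prop:minNotInf}). The system, on a ground set $U$, should have the property that a consistent (honest) strategy of the provers selects sets covering every element of $U$ exactly $x^\star$ times, contributing $\varphi(x^\star)\abs{U}$ to the objective, while any strategy induces, for all but a negligible fraction of $U$, a coverage count within total variation $\varepsilon$ of $\Poi(x^\star)$, so that the objective is at most $\big(\mathbb{E}[\varphi(\Poi(x^\star))]+\varepsilon'\big)\abs{U} = (\alpha_\varphi+\varepsilon'')\,\varphi(x^\star)\abs{U}$; existence of such a system would be shown probabilistically. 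The hypothesis $\varphi(n)=o(n)$ enters precisely here: sublinear growth is what keeps $\mathbb{E}[\varphi(\Poi(x^\star))]$ a genuine constant factor below $\varphi(x^\star)$ and lets the completeness/soundness gap survive amplification. The main obstacle on this side is the soundness analysis: as in Feige's argument one must union-bound over all exponentially many prover strategies, which forces the set system to be pseudorandom enough that any fixed strategy produces a near-Poisson coverage profile on almost all of $U$ with overwhelming probability; combining this with the standard soundness of the underlying two-prover game then gives NP-hardness of approximating $\varphi$-\textsc{MaxCoverage} within $\alpha_\varphi+\varepsilon$.
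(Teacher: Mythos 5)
Your proposal follows essentially the same route as the paper on both halves: the LP relaxation with the piecewise-linear extension of $\varphi$, pipage rounding of the multilinear extension (for which the paper separately verifies exact polynomial-time computability, Proposition \ref{prop:Fpoly}), the convex-order domination of the Poisson-binomial by the Poisson with the same mean (the paper gets this from $\Ber(p)\leq_{\text{cx}}\Poi(p)$ plus closure under convolution rather than your Hoeffding-plus-infinite-divisibility limit, but both yield the same Lemma \ref{lem:ConvexOrder}), and a Feige-style partition-system reduction from Label Cover calibrated to an integer minimizer of $\alpha_{\varphi}$. One small correction: sublinearity of $\varphi$ is not what makes $\alpha_{\varphi}<1$ (that can happen for non-sublinear $\varphi$ too); it is needed in the soundness analysis to bound the contribution of over-covered elements coming from vertices assigned many labels, and to make the finite (binomial) partition system converge to the Poisson benchmark.
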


Before detailing the proof of the theorem, we provide a few remarks and connections to related work.

\paragraph{Applications and related work}
We can directly reduce  the standard  maximum coverage problem to $\varphi$-\textsc{MaxCoverage} by setting $\varphi(j) = \min\{ j,1\}$. In this case $\alpha_{\varphi} = 1 - e^{-1}$. One can also encapsulate, within our framework, the $\ell$-\textsc{MultiCoverage} problem studied in~\cite{BFGG20} by instantiating $\varphi(j) = \min\{ j,\ell\}$. In this setting, we recover the approximation ratio $\alpha_{\varphi} = 1-\frac{\ell^{\ell}e^{-\ell}}{\ell!}$, which matches the approximation guarantee obtained in~\cite{BFGG20} (see Proposition \ref{prop:lCover}). Note that the hardness result in~\cite{BFGG20} was based on the Unique Games Conjecture, whereas the current work proves that this guarantee is tight under ${\rm P} \not= {\rm NP}$.

Another application of $\varphi$-\textsc{MaxCoverage} is in the context of multiwinner elections that entail selecting $k$ (out of $m$) candidates with the objective of maximizing the cumulative utility of $n$ voters; here, the utility of each voter $a \in [n]$ increases as more and more approved (by $a$) candidates get selected. One can reduce multiwinner elections to a coverage problem by considering subset $T_i \subseteq [n]$ as the set of voters that approve of candidate $i \in [m]$ and $\varphi(j)$ as the utility that an agent achieves from $j$ approved selections.\footnote{Indeed, for a subset of candidates $S \subseteq [m]$, the utility of a voter $a \in [n]$ is equal to $\varphi ( \abs{S}_a)$, with $\abs{S}_a = \abs{\set{i \in S : a \in T_i}}$.} Addressing multiwinner elections in this standard utilitarian model, Dudycz et al.~\cite{DMMS20} obtain tight approximation guarantees for some well-studied classes of utilities. Specifically, the result in \cite{DMMS20} applies to the classic \emph{proportional approval voting rule}, which assigns a utility of $\sum_{i=1}^j \frac{1}{i}$ for $j$ approved selections. This voting rule corresponds to the coverage problem with $\varphi(j) = \sum_{i=1}^j \frac{1}{i}$. Section~\ref{subsection:multiwinner-election} shows that Theorem \ref{theo:main} holds for all the settings considered in \cite{DMMS20} and, in fact, applies more generally. In particular, the voting version of $\ell$-\textsc{MultiCoverage} (studied in \cite{Skowron2016}) can be addressed by Theorem \ref{theo:main}, but not by the result in~\cite{DMMS20}. Such a separation also arises when one truncates the proportional approval voting rule to, say, $\ell$ candidates, i.e., upon setting $\varphi(j) = \sum_{i=1}^{\min \{ j, \ell\}} \frac{1}{i}$. Given that multiwinner elections model multiple real-world settings (e.g., committee selection \cite{Skowron2016} and parliamentary proceedings \cite{Brill2018}), instantiations of $\varphi$-\textsc{MaxCoverage} in such social-choice contexts substantiate the applicability of our algorithmic result.

Coverage functions arise in numerous resource-allocation settings, such as sensor allocation~\cite{marden2008distributed}, job scheduling, and plant location~\cite{CFN77}. The goal, broadly, in such setups is to select $k$ subsets of resources (out of $m$ pre-specified ones) such that the \emph{welfare} generated by the selected resources is maximized--each resource's contribution to the welfare increases with the number of times it is selected. This problem can be cast as $\varphi$-\textsc{MaxCoverage} by setting $n$ to be the number of resources, $\{T_i\}_{i\in [m]}$ as the given collection of subsets, and $\varphi(j)$ to be the welfare contribution of a resource when it is covered $j$ times.\footnote{Formally, to capture specific welfare-maximization problems in their entirety we have to a consider $\varphi$-\textsc{MaxCoverage} with a matroid constraint, and not just bound the number of selected subsets by $k$. Details pertaining to matroid constraints and the reduction appear in Section~\ref{sec:matroid} and~\ref{subsection:welfare-maximization}, respectively.}  Here, we mention a specific allocation problem to highlight the relevance of studying $\varphi$ beyond the standard coverage and $\ell$-coverage formulations (see Section \ref{subsection:vehicle-target} for details): in the \textsc{Vehicle-Target Assignment} problem~\cite{Murphey00,PM19} the resources are $n$ targets and covering a target $j$ times contributes $\varphi^p(j) = \frac{1-(1-p)^j}{p}$ to the welfare; here, $p \in (0,1)$ is a given parameter. Interestingly, we find that for this problem, the approximation ratio $\alpha_{\varphi}$ we obtain can outperform the \emph{price of anarchy} (PoA), which corresponds to the approximation ratio of any method whereing the agents selfishly maximize their utilities (see Section~\ref{subsection:vehicle-target} for further discussion of this point). By contrast, in the resource allocation problem with $\varphi(j) = \min\{j, \ell\}$, the price of anarchy is equal to $\alpha_{\varphi}$; see \cite{CPM19} for details. Another allocation problem studied in \cite{PM19} corresponds to $\varphi$-\textsc{MaxCoverage} with $\varphi(j) = j^d$, for a given parameter $d \in (0,1)$. We refer to this instantiation as the $d$-\textsc{Power} function. 

Theorem~\ref{theo:main} gives us a tight approximation bound of $\alpha_{\varphi}$ for all the above-mentioned applications of $\varphi$-\textsc{MaxCoverage}. The values of $\alpha_{\varphi}$ for these instantiations are listed in Table \ref{figComp}. 

\begin{table}[!h]
  \begin{center}
    \begin{tabular}{|l|l|l|l|}
      \hline
      $\varphi$-\textsc{MaxCoverage}  & $\varphi(j)$ & $\alpha_{\varphi}$ & Derivation \\
      \hline
      \textsc{MaxCoverage} & $\min \{ j,1\}$ & $1 - e^{-1}$ & Prop.~\ref{prop:lCover} \\
      $\ell$-\textsc{MultiCoverage} & $\min\{ j,\ell\}$ & $1-\frac{\ell^{\ell}e^{-\ell}}{\ell!}$ & Prop.~\ref{prop:lCover}\\
       \textsc{Proportional Approval Voting} & $\sum_{i=1}^j \frac{1}{i}$ & $\alpha_{\varphi}(1) \simeq 0.7965\ldots$ & Prop.~\ref{prop:geoDominant} \\
      \textsc{Proportional Approval Voting} capped at $3$ & $\sum_{i=1}^{\min\{j,3\}} \frac{1}{i}$ & $\alpha_{\varphi}(1) \simeq 0.7910\ldots$ & Prop.~\ref{prop:phiLinear} \\
      $p$-\textsc{Vehicle-Target Assignment} & $\frac{1-(1-p)^j}{p}$ & $\frac{1 - e^{-p}}{p}$ & Prop.~\ref{prop:VTA} \\
      $0.1$-\textsc{Vehicle-Target Assignment} & $\frac{1-(1-0.1)^j}{0.1}$ & $\frac{1 - e^{-0.1}}{0.1} \simeq 0.9516\ldots$ & Prop.~\ref{prop:VTA} \\
      $0.1$-\textsc{Vehicle-Target Assignment} capped at $5$ & $\frac{1-(1-0.1)^{\min\{j,5\}}}{0.1}$ & $\alpha_{\varphi}(5) \simeq 0.8470\ldots$ & Prop.~\ref{prop:phiLinear} \\
           $d$-\textsc{Power} & $j^d$ & $e^{-1}\sum_{k=1}^{+\infty}\frac{k^d}{k!}$ & Prop.~\ref{prop:dPower} \\
      \hline
    \end{tabular}
  \end{center}
  \caption{Tight approximation ratios for particular choices of $\varphi$ in the $\varphi$-\textsc{MaxCoverage} problem.}
  \label{figComp}
\end{table}

It is relevant to compare the approximation guarantee, $\alpha_\varphi$, obtained in the current work with the approximation ratio based on the notion of curvature of submodular functions. Note that if $\varphi$ is nondecreasing and concave, then $C^{\varphi}$ is submodular. One can show, via a direct calculation, that for such a submodular $C^{\varphi}$ the curvature (as defined in \cite{CC84}) is given by $c = 1 - (\varphi(m) - \varphi(m-1))$ for instances with at most $m$ cover sets; see Proposition \ref{prop:SubCurv}. Therefore, the algorithm of Sviridenko et al.~\cite{SVW17} provides an approximation ratio of $1 - c e^{-1}$ for the $\varphi$-\textsc{MaxCoverage} problem. We note that the {Poisson concavity ratio} $\alpha_{\varphi}$ is always greater than or equal to this curvature-dependent ratio (Proposition \ref{prop:BetterRatio}). Specifically, for $p$-\textsc{Vehicle-Target Assignment}, it is strictly better for all $p \notin \{0,1\}$ and for $\ell$-\textsc{MultiCoverage}, it is strictly better for all $\ell \geq 2$ as remarked in \cite{BFGG20}. Therefore, for the setting at hand, the current work improves the approximation guarantee obtained in \cite{SVW17}.

\paragraph{Remarks on the {Poisson concavity ratio} $\alpha_{\varphi}$.}
\label{para:rkpoi}
By Jensen's inequality along with the nonnegativity and concavity of $\varphi$, we have that $\alpha_{\varphi} \in [0,1]$.  We show that $\alpha_{\varphi}$ can be computed numerically up to any precision $\varepsilon > 0$, in time that is polynomial in $\frac{1}{\varepsilon}$. In fact, Proposition \ref{prop:ConvergenceRate} shows that $\alpha_{\varphi}(x) \geq 1 - \varepsilon$ for all $x \geq N_{\varepsilon} := \lceil \left(\frac{6}{\varepsilon}\right)^4 \rceil$. Thus, we can iterate over all $x \in \{1, 2, \ldots, N_\varepsilon \}$ and find $\min_{x \in [N_{\varepsilon}]} \alpha_{\varphi}(x)$ up to $\varepsilon$ precision (under reasonable assumptions on $\varphi$). This gives us a method to overall compute $\alpha_{\varphi}$, up to an absolute error of $2 \varepsilon$: if $\alpha_{\varphi} \leq 1 - \varepsilon$, then computing $\min_{x \in [N_{\varepsilon}]} \alpha_{\varphi}(x)$ (up to $\varepsilon$ precision) suffices. Otherwise, if $\alpha_{\varphi} \geq 1 - \varepsilon$, then $\alpha_\varphi(1) \leq 1$ provides the desired bound. 
Furthermore, we note that Proposition \ref{prop:minInt} shows that even if we consider $\alpha_{\varphi}(x)$ over all $x \in \mathbb{R}_+$, an infimum (i.e., the value of $\alpha_{\varphi}$) is achieved at an integer.

\paragraph{Further hardness under Gap-ETH}
Theorem \ref{theo:main} shows that, under the assumption ${\rm P} \not= {\rm NP}$, no polynomial-time algorithm can approximate $\varphi$-\textsc{MaxCoverage} within a better ratio than $\alpha_{\varphi}$ for sublinear $\varphi$. One natural question that arises is whether relaxing the running time constraint helps. More precisely, since there are $\binom{m}{k} = O(m^k)$ choices of $k$ cover sets among the $m$ available, a simple exhaustive search algorithm works in time $O(m^k)$. We can ask if FPT algorithms with respect to $k$, running in time $f(k) \cdot m^{o(k)}$ with $f$ an arbitrary function, can do better. As in \cite{DMMS20}, we use the result of \cite{Manurangsi20} to show in Theorem \ref{theo:GapHardness} that such algorithms cannot approximate $\varphi$-\textsc{MaxCoverage} within a better ratio than $\alpha_{\varphi}$ for sublinear $\varphi$, under the Gap-ETH hypothesis \cite{CCKLMNT17}; see Section \ref{section:GapETH} for more details. This means that the brute-force strategy is essentially the best, if one wants to get a better approximation ratio than $\alpha_{\varphi}$.

\paragraph{Proof techniques and organization}
In Section \ref{section:ApproxAlgo}, we present our approximation algorithm for the $\varphi$-\textsc{MaxCoverage}. The algorithm is an application of \emph{pipage rounding}, a technique introduced in \cite{AS04}, on a linear programming relaxation of $\varphi$-\textsc{MaxCoverage}. We show that the multilinear extension $F^{\varphi}$ of $C^{\varphi}$ is efficiently computable and thus, we can compute an integer solution $x^{\text{int}}$ from the optimal fractional one $x^*$ satisfying $C^{\varphi}(x^{\text{int}}) \geq F^{\varphi}(x^*)$. Using the notion of \emph{convex order} between distributions, we show that $F^{\varphi}(x^*) \geq \sum_{a \in [n]} w_a \mathbb{E}[\varphi(\Poi(|x^*|_a))]$, where $|x|_{a} = \sum_{i \in [m] : a \in T_i} x_i$. Comparing this to the value $\sum_{a \in [n]} w_a \varphi(|x^*|_a)$ taken by the linear program, we get a ratio given by the \emph{Poisson concavity ratio} $\alpha_{\varphi}$. The concavity of $\varphi$ is crucial at several steps of the proof: it guarantees that the natural relaxation can be written as a linear program, it is used to relate between sums of Bernouilli random variables and a Poisson random variable via the convex order, as well as for the fact that we can restrict the infimum in the definition of $\alpha_{\varphi}$ to integer values of $x$. The generalization to matroid constraints follows in a standard way and is presented in Section \ref{sec:matroid}.

In Section \ref{section:Hardness}, we present the hardness result for $\varphi$-\textsc{MaxCoverage}. For this, we define a generalization of the partitioning gadget of Feige \cite{Feige98}, extending also \cite{BFGG20}. Roughly speaking, for an integer $x_{\varphi} \in \mathbb{N}$, it is a collection of $x_{\varphi}$-covers  of the set $[n]$ (an $x$-cover is a collection of subsets such that each element $a \in [n]$ is covered $x$ times, or in other words, its $\varphi$-coverage is $\varphi(x)n$) that are incompatible in the sense that if we take an element from each one of these $x_{\varphi}$-covers, then the $\varphi$-coverage is bounded approximately by $\mathbb{E}[\varphi(\Poi(x_{\varphi}))] n$. Then, we construct an instance of $\varphi$-\textsc{MaxCoverage} from an instance of the {\rm NP}-hard problem \emph{Label Cover} (as in \cite{DMMS20}) using such a gadget with $x_{\varphi} \in \text{argmin}_{x \in \mathbb{N}} \alpha_{\varphi}(x)$. Having set up the partitioning gadget, the analysis of the reduction can be obtained by carefully generalizing the reductions of~\cite{BFGG20} and~\cite{DMMS20}.

In Section \ref{section:applications}, we present different domains of application of our result.

\section{Approximation Algorithm for $\varphi$-\textsc{MaxCoverage}}
\label{section:ApproxAlgo}

Fix a function $\varphi : \mathbb{N} \to \mathbb{R}_+$ that is normalized, nondecreasing and concave. The $\varphi$-\textsc{MaxCoverage} problem is defined as follows. The input to the problem is given by positive integers $n,m,t$ and $m$ subsets $T_{1}, \dots, T_{m}$ of the set $[n]$ (described as characteristic vectors), the weights $w_a \in \mathbb{Q}_+^*$ for $a \in [n]$ (described as a couple of bitstring of length $t$), as well as an integer $k \in \{1, \dots, m\}$.  The output is a subset $S \subseteq [m]$ of size $k$ that maximizes $C^{\varphi}(S) = \sum_{a \in [n]} w_a \varphi(|S|_a)$, where $|S|_{a} = |\{ i \in S : a \in T_i\}|$.

Note that the input to this problem can be specified using $n(m+2t) + O(\log nmt)$ bits. To reduce the number of parameters, we will assume that $t$ is polynomial in $n$ and $m$, so that a polynomial time algorithm for this problem means an algorithm that runs in time polynomial in $n$ and $m$. The counting function $\varphi$ is fixed and does not depend on the instance of the problem, but for a given instance the problem only depends on the values $\varphi(0), \varphi(1), \dots, \varphi(m)$. We assume that we have black box access to $\varphi$ and to ensure that all the algorithms run in polynomial time, we assume that $\varphi(j)$ can be described with a number of bits that is polynomial in $j$ and that this description can be computed in polynomial time.

We now describe the approximation algorithm for $\varphi$-\textsc{MaxCoverage} that we analyze. As described above, we follow the standard relax and round strategy, as in~\cite{BFGG20}. First, we define a natural convex relaxation.
\begin{defi}[Relaxed program]
  \begin{equation}
    \begin{aligned}
      &\maxi&& \sum_{a \in [n]} w_ac_a \\
      &\st&& c_a \leq \varphi(\abs{x}_a), \forall a \in [n],\text{ with }\abs{x}_a := \sum_{i \in [m] : a \in T_i} x_i\\
      &&& 0 \leq x_i \leq 1, \forall i \in [m]\\
      &&& \sum_{i=1}^m x_i = k \ .
    \end{aligned}
  \end{equation}
  \label{defi:relaxedProg}
\end{defi}

As previously mentioned, $\varphi$ is defined on $\mathbb{R}_+$ by extending it in a piecewise linear fashion on non-integral points. As such, the constraint $c_a \leq \varphi(\abs{x}_a)$ is equivalent to $m$ linear constraints. In fact, we can define $\varphi_j$ to be the linear function $\varphi_j(t) = (\varphi(j) - \varphi(j-1)) t  - (j-1) \varphi(j) + j \varphi(j-1)$ for $j \in [m]$. Since $\varphi$ is concave, we have that for all $t \in [0, m]$, $\varphi(t) = \min_{j \in [m]} \varphi_j(t)$. As such, the constraint $c_a \leq \varphi(\abs{x}_a)$ is equivalent to $c_a \leq \varphi_j(\abs{x}_a)$ for all $j\in [m]$ and so the program from Definition \ref{defi:relaxedProg} is a linear program. Overall there are $n+m$ variables and $(n+1)m + 1$ linear constraints, and by assumptions all the coefficients can be described using a number of bits that is polynomial in $n$ and $m$. Hence an optimal solution of this linear program can be found in polynomial time.

Also observe that the program from Definition \ref{defi:relaxedProg} is indeed a relaxation of the $\varphi$-\textsc{MaxCoverage} problem. To see this, given a set $S$ of size $k$, consider the characteristic vector $x \in \{0,1\}^m$ defined by $x_i = 1$ if and only if $i \in S$. Then for all $a \in [n]$, we can set $c_a = \varphi(\abs{x}_a) = \varphi(\abs{S}_a)$, and we get an objective value of $\sum_{a \in [n]}w_a\varphi(\abs{S}_a)$ which is exactly $C^{\varphi}(S)$. When solving the program from Definition \ref{defi:relaxedProg}, we get an optimal $x^* \in [0,1]^m$ which is in general not integral. Next, we describe a method to round it to an integral vector $x^{\text{int}} \in \set{0,1}^m$.

\paragraph{Rounding} 
For a submodular function $f : \{0,1\}^m \to \mathbb{R}$ , one can use pipage rounding~\cite{AS04, Vondrak07, CCPV11} to transform, in polynomial time, any fractional solution $x \in [0,1]^m$ satisfying $\sum_{i=1}^m x_i = k$ into an integral vector $x^{\text{int}} \in \set{0,1}^m$ such that $\sum_{i=1}^m x^{\text{int}}_i = k$ and $F(x^{\text{int}}) \geq F(x)$, where $F$ corresponds to the multilinear extension of $f$, provided that $F(x)$ is computable in polynomial time for a given $x$; see e.g.,~\cite[Lemma 3.4]{Vondrak07}. The multilinear extension $F : [0, 1]^m \rightarrow \mathbb{R}$ of $f$ is defined by $F(x_1,\ldots,x_m): = \mathbb{E}[f(X_1,\ldots,X_m)]$, where $X_i$ are independent random variables with $X_i \sim \Ber(x_i)$, i.e., $X_i \in \set{0,1}$ with $\mathbb{P}(X_i = 1) = x_i$. Note that $F(x) = f(x)$ for an integral vector $x \in \{0,1\}^m$.

We apply this strategy to $C^{\varphi}$, which is shown to be submodular in Proposition \ref{prop:SubCurv}, and the solution $x^*$ of the LP relaxation from Definition \ref{defi:relaxedProg}. Note that overall the algorithm is polynomial time, since here $F(x)$ is computable in polynomial time for a given $x$ (see Proposition \ref{prop:Fpoly}). We now analyze the value returned by the algorithm. Using the property of pipage rounding, with the notation $X = (X_1,\ldots,X_m)$ and $\Ber(x) = (\Ber(x_1),\ldots,\Ber(x_m))$, we get

\[
C^{\varphi}(x^{\text{int}}) = \mathbb{E}_{X \sim \Ber(x^{\text{int}})}[C^{\varphi}(X)] \geq \mathbb{E}_{X \sim \Ber(x^*)}[C^{\varphi}(X)] \ .
\]
Then it suffices to relate $\mathbb{E}_{X \sim \Ber(x^*)}[C^{\varphi}(X)]$ to the optimal value of the LP relaxation~\ref{defi:relaxedProg}, which can only be larger than the optimal value of the $\varphi$-\textsc{MaxCoverage} problem.

\begin{theo*}
  Let $x,c$ be a feasible solution of the program from Definition \ref{defi:relaxedProg} and $X \sim \Ber(x)$. Recalling the definition of $\alpha_{\varphi}$ and $\alpha_{\varphi}(j)$ from~\eqref{eq:def-alpha-varphi}, we have
  \[\mathbb{E}_{X \sim \Ber(x)}[C^{\varphi}(X)] \geq \left(\min_{j \in [m]} \alpha_{\varphi}(j)\right) \sum_{a \in [n]} w_ac_a\ .\]
  In particular, this implies that the described polynomial time algorithm has an approximation ratio of $\alpha_{\varphi}$:
  \[C^{\varphi}(x^{\text{int}}) \geq \alpha_{\varphi} \sum_{a \in [n]} w_ac^*_a \geq \alpha_{\varphi} \max_{S \subseteq [m] : \abs{S} = k} C^{\varphi}(S)\ .\]
  \label{theo:AlgoCard}
\end{theo*}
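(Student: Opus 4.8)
The plan is to prove the inequality element by element: it suffices to show that for every $a \in [n]$,
$\mathbb{E}[\varphi(\abs{X}_a)] \geq \big(\min_{j \in [m]} \alpha_{\varphi}(j)\big)\, c_a$, since $C^{\varphi}(X) = \sum_{a} w_a \varphi(\abs{X}_a)$, the weights $w_a$ are positive, and expectation is linear. Fix $a$ and set $\mu \coloneqq \abs{x}_a = \sum_{i : a \in T_i} x_i$. From $\sum_i x_i = k \leq m$ we get $\mu \in [0,m]$, and $\abs{X}_a = \sum_{i : a \in T_i} X_i$ is a sum of independent $\Ber(x_i)$ with total mean $\mu$. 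If $c_a \leq 0$ the bound is trivial since $\varphi \geq 0$ and $\alpha_{\varphi}(j) \geq 0$; so assume $c_a > 0$, whence feasibility forces $\varphi(\mu) \geq c_a > 0$, and it is enough to prove $\mathbb{E}[\varphi(\abs{X}_a)] \geq \big(\min_{j \in [m]} \alpha_{\varphi}(j)\big)\,\varphi(\mu)$.

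The key step is a convex-order comparison: I claim $\abs{X}_a \leq_{\mathrm{cx}} \Poi(\mu)$, i.e.\ $\mathbb{E}[g(\abs{X}_a)] \leq \mathbb{E}[g(\Poi(\mu))]$ for every convex $g$. Using that the convex order is preserved under sums of independent variables and that $\sum_i \Poi(p_i) \overset{d}{=} \Poi(\sum_i p_i)$, it reduces to the one-variable fact $\Ber(p) \leq_{\mathrm{cx}} \Poi(p)$. This follows from the cut (Karlin--Novikoff) criterion: the two laws have equal mean $p$, and the difference of their cumulative distribution functions is $\leq 0$ on $(-\infty,1)$ --- precisely the elementary inequality $1-p \leq e^{-p}$ --- and $\geq 0$ on $[1,\infty)$, so it changes sign exactly once; equivalently one checks directly that $\mathbb{E}[(\Ber(p)-t)_+] \leq \mathbb{E}[(\Poi(p)-t)_+]$ for all $t$. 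Since $\varphi$ is concave, $-\varphi$ is convex, so the convex order gives $\mathbb{E}[\varphi(\abs{X}_a)] \geq \mathbb{E}[\varphi(\Poi(\mu))] = \alpha_{\varphi}(\mu)\,\varphi(\mu)$, the last equality being the definition of $\alpha_{\varphi}(\mu)$ for the piecewise-linear extension of $\varphi$.

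It remains to bound $\alpha_{\varphi}(\mu) \geq \min_{j \in [m]} \alpha_{\varphi}(j)$ for $\mu \in [0,m]$. Let $g(\mu) \coloneqq \mathbb{E}[\varphi(\Poi(\mu))]$. The Poisson derivative identity $g'(\mu) = \mathbb{E}[\varphi(\Poi(\mu)+1)-\varphi(\Poi(\mu))]$, iterated once, gives $g''(\mu) = \mathbb{E}[\varphi(\Poi(\mu)+2) - 2\varphi(\Poi(\mu)+1) + \varphi(\Poi(\mu))] \leq 0$ by concavity of $\varphi$, so $g$ is concave on $\mathbb{R}_+$ (this also underlies Proposition~\ref{prop:minInt}). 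For $\mu \in [j-1,j]$ with $j \in [m]$, write $\mu = (1-\lambda)(j-1)+\lambda j$; then concavity of $g$, affineness of $\varphi$ on $[j-1,j]$, the identities $g(i) = \alpha_{\varphi}(i)\varphi(i)$ and $g(0)=\varphi(0)=0$, $\alpha_{\varphi}(0)=1$, yield $g(\mu) \geq (1-\lambda)g(j-1)+\lambda g(j) \geq \big(\min_{i\in[m]}\alpha_{\varphi}(i)\big)\big[(1-\lambda)\varphi(j-1)+\lambda\varphi(j)\big] = \big(\min_{i\in[m]}\alpha_{\varphi}(i)\big)\varphi(\mu)$. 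Dividing by $\varphi(\mu)>0$ closes the per-element bound; summing over $a$ with weights $w_a$ proves the first displayed inequality. For the ``in particular'' claim, apply this to the LP-optimal $(x^*,c^*)$, use $\min_{j\in[m]}\alpha_{\varphi}(j) \geq \min_{j\in\mathbb{N}^*}\alpha_{\varphi}(j) = \alpha_{\varphi}$ (Proposition~\ref{prop:minNotInf}), invoke the pipage-rounding inequality $C^{\varphi}(x^{\text{int}}) \geq \mathbb{E}_{X\sim\Ber(x^*)}[C^{\varphi}(X)]$ established before the theorem, and note that Definition~\ref{defi:relaxedProg} is a relaxation, so its optimum dominates $\max_{\abs{S}=k} C^{\varphi}(S)$.

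I expect the only genuinely delicate point to be the convex-order inequality $\abs{X}_a \leq_{\mathrm{cx}} \Poi(\abs{x}_a)$ --- specifically verifying the single crossing of the two CDFs (equivalently, the stop-loss ordering of $\Ber(p)$ and $\Poi(p)$ at equal means) and invoking closure of $\leq_{\mathrm{cx}}$ under independent convolution. Once that is in place, the concavity of $\mu \mapsto \mathbb{E}[\varphi(\Poi(\mu))]$ and the interpolation over the linear pieces of $\varphi$ are routine, and the reduction to the per-element statement plus the ``in particular'' consequence are bookkeeping.
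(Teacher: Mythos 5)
Your proof is correct and follows essentially the same route as the paper: the per-element reduction, the convex-order comparison $\sum_i\Ber(x_i)\leq_{\mathrm{cx}}\Poi(\abs{x}_a)$ (the paper cites Lemma 2.3 of \cite{BFGG20} for $\Ber(p)\leq_{\mathrm{cx}}\Poi(p)$ where you verify the single-crossing criterion directly, which is fine), and the restriction of $\alpha_\varphi$ to integer arguments via concavity of $\mu\mapsto\mathbb{E}[\varphi(\Poi(\mu))]$, which is exactly the content of Propositions \ref{prop:minInt} and \ref{prop:PoiCon}. No gaps.
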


In order to prove this theorem, we need the following lemma:

\begin{lem}
    For $\varphi$ concave, and $p \in [0,1]^m$, we have:
    \[\mathbb{E}\Big[\varphi\Big(\sum_{i=1}^m\Ber(p_i)\Big)\Big] \geq \mathbb{E}\Big[\varphi\Big(\Poi\Big(\sum_{i=1}^m p_i\Big)\Big)\Big]\ .\]
  \label{lem:ConvexOrder}
\end{lem}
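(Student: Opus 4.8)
The plan is to establish the inequality via the notion of \emph{convex order} between probability distributions: for random variables $Y, Z$ we say $Y \leq_{cx} Z$ if $\mathbb{E}[f(Y)] \leq \mathbb{E}[f(Z)]$ for every convex function $f$. Equivalently, since $\varphi$ is concave (so $-\varphi$ is convex), it suffices to show that $\sum_{i=1}^m \Ber(p_i) \leq_{cx} \Poi\big(\sum_{i=1}^m p_i\big)$, which immediately gives $\mathbb{E}[\varphi(\sum_i \Ber(p_i))] \geq \mathbb{E}[\varphi(\Poi(\sum_i p_i))]$. I would state this reduction explicitly up front, so that the remaining work is purely about comparing the two integer-valued distributions in convex order.

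First I would prove the single-variable building block: for a fixed $p \in [0,1]$, $\Ber(p) \leq_{cx} \Poi(p)$. Both have mean $p$, and one can check the convex-order relation directly, e.g.\ by exhibiting a martingale coupling (a random variable $N$ with $\mathbb{E}[N \mid B] = B$ where $B \sim \Ber(p)$ and $N \sim \Poi(p)$) via Strassen's theorem, or more elementarily by verifying that the "stop-loss" / integrated-tail function $t \mapsto \mathbb{E}[(\Poi(p) - t)_+] - \mathbb{E}[(\Ber(p) - t)_+]$ is nonnegative for all $t$, which is the standard characterization of convex order for distributions with equal means. Then I would invoke two classical closure properties of the convex order: (i) it is preserved under independent convolution — if $Y_i \leq_{cx} Z_i$ for all $i$ with all variables mutually independent, then $\sum_i Y_i \leq_{cx} \sum_i Z_i$; and (ii) the sum of independent Poisson variables $\sum_i \Poi(p_i)$ is distributed exactly as $\Poi(\sum_i p_i)$. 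Chaining these, $\sum_{i=1}^m \Ber(p_i) \leq_{cx} \sum_{i=1}^m \Poi(p_i) \overset{d}{=} \Poi\big(\sum_{i=1}^m p_i\big)$, and applying this to the concave $\varphi$ finishes the proof.

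The main technical obstacle is the base case $\Ber(p) \leq_{cx} \Poi(p)$ together with a clean justification of the convolution-closure property; everything else is bookkeeping. For the base case I expect the cleanest route is the stop-loss-transform criterion, since the relevant function is piecewise linear for $\Ber(p)$ and the Poisson stop-loss function is a convergent series that is easy to bound; alternatively one constructs the coupling $N = \Ber(p) + \Poi(p')$-type splitting but that is essentially circular, so I would commit to the analytic verification. One subtlety to flag: the piecewise-linear extension of $\varphi$ to $\mathbb{R}_+$ is concave (this is where normalized concavity of $\varphi$ on $\mathbb{N}$ is used), so $-\varphi$ is a genuine convex function on $\mathbb{R}_+$ and the convex-order inequality applies verbatim even though both random variables are integer-valued. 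I would also remark that $\varphi$ need not be bounded, so to be careful one restricts attention to the fact that $\varphi$ grows at most linearly (being concave with $\varphi(0)=0$), which guarantees all expectations $\mathbb{E}[\varphi(\Poi(\lambda))]$ are finite and the convex-order machinery is applicable without integrability issues.
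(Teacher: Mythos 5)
Your proposal follows exactly the same route as the paper's proof: reduce to the convex-order statement $\sum_i \Ber(p_i) \leq_{\text{cx}} \Poi(\sum_i p_i)$, establish the base case $\Ber(p) \leq_{\text{cx}} \Poi(p)$ (which the paper simply cites from Lemma 2.3 of \cite{BFGG20}), use closure of the convex order under independent convolution together with the additivity of Poisson distributions, and apply the result to the convex function $-\varphi$. The argument is correct; your additional remarks on integrability and on the concavity of the piecewise-linear extension are sound but not needed beyond what the cited closure properties already provide.
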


\begin{proof}
  The notion of \emph{convex order} discussed in \cite{StochasticOrders} allows us to prove this result. We say that $X \leq_{\text{cx}} Y \iff \mathbb{E}[f(X)] \leq \mathbb{E}[f(Y)]$ for any convex $f$. Thanks to Lemma 2.3 of \cite{BFGG20}, we have that for $p \in [0,1]$:
  \[\Ber(p) \leq_{\text{cx}} \Poi(p)\ .\]
  Since this order is preserved through convolution (Theorem 3.A.12 of \cite{StochasticOrders}), and the fact that $\sum_{i=1}^m \Poi(p_i) \sim \Poi\Big(\sum_{i=1}^m p_i\Big)$, we have:
  \[\sum_{i=1}^m\Ber(p_i) \leq_{\text{cx}}  \Poi\Big(\sum_{i=1}^m p_i\Big)\ .\]
  Applying this result to $-\varphi$, which is convex, concludes the proof.
\end{proof}

\begin{proof}[Proof of Theorem \ref{theo:AlgoCard}]
  By linearity of expectation and the fact that the weights $w_a$ are positive, it is sufficient to show that for all $a \in [n]$:
  \[ \mathbb{E}[C_a^{\varphi}(X)] \geq \left(\min_{j \in [m]} \alpha_{\varphi}(j) \right) c_a \ , \]
 where $C_a^{\varphi}(S) := \varphi(\abs{S}_a)$. Note that $\abs{X}_a = \sum_{i \in [m] : a \in T_i} X_i$, and thus:
  \begin{equation}
    \begin{aligned}
      \mathbb{E}[C_a^{\varphi}(X)] &=&& \mathbb{E}\Big[\varphi\Big(\sum_{i \in [m] : a \in T_i} X_i\Big)\Big] = \mathbb{E}\Big[\varphi\Big(\sum_{i \in [m] : a \in T_i}\Ber(x_i) \Big)\Big]\\
      &\geq&& \mathbb{E}\Big[\varphi\Big(\Poi\Big(\sum_{i \in [m] : a \in T_i} x_i\Big)\Big)\Big] \text{ thanks to Lemma \ref{lem:ConvexOrder}}\\
      &=&& \mathbb{E}[\varphi(\Poi(\abs{x}_a))] \geq \min \{ \alpha_{\varphi}(\lfloor \abs{x}_a \rfloor), \alpha_{\varphi}(\lceil \abs{x}_a \rceil) \} \varphi(\abs{x}_a) \quad \text{thanks to Proposition \ref{prop:minInt}}\\
      &\geq&& \left(\min_{j \in [m]} \alpha_{\varphi}(j) \right) \varphi(\abs{x}_a) \geq \left(\min_{j \in [m]} \alpha_{\varphi}(j) \right) c_a\ .
    \end{aligned}
  \end{equation}

\end{proof}

\subsection{Generalization to Matroid Constraints}
\label{sec:matroid}

Instead of taking a cardinality constraint $k$ on the size of the subset $S$, we look now at general matroid constraints on $S$. Specifically, as input, instead of $k$, we take a matroid $\mathcal{M}$ defined on $[m]$ and given by a set of linear constraints describing its base polytope $B(\mathcal{M})$. The output is a set $S \in \mathcal{M}$ that maximizes $C^{\varphi}(S)$. Note that the cardinality constraint considered above is the special case where $\mathcal{M}$ is the uniform matroid of all subsets of size at most $k$ and the base polytope $B(\mathcal{M}) = \{ x \in [0,1]^m : \sum_{i=1}^m x_i = k\}$.

We first note that in the order to establish Theorem \ref{theo:AlgoCard}, the cardinality constraint $\sum_{i=1}^m x_i = k$ is not used. Thus, since the pipage rounding strategy applies to matroid constraints $\mathcal{M}$ (see \cite[Lemma 3.4]{Vondrak07}), the strategy and the analysis of its efficiency generalize immediately when applied to the following linear program:

\begin{defi}[Relaxed program for matroid constraints]
  \begin{equation}
    \begin{aligned}
      &\maxi&& \sum_{a \in [n]} w_ac_a \\
      &\st&& c_a \leq \varphi(\abs{x}_a), \forall a \in [n]\\
      &&& 0 \leq x_i \leq 1, \forall i \in [m]\\
      &&& x \in B(\mathcal{M}) \quad \text{ the base polytope of } \mathcal{M} \ .
    \end{aligned}
  \end{equation}
  \label{defi:relaxedMatProg}
\end{defi}

\begin{theo*}
  Let $x,c$ a feasible solution of the program from Definition \ref{defi:relaxedMatProg} and $X \sim \Ber(x)$. Then:
  \[\mathbb{E}_{X \sim \Ber(x)}[C^{\varphi}(X)] \geq \left(\min_{j \in [m]} \alpha_{\varphi}(j)\right) \sum_{a \in [n]} w_ac_a \ .\]
  In particular, this implies that the described polynomial time algorithm has an approximation ratio of $\alpha_{\varphi}$:
  \[C^{\varphi}(x^{\text{int}}) \geq \alpha_{\varphi} \sum_{a \in [n]} w_ac^*_a \geq \alpha_{\varphi} \max_{S \in \mathcal{M}} C^{\varphi}(S)\ .\]
  \label{theo:AlgoMat}
\end{theo*}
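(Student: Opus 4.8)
The plan is to reuse the proof of Theorem \ref{theo:AlgoCard} essentially verbatim, since — as already noted in the surrounding text — that argument never invokes the cardinality constraint $\sum_{i=1}^m x_i = k$. Concretely, to prove the stated inequality I would again reduce, by linearity of expectation and positivity of the weights $w_a$, to the coordinate-wise bound $\mathbb{E}[\varphi(\abs{X}_a)] \ge \big(\min_{j\in[m]}\alpha_{\varphi}(j)\big) c_a$ for each fixed $a\in[n]$, and then run the same chain: rewrite $\abs{X}_a = \sum_{i\in[m]:a\in T_i}\Ber(x_i)$, apply Lemma \ref{lem:ConvexOrder} to get $\mathbb{E}[\varphi(\abs{X}_a)]\ge\mathbb{E}[\varphi(\Poi(\abs{x}_a))]$, invoke Proposition \ref{prop:minInt} to bound this below by $\min\{\alpha_{\varphi}(\lfloor\abs{x}_a\rfloor),\alpha_{\varphi}(\lceil\abs{x}_a\rceil)\}\,\varphi(\abs{x}_a) \ge \big(\min_{j\in[m]}\alpha_{\varphi}(j)\big)\varphi(\abs{x}_a)$, and finish with the feasibility constraint $c_a\le\varphi(\abs{x}_a)$. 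Every step here uses only that the $X_i$ are independent Bernoullis with $x_i\in[0,1]$ and that $(x,c)$ is feasible; the shape of the outer constraint on $x$ (cardinality versus the base polytope $B(\mathcal{M})$) is irrelevant, so the argument transfers without change to the program of Definition \ref{defi:relaxedMatProg}.

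For the algorithmic half, the steps, in order, would be: (i) solve the linear program of Definition \ref{defi:relaxedMatProg} to optimality to obtain $(x^*,c^*)$, which is polynomial time provided $B(\mathcal{M})$ is described by polynomially many linear inequalities (or, more generally, admits a polynomial-time separation oracle, so that the ellipsoid method applies); (ii) observe that since $\varphi$ is nondecreasing, $C^{\varphi}$ is monotone, so $\max_{S\in\mathcal{M}}C^{\varphi}(S)$ is attained at a base of $\mathcal{M}$, and the linear program relaxes this maximization — the characteristic vector of any base $S$, together with $c_a=\varphi(\abs{S}_a)$, is feasible with objective $C^{\varphi}(S)$; (iii) apply pipage rounding over the matroid base polytope (\cite[Lemma 3.4]{Vondrak07}) to $x^*$, which is legitimate because $C^{\varphi}$ is submodular (Proposition \ref{prop:SubCurv}) and its multilinear extension $F^{\varphi}$ is polynomial-time computable (Proposition \ref{prop:Fpoly}), yielding in polynomial time an integral vector $x^{\text{int}}\in B(\mathcal{M})\cap\{0,1\}^m$ — the characteristic vector of a base $S\in\mathcal{M}$ — with $C^{\varphi}(x^{\text{int}}) = F^{\varphi}(x^{\text{int}}) \ge F^{\varphi}(x^*) = \mathbb{E}_{X\sim\Ber(x^*)}[C^{\varphi}(X)]$.

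Combining the two parts I would then conclude
\[
C^{\varphi}(x^{\text{int}}) \ \ge\ \mathbb{E}_{X\sim\Ber(x^*)}[C^{\varphi}(X)] \ \ge\ \alpha_{\varphi}\sum_{a\in[n]}w_ac^*_a \ \ge\ \alpha_{\varphi}\max_{S\in\mathcal{M}}C^{\varphi}(S),
\]
the last inequality again because the program is a relaxation. I do not expect a genuine mathematical obstacle here — the substance lives entirely in Theorem \ref{theo:AlgoCard}, Lemma \ref{lem:ConvexOrder}, and Propositions \ref{prop:SubCurv}, \ref{prop:Fpoly}, \ref{prop:minInt} — so the only points that need care are bookkeeping: confirming the pipage-rounding statement being cited is phrased for general matroid base polytopes, checking that restricting attention to bases rather than arbitrary independent sets loses nothing (it does not, by monotonicity of $C^{\varphi}$), and making sure the LP of Definition \ref{defi:relaxedMatProg} is still polynomial-time solvable under whatever description of $\mathcal{M}$ is taken to be part of the input.
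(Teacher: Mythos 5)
Your proposal is correct and matches the paper's treatment exactly: the paper proves Theorem \ref{theo:AlgoMat} precisely by observing that the proof of Theorem \ref{theo:AlgoCard} never uses the cardinality constraint and that pipage rounding extends to matroid base polytopes via \cite[Lemma 3.4]{Vondrak07}. The extra bookkeeping you flag (LP solvability given the description of $B(\mathcal{M})$, and that restricting to bases loses nothing by monotonicity of $C^{\varphi}$) is sound and consistent with the paper's setup.
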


\section{Hardness of Approximation for $\varphi$-\textsc{MaxCoverage}}
\label{section:Hardness}
In this section, we establish an inapproximability bound for the $\varphi$-\textsc{MaxCoverage} problem with weights $1$ under cardinality constraints. Throughout this section we use $\Gamma$ to denote the universe of elements and, hence, an instance of the $\varphi$-\textsc{MaxCoverage} problem consists of $\Gamma$, along with a collection of subsets $\mathcal{F} = \set{F_i \subseteq \Gamma}_{i=1}^m$  and an integer $k$. Recall that the objective of this problem is to find a size-$k$ subset $S \subseteq [m]$ that maximizes $C^{\varphi}(S) = \sum_{a \in \Gamma}\varphi(\abs{S}_a)$.

We establish the following theorem in this section:

\begin{theo*}
 It is NP-hard to approximate the $\varphi$-\textsc{MaxCoverage} problem for $\varphi(n) = o(n)$ within a factor greater that $\alpha_{\varphi} + \varepsilon$ for any $\varepsilon > 0$.
  \label{theo:Hardness}
\end{theo*}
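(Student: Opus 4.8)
The plan is to reduce from a gap version of \textsc{Label Cover}, using a generalization of Feige's partitioning gadget that is tuned to the integer $x_\varphi \in \text{argmin}_{x \in \mathbb{N}^*} \alpha_\varphi(x)$; this minimiser exists and is a finite integer by Propositions \ref{prop:minNotInf}, \ref{prop:minInt} and \ref{prop:ConvergenceRate}, and the hypothesis $\varphi(n) = o(n)$ together with concavity and normalization forces $\varphi$ to differ from the identity, hence to be non-affine on $\mathbb{N}$, so $\alpha_\varphi = \alpha_\varphi(x_\varphi) < 1$. Throughout, $\Delta$ is a ``degree'' parameter and $\epsilon$ a slack parameter, both constants depending only on $\varphi$ and the target precision $\varepsilon$; $\Delta$ will be taken large so that $\Bin(\Delta, x_\varphi/\Delta)$ is close to $\Poi(x_\varphi)$.

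\emph{The gadget.} On a ground set $[n]$ (with $n$ polynomially large), I would construct a family of ``$x_\varphi$-covers'' $\mathcal{P}^{(1)}, \dots, \mathcal{P}^{(R)}$, where each $\mathcal{P}^{(b)} = \{ \mathcal{P}^{(b)}_1, \dots, \mathcal{P}^{(b)}_\Delta \}$ is a collection of $\Delta$ subsets of $[n]$ in which every element lies in exactly $x_\varphi$ of the $\Delta$ subsets; so taking all $\Delta$ subsets of a single cover $\mathcal{P}^{(b)}$ has $\varphi$-coverage exactly $\varphi(x_\varphi)\, n$. The property needed is \emph{incompatibility}: whenever one selects, for each $j \in [\Delta]$, one subset $\mathcal{P}^{(b_j)}_j$ with the labels $(b_j)_{j \in [\Delta]}$ ``spread out'' (no value used by more than an $\epsilon$-fraction of the indices $j$), the $\varphi$-coverage $\sum_{a \in [n]} \varphi\!\left( |\{ j \in [\Delta] : a \in \mathcal{P}^{(b_j)}_j \}| \right)$ is at most $\left( \mathbb{E}[\varphi(\Poi(x_\varphi))] + \epsilon \right) n$. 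Such a gadget is produced probabilistically: letting each $\mathcal{P}^{(b)}$ be the union of $x_\varphi$ independent uniformly random balanced partitions of $[n]$ into $\Delta$ classes, for a fixed spread-out selection the coverage multiplicity of a fixed element is a sum of near-independent indicators whose law is close to $\Bin(\Delta, x_\varphi/\Delta)$, hence to $\Poi(x_\varphi)$; since $\varphi$ is concave and normalized it is bounded above by the identity on the relevant range, so Chernoff/Hoeffding concentration over the $n$ elements together with a union bound over the $(R\Delta)^{O(\Delta)}$ spread-out selections succeeds for $n = \mathrm{poly}(R, \Delta, 1/\epsilon)$.

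\emph{The reduction.} Start from a $\Delta$-regular bipartite \textsc{Label Cover} instance on $X \cup Y$ with projections $\pi_{xy}$, which is NP-hard with perfect completeness versus soundness $\gamma$ for any fixed constant $\gamma > 0$ (parallel repetition of a constant-gap Label Cover). The universe of the $\varphi$-\textsc{MaxCoverage} instance is $\bigsqcup_{y \in Y} (\{y\} \times [n])$, one disjoint copy of the gadget per $y \in Y$; the cover sets $\{ F_{x,a} \}$ are indexed by a left vertex $x$ and a left label $a$, with $F_{x,a} \cap (\{y\} \times [n]) = \{y\} \times \mathcal{P}^{(\pi_{xy}(a))}_x$ for $y \sim x$ (identifying the $\Delta$ classes of $y$'s gadget with the $\Delta$ neighbours of $y$) and $\emptyset$ otherwise; and $k := |X|$. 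In the completeness case, a satisfying labeling $\sigma$ of $X$ induces a labeling $\tau$ of $Y$ with $\pi_{xy}(\sigma(x)) = \tau(y)$ on every edge, and the selection $\{ F_{x, \sigma(x)} \}_{x \in X}$ makes each block $y$ receive exactly the $\Delta$ subsets of the single cover $\mathcal{P}^{(\tau(y))}$, for a total value of $\varphi(x_\varphi)\, n\, |Y|$. In the soundness case, I claim every size-$k$ selection has value at most $\left( \alpha_\varphi\, \varphi(x_\varphi) + O(\epsilon) \right) n\, |Y|$: a standard plurality argument shows that if an $\epsilon$-fraction of blocks $y$ each had an $\epsilon$-fraction of neighbours agreeing on the value $\pi_{xy}(\cdot)$, one could extract a labeling of $X$ satisfying more than a $\gamma$-fraction of the edges; hence for all but an $\epsilon$-fraction of blocks the selection restricted to that block is spread out and contributes at most $(\mathbb{E}[\varphi(\Poi(x_\varphi))] + \epsilon)\, n$ by incompatibility, the remaining blocks contributing at most $\varphi(\Delta)\, n = O(n)$ each. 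Selections that do not pick exactly one set per left vertex are reduced to this case using submodularity of $C^\varphi$ (Proposition \ref{prop:SubCurv}) together with the observation that the total coverage obtainable by concentrating the budget is at most $\varphi(k)\, n = \varphi(O(|Y|))\, n = o(|Y|)\, n$ — this is precisely where $\varphi(n) = o(n)$ is used. Combining, the gap between the two cases is $\varphi(x_\varphi)$ versus $(\alpha_\varphi + O(\epsilon))\, \varphi(x_\varphi)$, and choosing $\gamma, \epsilon$ small and $\Delta$ large (all constants depending only on $\varphi$ and $\varepsilon$) yields NP-hardness of approximating $\varphi$-\textsc{MaxCoverage} within $\alpha_\varphi + \varepsilon$.

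\emph{Main obstacle.} I expect the crux to be the gadget construction together with the ``spread-out'' clause of the soundness analysis: one must formulate incompatibility tightly enough that it both follows from the probabilistic construction with a union bound of controllable size and is exactly what the plurality argument delivers, and one must carry out the $\Bin(\Delta, x_\varphi/\Delta) \to \Poi(x_\varphi)$ comparison so that the constant that emerges is $\mathbb{E}[\varphi(\Poi(x_\varphi))]$ itself, not merely $\leq \varphi(\mathbb{E}[\cdot])$ — this uses concavity of $\varphi$, in the spirit of Lemma \ref{lem:ConvexOrder} and Proposition \ref{prop:minInt}, and is where it matters that $x_\varphi$ can be taken to be an integer. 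A secondary technical point is the reduction of arbitrary $k$-selections to ``one set per left vertex'' within an additive $O(\epsilon)\, n\, |Y|$ error, which is the only place the sublinearity hypothesis is genuinely needed.
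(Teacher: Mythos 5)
Your proposal follows essentially the same route as the paper: a Feige-style partitioning gadget tuned to the minimising integer $x_\varphi$, built probabilistically with a Chernoff--Hoeffding bound and a union bound over selections, plugged into a gap \textsc{Label Cover} instance (the paper uses an equivalent hypergraph formulation, $h$-\textsc{AryLabelCover}), with completeness giving $\varphi(x_\varphi)n$ per block, soundness via a consistency/plurality decoding, and sublinearity of $\varphi$ used exactly where you use it --- to kill the contribution of blocks that receive far more than the average number of sets.

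There is, however, a genuine mismatch between the incompatibility property you postulate and what the soundness analysis needs. You state incompatibility for selections consisting of \emph{exactly one subset per position} $j \in [\Delta]$ with the collection labels $(b_j)$ spread out. But in the soundness case a block $y$ can receive \emph{several} subsets at the same position $j$ (namely when the left vertex $x$ sitting at position $j$ is assigned several labels $a$ with distinct projections $\pi_{xy}(a)$), and can have other positions unoccupied; your ``reduce to one set per left vertex via submodularity'' step does not repair this, since submodularity only lets you \emph{lower}-bound the coverage after deleting sets, whereas soundness needs an \emph{upper} bound on the coverage of the actual selection. The paper's formulation of the gadget resolves this: the property is stated for any choice of \emph{at most one subset per collection} $i \in T \subseteq [R]$ (with the position map $j : T \to [h]$ arbitrary, not injective), giving coverage $\approx \mathbb{E}[\varphi(\Bin(|T|, x_\varphi/h))]\,n$ as a function of the number of collections hit; inconsistency of an edge (pairwise disjoint projected label sets) is precisely what guarantees this ``one per collection'' structure, and concavity of $t \mapsto \mathbb{E}[\varphi(\Bin(t, x_\varphi/h))]$ together with the counting identity $\sum_e |L(e)| = |E|h$ then averages the bound to $\mathbb{E}[\varphi(\Bin(h, x_\varphi/h))] \approx \mathbb{E}[\varphi(\Poi(x_\varphi))]$. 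You should reformulate your gadget accordingly (this also simplifies the probabilistic analysis, since distinct collections are independent, whereas your ``spread-out'' relaxation forces you to control correlations within a collection). A secondary point: taking each $\mathcal{P}^{(b)}$ to be a union of $x_\varphi$ independent random partitions does not yield an exact $x_\varphi$-cover (an element's $x_\varphi$ classes may collide), so completeness would only be approximate; the paper instead samples a uniformly random $h$-equi-sized $x_\varphi$-cover directly.
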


Our reduction is based on a problem called $h$-\textsc{AryLabelCover}, which is equivalent to the more standard \textsc{GapLabelCover} problem as will be shown in Appendix~\ref{app:NPhardnessGap}.
\begin{defi}[$h$-\textsc{AryLabelCover}]
  An instance $\mathcal{G} = (V,E,[L],[R],\set{\pi_{e,v}}_{e \in E, v \in e})$ of $h$-\textsc{AryLabelCover} is characterized by an $h$-uniform regular hypergraph $(V, E)$ and constraints $\pi_{e,v} : [L] \rightarrow [R]$. Here, each $h$-uniform hyperedge represents a $h$-ary constraint. Additionally, for any labeling $\sigma : V \rightarrow [L]$, we have the following notions of strongly and weakly satisfied constraints:
  \begin{itemize}
  \item An edge $e = (v_1,\ldots,v_h) \in E$ is \emph{strongly satisfied} by $\sigma$ if:
    \[ \forall x,y \in [h], \pi_{e,v_x}(\sigma(v_x)) = \pi_{e,v_y}(\sigma(v_y))\ . \]
  \item An edge $e = (v_1,\ldots,v_h) \in E$ is \emph{weakly satisfied} by $\sigma$ if:
    \[ \exists x\not=y \in [h], \pi_{e,v_x}(\sigma(v_x)) = \pi_{e,v_y}(\sigma(v_y))\ . \]
  \end{itemize}
\end{defi}

\begin{prop}[$\delta,h$-\textsc{AryGapLabelCover}]
\label{prop:hardness-ary-lc}
  For any fixed integer $h \geq 2$ and fixed $\delta > 0$, there exists an $R_0$ such that for any integer $R \geq R_0$, it is {\rm NP}-hard for instances $\mathcal{G} = (V,E,[L],[R],\set{\pi_{e,v}}_{e \in E, v \in e})$ of $h$-\textsc{AryLabelCover} with right alphabet $[R]$ to distinguish between: 
  \begin{itemize}
  \item[\textbf{YES:}] There exists a labeling $\sigma$ that \emph{strongly satisfies} all the edges.
  \item[\textbf{NO:}] No labeling \emph{weakly satisfies} more than $\delta$ fraction of the edges.
  \end{itemize}
  \label{prop:AryGapLabelCover}
\end{prop}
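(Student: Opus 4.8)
The plan is to derive this from the (by now standard) NP-hardness of ordinary $2$-ary \textsc{GapLabelCover}, which follows from the PCP theorem together with Raz's parallel repetition theorem, and then to \emph{lift} a $2$-ary instance to an $h$-ary one. Concretely, for every constant $\delta' > 0$ one obtains in polynomial time bipartite label-cover instances $(A, B, E', [L], [R_1], \set{\pi_{(a,b)}}_{(a,b) \in E'})$ with projection constraints $\pi_{(a,b)} : [L] \to [R_1]$, biregular edge set $E'$, and right-degree $D$ as large as one wishes, such that in YES instances some labeling satisfies \emph{all} edges while in NO instances no labeling satisfies more than a $\delta'$ fraction of them. (This is precisely the reformulation that Appendix~\ref{app:NPhardnessGap} makes rigorous; I would cite it for the equivalence between $2$-\textsc{AryLabelCover} and \textsc{GapLabelCover}.) Given the target $\delta$ and $h$, I would fix $\delta'$ to be a suitable constant multiple of $\delta$, say $\delta' := \delta/h^2$, and take such a $2$-ary instance with right-degree $D \geq h$.

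From this $2$-ary instance I construct the $h$-ary instance as follows: the vertex set is $V := A$, each vertex keeps the label set $[L]$, and for every $b \in B$ and every size-$h$ subset $\set{a_1,\dots,a_h} \subseteq N(b)$ I add a hyperedge $e = (a_1,\dots,a_h)$ with constraints $\pi_{e,a_i} := \pi_{(a_i,b)} : [L] \to [R]$, where $R := R_1$. Because each $b \in B$ contributes exactly $\binom{D}{h}$ hyperedges and the $2$-ary instance is biregular, the resulting hypergraph is $h$-uniform and regular, and it has polynomial size since $D$ and $h$ are constants. Completeness is immediate: if $\sigma$ is a labeling of $A$ with $\pi_{(a,b)}(\sigma(a)) = \sigma_B(b)$ for every $2$-ary edge $(a,b)$, then for every hyperedge $e$ built from $b$ we have $\pi_{e,a_i}(\sigma(a_i)) = \sigma_B(b)$ for all $i$, so $e$ is strongly satisfied; hence all hyperedges are strongly satisfied.

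The soundness direction is the real work, and I expect it to be the main obstacle. Suppose a labeling $\sigma : A \to [L]$ weakly satisfies more than a $\delta$ fraction of the hyperedges. For each $b \in B$, let $p_b(v)$ denote the probability that $\pi_{(a,b)}(\sigma(a)) = v$ when $a$ is a uniformly random neighbor of $b$. A uniformly random hyperedge at $b$ is a set of $h$ distinct uniform neighbors of $b$, so by a union bound over the $\binom{h}{2}$ pairs together with the birthday-type inequality $\sum_v p_b(v)^2 \leq \max_v p_b(v)$ (plus a sampling-without-replacement correction of size $O(1/D)$, harmless since $D$ is a large constant), the probability that this hyperedge is weakly satisfied is at most $(1+o(1))\binom{h}{2}\max_v p_b(v)$. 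Since a uniform hyperedge picks $b$ uniformly, averaging over $b$ shows that more than a $\delta$ fraction of hyperedges being weakly satisfied forces $\mathbb{E}_b[\max_v p_b(v)] > \delta'$. Extending $\sigma$ to $B$ by $\sigma(b) := \text{argmax}_v p_b(v)$ and using biregularity (a uniform $2$-ary edge picks $b$ uniformly and then a uniform neighbor), the labeling $\sigma$ on $A \cup B$ satisfies a $\mathbb{E}_b[p_b(\sigma(b))] = \mathbb{E}_b[\max_v p_b(v)] > \delta'$ fraction of the $2$-ary edges, contradicting soundness $\delta'$.

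Finally, the quantifier ``for any integer $R \geq R_0$'' is handled by padding: the construction yields hardness for $R = R_1$, and adding unused symbols to the check alphabet $[R]$ (which the projections never hit) gives hardness for every $R \geq R_0 := R_1$ without affecting completeness or soundness. All of $L$, $R_0$, $D$, and the running time depend only on $h$ and $\delta$, as required. The one point to track carefully is the without-replacement correction in the birthday bound: choosing the parallel-repetition parameter so that $D$ is a sufficiently large constant makes it negligible, but one must verify it does not eat into the constant, i.e.\ that $\delta' = \Theta_h(\delta)$ genuinely suffices.
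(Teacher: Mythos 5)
Your proposal is correct, but it takes a genuinely different route from the paper. The paper's proof (Appendix~\ref{app:NPhardnessGap}) invokes, as a black box, the $\delta$-Gap-Label-Cover$(t,R)$ theorem of \cite{DMMS20}, whose soundness condition is already phrased in terms of \emph{weakly satisfied right vertices} (two neighbours of $b$ whose projected labels collide); the reduction is then essentially a renaming, taking $h = t$ and turning each right vertex $b$ together with its $t$ neighbours into a single hyperedge $N(b)$, so that strong/weak satisfaction of right vertices coincides with strong/weak satisfaction of hyperedges. You instead start from the standard edge-satisfaction formulation of \textsc{GapLabelCover} and \emph{re-derive} the weak-satisfaction soundness yourself: your collision/birthday argument combined with the argmax decoding $\sigma(b) := \mathrm{argmax}_v\, p_b(v)$ is precisely the lemma that \cite{DMMS20} supplies, and your bookkeeping is sound --- $\sum_v p_b(v)^2 \leq \max_v p_b(v)$ since $\sum_v p_b(v) = 1$, the without-replacement correction costs only a factor $\tfrac{D}{D-1} \leq 2$, and $\binom{h}{2}\tfrac{D}{D-1} \leq h^2$ justifies $\delta' = \delta/h^2$; biregularity makes the edge-counting average over $b$ legitimate, and regularity of your hypergraph follows since each $a$ lies in $d_A \binom{D-1}{h-1}$ hyperedges. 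What your approach buys is self-containedness (it rests only on PCP plus parallel repetition, not on the specific weak-soundness variant of \cite{DMMS20}) and a construction in which $h$ is decoupled from the right-degree $D$ of the source instance (all $\binom{D}{h}$ subsets per right vertex, rather than requiring $D = h$ exactly); what the paper's approach buys is brevity, since the collision-decoding work is delegated to the cited theorem. Two cosmetic points to tidy up: hyperedges arising from distinct right vertices $b \neq b'$ may coincide as vertex sets while carrying different constraints, so $E$ should be treated as a multiset indexed by $(b, S)$; and the padding step should state explicitly that the enlarged symbols of $[R]$ lie outside the image of every projection, so they cannot create new collisions.
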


\subsection{Partitioning System}

The key ingredient to prove Theorem~\ref{theo:Hardness} is a constant size combinatorial object called partitioning system, generalizing the work of Feige~\cite{Feige98} and~\cite{BFGG20}.
For any set $[n]$, $\mathcal{Q} \subseteq 2^{[n]}$, we overload the definition $C^{\varphi}(\mathcal{Q}) := \sum_{a \in [n]} \varphi(\abs{\mathcal{Q}}_a)$ with $\abs{\mathcal{Q}}_a:=\abs{\set{P \in \mathcal{Q} : a \in P}}$ and $C_a^{\varphi}(\mathcal{Q}) := \varphi(\abs{\mathcal{Q}}_a)$. Let us take $x_{\varphi} \in \text{argmin}_{x \in \mathbb{N}^*} \alpha_{\varphi}(x)$, thus $\alpha_{\varphi} = \alpha_{\varphi}(x_{\varphi})$.

We say that $\mathcal{Q}$ is an \emph{$x$-cover} of $x \in \mathbb{N}$ if every element of $[n]$ is covered $x$ times, so $C^{\varphi}(\mathcal{Q}) = n\varphi(x)$.

\begin{defi}
  An \emph{$([n],h,R,\varphi,\eta)$-partitioning system} consists of $R$ collections of subsets of $[n]$, $\mathcal{P}_1,\ldots,\mathcal{P}_R \subseteq 2^{[n]}$, that satisfy $\frac{x_{\varphi}n}{h} \in \mathbb{N}$, $x_{\varphi} \geq h$ and:
  \begin{enumerate}
  \item For every $i \in [R], \mathcal{P}_i$ is a collection of $h$ subsets $P_{i,1}, \ldots, P_{i,h} \subseteq [n]$ each of size $\frac{x_{\varphi}n}{h}$ which is an $x_{\varphi}$-cover.
  \item For any $T \subseteq [R]$ and $\mathcal{Q} = \set{P_{i,j(i)} : i \in T}$ for some function $j : T \rightarrow [h]$, we have $\abs{C^{\varphi}(\mathcal{Q}) -\psi^{\varphi}_{\abs{T},h} n} \leq \eta n$ where:
  \begin{align}
  \label{eq:def-phi}
   \psi^{\varphi}_{k,h} := \mathbb{E}\Big[\varphi\Big(\Bin\Big(k,\frac{x_{\varphi}}{h}\Big)\Big)\Big] \ .
   \end{align}
  \end{enumerate}
  \label{defi:PartSystem}
\end{defi}

\begin{rk}
  In particular, for any $\mathcal{Q} = \set{Q_1, \ldots, Q_k}$ with $Q_i$ of size $\frac{x_{\varphi}n}{h}$, we have that $C^{\varphi}(\mathcal{Q}) \leq n\varphi(k\frac{x_{\varphi}}{h})$. Indeed $C^{\varphi}(\mathcal{Q}) = \sum_{a \in [n]} \varphi(\abs{\mathcal{Q}}_a)$ with $\sum_{a \in [n]} \abs{\mathcal{Q}}_a = \sum_{i \in [k]} \abs{Q_i} = k \cdot \frac{x_{\varphi}n}{h}$. By concavity of $\varphi$ and Jensen's inequality, this function is maximized when all $\abs{\mathcal{Q}}_a$ are equals, where we get $n\varphi(k\frac{x_{\varphi}}{h})$.
\end{rk}
\begin{prop}
  For every choice of $R,h \in \mathbb{N}$ with $h \geq x_{\varphi}$, $\eta \in (0,1)$,  $n \geq \eta^{-2}R\varphi(R)^2\log(20(h+1))$ such that $\frac{x_{\varphi}n}{h} \in \mathbb{N}$, there exists an $([n],h,R,\varphi,\eta)$-partitioning system, which can be found in time exp($Rn\log(n))\cdot \text{poly}(h)$.
  \label{prop:Partitioning}
\end{prop}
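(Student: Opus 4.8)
The plan is to use the probabilistic method: sample the subsets $P_{i,j}$ randomly and show that, with positive probability, all the desired concentration inequalities hold simultaneously. First I would fix, for each $i \in [R]$, a uniformly random \emph{balanced} labeling of $[n]$ by $[h]$, i.e. pick a uniformly random partition of $[n]$ into $h$ parts of equal size $\frac{n}{h}$; wait---that would give an $x_\varphi$-cover only if $x_\varphi = 1$. Instead, the right construction is: for each $i \in [R]$ independently, choose each of the $h$ sets $P_{i,1},\dots,P_{i,h}$ to be a uniformly random subset of $[n]$ of size $\frac{x_\varphi n}{h}$, conditioned on the collection $\mathcal{P}_i = \{P_{i,1},\dots,P_{i,h}\}$ forming an $x_\varphi$-cover of $[n]$. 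Concretely one can realize this by assigning to each element $a \in [n]$ an independent uniformly random multiset of $x_\varphi$ labels from $[h]$, conditioned on each label-class having total size exactly $\frac{x_\varphi n}{h}$; a cleaner route is to pick a uniformly random bijection (a ``balanced assignment'') so that element $a$ lands in $|\{\ell : a \in P_{i,\ell}\}| = x_\varphi$ of the sets. The key point is that, marginally, for each fixed $a$ and each fixed function $j : T \to [h]$, the count $|\mathcal{Q}|_a = |\{i \in T : a \in P_{i,j(i)}\}|$ is (essentially) a sum of $|T|$ independent indicators each of probability $\frac{x_\varphi}{h}$, so $\mathbb{E}[|\mathcal{Q}|_a] \approx |T| \cdot \frac{x_\varphi}{h}$ and $\mathbb{E}[C^\varphi(\mathcal{Q})] \approx n \, \psi^\varphi_{|T|,h}$, matching the target in~\eqref{eq:def-phi}.

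The main step is concentration. Fix $T \subseteq [R]$ and $j : T \to [h]$. The random variable $C^\varphi(\mathcal{Q}) = \sum_{a \in [n]} \varphi(|\mathcal{Q}|_a)$ is a sum over the $n$ elements $a$, and---this is the crucial independence---because the collections $\mathcal{P}_1,\dots,\mathcal{P}_R$ are sampled independently across $i$, the vector $(|\mathcal{Q}|_a)_{a \in [n]}$ is a function of independent pieces, one per $i \in T$, each of which only shuffles labels within that level. So $C^\varphi(\mathcal{Q})$ is a function of $|T| \le R$ independent random objects, and changing one of them (the partition at a single level $i$) changes $|\mathcal{Q}|_a$ by at most $1$ for each $a$, hence changes $C^\varphi(\mathcal{Q})$ by at most $n \cdot (\varphi(R) - \varphi(R-1)) \le n \varphi(R)/R \le \varphi(R)$ per coordinate... more carefully, by $\max_a |\varphi(|\mathcal{Q}|_a \pm 1) - \varphi(|\mathcal{Q}|_a)| \le \varphi(1) = 1$ times the number of affected $a$'s, but all $n$ elements can be affected, giving a bounded-difference constant of at most $n$ per level is too weak. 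Instead I would bound the per-level influence more tightly: flipping level $i$ moves each $a$ between two values of $|\mathcal{Q}|_a$ that differ by $1$, and by concavity $|\varphi(t+1)-\varphi(t)| \le \varphi(|\mathcal{Q}|_a)/|\mathcal{Q}|_a$ is small when $|\mathcal{Q}|_a$ is large; summing the increments telescopes to at most $\varphi(|T|) \le \varphi(R)$ in total aggregate change across all $a$ — so McDiarmid's inequality with $|T|$ bounded differences each of size $\le \varphi(R)$ gives $\Pr[|C^\varphi(\mathcal{Q}) - \mathbb{E}[C^\varphi(\mathcal{Q})]| > \eta n] \le 2\exp(-2\eta^2 n^2 / (|T| \varphi(R)^2)) \le 2\exp(-2\eta^2 n / (R \varphi(R)^2)))$. (One also has to control $|\mathbb{E}[C^\varphi(\mathcal{Q})] - \psi^\varphi_{|T|,h} n|$: the conditioning on exact balance makes the per-level indicators slightly negatively correlated / non-i.i.d., but a standard argument---e.g. a martingale or a direct comparison---shows this discrepancy is $o(n)$, and can be folded into $\eta$.)

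Finally I would take a union bound over all pairs $(T, j)$ with $T \subseteq [R]$, $j : T \to [h]$: there are at most $\sum_{k=0}^{R} \binom{R}{k} h^k = (h+1)^R$ such pairs. So the probability that some pair violates its bound is at most $2(h+1)^R \exp(-2\eta^2 n / (R\varphi(R)^2))$, and this is $< 1$ precisely when $n \ge \eta^{-2} R \varphi(R)^2 \log(20(h+1))$ (the constant $20$ and the factor comfortably absorbing the $2$, the $R$-th power becoming the $\log(h+1)$ factor, and the slack needed to also kill the $o(n)$ expectation-discrepancy term). Hence a valid partitioning system exists, and it can be found by brute-force search over all choices of the $R \cdot h$ subsets of $[n]$ of size $\frac{x_\varphi n}{h}$, of which there are at most $\binom{n}{\le n}^{Rh} \le 2^{Rn}$, each checkable in $\mathrm{poly}(n,h)$ time (verifying the $x_\varphi$-cover property and all $(h+1)^R$ inequalities), giving total running time $\exp(Rn \log n) \cdot \mathrm{poly}(h)$. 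The hard part is getting the bounded-difference / influence bound tight enough (using concavity, not just $\varphi(R)$ crudely) so that the union bound closes with the stated, fairly tight, requirement on $n$ — a naive Lipschitz bound would cost extra polynomial factors in $R$ or $\varphi(R)$.
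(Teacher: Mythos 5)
Your construction (independent equi-sized random $x_\varphi$-covers across levels $i$), the union bound over the $(h+1)^R$ pairs $(T,j)$, and the brute-force search matching the stated running time all agree with the paper. But the heart of the proof is the concentration inequality, and there your argument has a genuine gap. You apply McDiarmid with respect to the $|T|$ independent levels and claim that resampling a single level changes $C^\varphi(\mathcal{Q})$ by at most $\varphi(|T|)\le\varphi(R)$ ``by telescoping.'' This is false: replacing $P_{i,j(i)}$ by another set of size $\frac{x_\varphi n}{h}$ changes $\abs{\mathcal{Q}}_a$ by $1$ for every $a$ in the symmetric difference, which can contain $2\frac{x_\varphi n}{h}=\Theta(n)$ elements, and each such $a$ can contribute a change of up to $\varphi(1)-\varphi(0)=1$ (e.g.\ for $\varphi(j)=\min\{j,1\}$ with many counts at $0$). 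The increments at distinct elements $a$ do not telescope against one another. So the true per-level bounded-difference constant is $\Theta(x_\varphi n/h)$, and McDiarmid then yields a failure probability of order $\exp(-\eta^2 h^2/(R x_\varphi^2))$ --- a constant independent of $n$ --- which cannot close the union bound. The paper avoids this by decomposing over \emph{elements} rather than levels: $C^\varphi(\mathcal{Q})=\sum_{a\in[n]}C_a^\varphi(\mathcal{Q})$ with each summand in $[0,\varphi(R)]$, and the dependence among the summands (induced by the exact-balance conditioning within each level) is handled by showing the variables $\{C_a^\varphi(\mathcal{Q})\}_a$ are \emph{negatively associated} (nondecreasing functions of negatively associated indicators, via permutation distributions within a level and independence across levels), which licenses the Chernoff--Hoeffding bound $2\exp(-2(\eta/\varphi(R))^2 n)$. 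You would need to import this (or an equivalent device) to make the argument go through.

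A secondary, non-fatal confusion: you worry that the expectation $\mathbb{E}[C^\varphi(\mathcal{Q})]$ only approximates $\psi^\varphi_{\abs{T},h}n$ because ``the conditioning on exact balance makes the per-level indicators slightly negatively correlated.'' For a \emph{fixed} element $a$, the indicators $\mathbbm{1}_{a\in P_{i,j(i)}}$ across distinct levels $i\in T$ are exactly independent $\Ber(x_\varphi/h)$ (the balance conditioning correlates different elements within one level, not different levels for one element), so $\abs{\mathcal{Q}}_a\sim\Bin(\abs{T},x_\varphi/h)$ exactly and $\mathbb{E}[C^\varphi(\mathcal{Q})]=\psi^\varphi_{\abs{T},h}n$ with no error term to absorb.
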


The proof can be found in Appendix~\ref{app:Partitioning}.

\subsection{The Reduction}
  \begin{proof}[Proof of Theorem \ref{theo:Hardness}]
    Let $\varepsilon > 0$. Without loss of generality, we can assume that $\varepsilon < 1$. We show that it is NP-hard to reach an approximation greater than $\alpha_{\varphi} + \varepsilon$ for the $\varphi$-\textsc{MaxCoverage} problem, via a reduction from $\delta,h$-\textsc{AryGapLabelCover}.
  \begin{itemize}
    \item $\eta = \frac{\varphi(x_{\varphi})}{4x_{\varphi}} \varepsilon$, so $0 < \eta \leq \varepsilon < 1$,
    \item $h \geq x_{\varphi}$ such that $\abs{\psi^{\varphi}_{h,h} - \alpha_{\varphi}\varphi(x_{\varphi})} \leq \eta$ (see~\eqref{eq:def-phi} for the definition of $\psi^{\varphi}$); such a choice exists thanks to Proposition \ref{prop:UnboundBinPoi},
    \item $\theta$ such that for all $x \geq \theta$, $\frac{\varphi(x)}{x} \leq \eta$, which exists since $\varphi(x) = o(x)$,
    \item $\xi = \frac{x_{\varphi}}{\theta}$,
    \item $\delta = \frac{\eta}{2} \frac{\xi^3}{h^2}$,
    \item $R \geq h$ large enough for Proposition~\ref{prop:hardness-ary-lc} to hold.
\end{itemize}
    
Then, given an instance  $\mathcal{G} = (V,E,[L],[R], \Sigma, \set{\pi_{e,v}}_{e \in E, v \in e})$ of $\delta,h$-\textsc{AryGapLabelCover}, we construct an instance $(\Gamma, \mathcal{F}, k)$ of the $\varphi$-\textsc{MaxCoverage} problem with:

    \begin{itemize}
    \item $n$ a large enough integer to have the existence of $([n],h,R,\varphi,\eta)$-partitioning systems using Proposition \ref{prop:Partitioning}. Note that the size of these partitioning systems is independent of the size of the instance $\mathcal{G}$, and that one can find one of those in constant time, with relation to the size of the instance $\mathcal{G}$, thanks to Proposition \ref{prop:Partitioning}.
    \item $\Gamma = [n] \times E$,
    \item $k =\abs{V}$,
    \item Consider a  $([n],h,R,\varphi,\eta)$-partitioning system, and call $\mathcal{P} =\set{\mathcal{P}_1,\ldots,\mathcal{P}_R}$ the corresponding set of collections. Define sets $T_{\beta}^{e,v_j} = P_{\pi_{e,v_j}(\beta),j} \times \set{e}$ for $e = (v_1,\ldots,v_h) \in E, j \in [h], \beta \in [L]$. Then, choose as cover sets $F^v_{\beta} := \bigsqcup_{e \in E:v \in e} T^{e,v}_{\beta}$ and take $\mathcal{F} := \set{F^v_{\beta}, v \in V, \beta \in [L]}$.
  \end{itemize}

We will now prove that if we are in a YES instance, we have that there exists $\mathcal{T}$ of size $k$ such that $C^{\varphi}(\mathcal{T}) \geq \varphi(x_{\varphi})\abs{\Gamma}$ (completeness). Moreover, if we are in a NO instance, then we have that for all $\mathcal{T}$ of size $k = \abs{V}$, $C^{\varphi}(\mathcal{T}) \leq (\alpha_{\varphi} + \varepsilon)\varphi(x_{\varphi})\abs{\Gamma}$ (soundness). Establishing these two properties would conclude the proof. 
In fact, an algorithm for $\varphi$-\textsc{MaxCoverage} achieving a factor strictly greater than $\alpha_{\varphi} + \varepsilon$ would allow us to decide whether we have YES or a NO instance of the {\rm NP}-hard problem $\delta,h$-\textsc{AryGapLabelCover}.

In order to achieve this, let us define $C^{\varphi,e} := \sum_{a \in [n] \times \set{e}} C^{\varphi}_a$. In particular, $C^{\varphi} = \sum_{a \in  \Gamma} C^{\varphi}_a = \sum_{e \in E}C^{\varphi,e}$. For $\mathcal{T} \subseteq \mathcal{F}$, we define the relevant part of $\mathcal{T}$ on $e$ by:
\[\mathcal{T}_e := \set{T_{\beta}^{e,v} : v \in e, \beta \in [L], F^v_{\beta} \in \mathcal{T}} = \set{F^v_{\beta} \cap ([n] \times \set{e}), F^v_{\beta} \in \mathcal{T}}\ .\]
Note that $C^{\varphi,e}(\mathcal{T}) = C^{\varphi,e}(\mathcal{T}_e)$, and in particular $C^{\varphi}(\mathcal{T}) = \sum_{e \in E} C^{\varphi,e}(\mathcal{T}_e)$. 

\subsubsection{Completeness}
Suppose the given $h$-\textsc{AryLabelCover} instance $\mathcal{G}$ is a YES instance. Then, there exists a labeling $\sigma : V \mapsto [L]$ which strongly satisfies all edges. Consider the collection of $\abs{V}$ subsets $\mathcal{T} := \set{F_{\sigma(v)}^v : v \in V}$. Fix $e = (v_1,\ldots,v_h) \in E$. Since $e$ is strongly satisfied by $\sigma$, there exists $r \in [R]$ such that $\pi_{e,v_i}(\sigma(v_i)) = r$ for all $i \in [h]$. Thus, $\mathcal{T}_e = \set{T_{\sigma(v_i)}^{e,v_i}}_{i \in [h]} = \set{P_{r,i} \times \set{e}}_{i \in [h]}$ is an $x_{\varphi}$-cover of $[n] \times \set{e}$, and so $C^{\varphi,e}(\mathcal{T}_e) = n\varphi(x_{\varphi})$. Thus $C^{\varphi}(\mathcal{T}) = \sum_{e \in E}C^{\varphi,e}(\mathcal{T}_e) = \abs{E}\varphi(x_{\varphi})n = \varphi(x_{\varphi})\abs{\Gamma}$.

\subsubsection{Soundness}
Suppose the given $h$-\textsc{AryLabelCover} instance $\mathcal{G}$ is a NO instance. Let us prove the contrapositive of the soundness: we suppose that there exists $\mathcal{T}$ of size $k = \abs{V}$ such that  $C^{\varphi}(\mathcal{T}) > (\alpha_{\varphi} + \varepsilon)\varphi(x_{\varphi})\abs{\Gamma}$. Let us show that there exists a labeling $\sigma$ that weakly satisfies a strictly larger fraction of the edges than $\delta$.

For every vertex $v \in V$, we define $L(v) := \set{\beta \in [L] : F_{\beta}^v \in \mathcal{T}}$ to be the candidate set of labels that can be associated with the vertex $v$. We extend this definition to hyperedges $e = (v_1,\ldots,v_h)$ where we define $L(e) := \bigcup_{i \in [h]} L(v_i)$ to be the \emph{multiset} of all labels associated with the edge. Note that $\abs{\mathcal{T}_e}=\abs{L(e)}$.

We say that $e = (v_1,\ldots,v_h) \in E$ is \emph{consistent} if and only if $\exists x \not= y \in [h], \pi_{e,v_x}(L(v_x)) \cap \pi_{e,v_y}(L(v_y)) \not= \emptyset$. We then decompose $E$ in three parts:
\begin{itemize}
\item $B$ is the set of edges $e \in E$ with $\abs{L(e)} \geq \frac{h}{\xi}$.
\item $N$ is the set of consistent edges $e \in E$ with $\abs{L(e)} < \frac{h}{\xi}$.
\item $I = E - (B \cup N)$ is the set of inconsistent edges $e \in E$ with $\abs{L(e)} < \frac{h}{\xi}$.
\end{itemize}

We want to show that the contribution of $N$ is not too small, which we will use to construct a labeling weakly satisfying enough edges. This comes from the following lemmas:

\begin{lem}
  $\sum_{e \in E} \abs{L(e)} = \abs{E} h$
  \label{lem:labelBound}
\end{lem}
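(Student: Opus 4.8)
The plan is a straightforward double-counting argument. First I would unfold the definition: since $L(e) = \bigcup_{i \in [h]} L(v_i)$ is a \emph{multiset} (so $\abs{L(e)} = \sum_{v \in e} \abs{L(v)}$ for $e = (v_1,\ldots,v_h)$), summing over all hyperedges gives
\[
\sum_{e \in E} \abs{L(e)} = \sum_{e \in E} \sum_{v \in e} \abs{L(v)} = \sum_{v \in V} \abs{L(v)} \cdot \abs{\set{e \in E : v \in e}}.
\]
Next I would invoke the hypothesis that $(V,E)$ is an $h$-uniform \emph{regular} hypergraph, so every vertex has the same degree $d := \abs{\set{e \in E : v \in e}}$; counting incident pairs $(v,e)$ in two ways yields $\abs{V}\, d = h \abs{E}$.

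Then I would evaluate $\sum_{v \in V} \abs{L(v)}$. Since $\mathcal{F} = \set{F^v_\beta : v \in V, \beta \in [L]}$ is indexed by pairs $(v,\beta)$ and $\mathcal{T} \subseteq \mathcal{F}$ has size $k = \abs{V}$, and since $L(v) = \set{\beta \in [L] : F^v_\beta \in \mathcal{T}}$, the sets $\{L(v)\}_{v \in V}$ partition (the index set of) $\mathcal{T}$; hence $\sum_{v \in V} \abs{L(v)} = \abs{\mathcal{T}} = k = \abs{V}$. Combining the two displays, $\sum_{e \in E} \abs{L(e)} = d \sum_{v \in V}\abs{L(v)} = d\,\abs{V} = h\abs{E}$, as claimed.

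There is no substantive obstacle here; the only point requiring a little care is the bookkeeping that $L(e)$ is a multiset (so sizes add rather than taking a set union) and that $\abs{\mathcal{T}} = \sum_v \abs{L(v)}$ exactly because the cover sets are indexed by vertex-label pairs, which is why the identity is an equality and not merely an inequality.
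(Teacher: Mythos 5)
Your proof is correct and is essentially the same as the paper's: both swap the order of summation over incident pairs $(v,e)$, use regularity to get $d\abs{V} = h\abs{E}$, and use $\sum_{v \in V}\abs{L(v)} = \abs{\mathcal{T}} = \abs{V}$. Your added remarks on the multiset convention and the vertex--label indexing of $\mathcal{F}$ just make explicit what the paper leaves implicit.
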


\begin{proof}  
  Recall that our $h$-uniform hypergraph is regular; call $d$ its regular degree. In particular, we have that $d\abs{V} = \abs{E}h$. Note also that $\sum_{v \in V} \abs{L(v)} = \abs{\mathcal{T}} = \abs{V}$. Thus:
  \begin{equation}
    \begin{aligned}
      \sum_{e \in E} \abs{L(e)} = \sum_{e \in E} \sum_{v \in V : v \in e} \abs{L(v)} = \sum_{v \in V} \sum_{e \in E: v \in e} \abs{L(v)} = d \sum_{v \in V} \abs{L(v)} = d \abs{V} = \abs{E}h\ .
    \end{aligned}
  \end{equation}
\end{proof}

Next, we bound the contribution of $B$:
\begin{lem}
  $\sum_{e \in B} C^{\varphi,e}(\mathcal{T}_e) \leq \frac{\varepsilon}{4}\varphi(x_{\varphi})\abs{\Gamma}$.
  \label{lem:contribB}
\end{lem}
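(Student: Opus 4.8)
I want to bound the total $\varphi$-coverage coming from edges in $B$, i.e.\ those hyperedges $e$ for which the associated label multiset $L(e)$ is large ($\abs{L(e)} \geq h/\xi$). The intuition is that such edges are "wasteful": they use a disproportionate share of the budget $k = \abs{V}$, so there cannot be many of them, and moreover on each such edge the $\varphi$-coverage is capped (by the Remark after Definition~\ref{defi:PartSystem}) by $n\varphi\!\left(\abs{\mathcal{T}_e}\tfrac{x_\varphi}{h}\right) = n\varphi\!\left(\abs{L(e)}\tfrac{x_\varphi}{h}\right)$ since each set in $\mathcal{T}_e$ has size $\tfrac{x_\varphi n}{h}$ and $\abs{\mathcal{T}_e} = \abs{L(e)}$.

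The key steps, in order: First, from Lemma~\ref{lem:labelBound} we have $\sum_{e \in E}\abs{L(e)} = \abs{E}h$, so by the definition of $B$, $\abs{B}\cdot \tfrac{h}{\xi} \leq \sum_{e\in B}\abs{L(e)} \leq \abs{E}h$, giving the density bound $\tfrac{\abs{B}}{\abs{E}} \leq \xi$. Second, I use the Remark to get $C^{\varphi,e}(\mathcal{T}_e) \leq n\varphi\!\left(\abs{L(e)}\tfrac{x_\varphi}{h}\right)$ for every edge. Third, I need to control $\sum_{e\in B} n\varphi\!\left(\abs{L(e)}\tfrac{x_\varphi}{h}\right)$. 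Here I invoke sublinearity: since $\abs{L(e)} \geq h/\xi$ for $e \in B$, the argument $\abs{L(e)}\tfrac{x_\varphi}{h} \geq \tfrac{x_\varphi}{\xi} = \theta$, so by the choice of $\theta$ we have $\tfrac{\varphi(y)}{y} \leq \eta$ for $y = \abs{L(e)}\tfrac{x_\varphi}{h}$, i.e.\ $\varphi\!\left(\abs{L(e)}\tfrac{x_\varphi}{h}\right) \leq \eta\,\abs{L(e)}\tfrac{x_\varphi}{h}$. Summing, $\sum_{e\in B} C^{\varphi,e}(\mathcal{T}_e) \leq \eta\,\tfrac{x_\varphi}{h}\,n\sum_{e\in B}\abs{L(e)} \leq \eta\,\tfrac{x_\varphi}{h}\,n\,\abs{E}h = \eta\,x_\varphi\, n\,\abs{E} = \eta\, x_\varphi\, \abs{\Gamma}$. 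Finally I convert this into the stated form: $\eta = \tfrac{\varphi(x_\varphi)}{4 x_\varphi}\varepsilon$ was chosen precisely so that $\eta\, x_\varphi = \tfrac{\varepsilon}{4}\varphi(x_\varphi)$, yielding $\sum_{e\in B} C^{\varphi,e}(\mathcal{T}_e) \leq \tfrac{\varepsilon}{4}\varphi(x_\varphi)\abs{\Gamma}$, as desired. (Note $\abs{\Gamma} = n\abs{E}$.)

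One subtlety to check along the way: the Remark's bound requires every set in $\mathcal{Q} = \mathcal{T}_e$ to have size exactly $\tfrac{x_\varphi n}{h}$, which holds here since each $T^{e,v}_\beta = P_{\pi_{e,v}(\beta),j}\times\{e\}$ has size $\tfrac{x_\varphi n}{h}$ by property~1 of the partitioning system — but I should make sure $\abs{\mathcal{T}_e}$ equals $\abs{L(e)}$ as a multiset count, i.e.\ distinct $(v,\beta)$ with $F^v_\beta \in \mathcal{T}$ give the relevant pieces, which is exactly how $\mathcal{T}_e$ and $L(e)$ were defined. The one place where I actually used concavity is the Remark itself (via Jensen), and the one place I used sublinearity is the third step. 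I don't expect a real obstacle here; the main thing to get right is the bookkeeping of constants, making sure the $\eta\, x_\varphi = \tfrac{\varepsilon}{4}\varphi(x_\varphi)$ substitution lands exactly and that the $\theta$-threshold is applied to the correct quantity $\abs{L(e)}\tfrac{x_\varphi}{h}$ rather than $\abs{L(e)}$ itself. This lemma is the easy one of the soundness argument — the harder work will be bounding the contribution of the inconsistent edges $I$ and extracting a labeling from $N$, which come later.
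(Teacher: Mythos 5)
Your proof is correct and follows essentially the same route as the paper's: cap each $C^{\varphi,e}(\mathcal{T}_e)$ by $n\varphi\big(\abs{L(e)}\tfrac{x_\varphi}{h}\big)$ via the Remark after Definition~\ref{defi:PartSystem}, combine $\sum_{e\in B}\abs{L(e)}\leq\abs{E}h$ (Lemma~\ref{lem:labelBound}) with the definition of $B$, and close with the $\theta$-threshold and the choice of $\eta$. The only (harmless) difference is that you apply the sublinearity bound $\varphi(y)\leq\eta y$ termwise to each $y_e=\abs{L(e)}\tfrac{x_\varphi}{h}\geq\theta$, whereas the paper first aggregates with Jensen's inequality and applies the threshold to the single average argument $\tfrac{\abs{E}x_\varphi}{\abs{B}}\geq\theta$; both land exactly on $\eta\,x_\varphi\abs{\Gamma}=\tfrac{\varepsilon}{4}\varphi(x_\varphi)\abs{\Gamma}$, and your version skips one use of Jensen.
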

\begin{proof}
  We have:
  \begin{equation}
    \begin{aligned}
      \sum_{e \in B} C^{\varphi,e}(\mathcal{T}_e) &\leq&& \sum_{e \in B} n\varphi\Big(\abs{L(e)}\frac{x_{\varphi}}{h}\Big) \quad \text{by the remark on Definition \ref{defi:PartSystem} and } \abs{\mathcal{T}_e} = \abs{L(e)}\\
      &\leq&& \abs{B} \cdot n\varphi\Big(\frac{\sum_{e \in B} \abs{L(e)}}{\abs{B}}\frac{x_{\varphi}}{h}\Big) \quad \text{by Jensen's inequality on concave } \varphi\\
      &\leq&&  \abs{B} \cdot n\varphi\Big(\frac{\abs{E}h}{\abs{B}}\frac{x_{\varphi}}{h}\Big) \quad \text{since } \varphi \text{ nondecreasing and } \sum_{e \in B} \abs{L(e)} \leq \abs{E}h \text{ by Lemma \ref{lem:labelBound}}\\
      &=&& \frac{\varphi\big(\frac{\abs{E}x_{\varphi}}{\abs{B}}\big)}{\frac{\abs{E}x_{\varphi}}{\abs{B}}} x_{\varphi} \abs{\Gamma} \ .
    \end{aligned}
  \end{equation}

We have seen that $\sum_{e \in B} \abs{L(e)} \leq \abs{E} h$, but $\sum_{e \in B} \abs{L(e)} \geq \abs{B} \frac{h}{\xi}$ by definition of $B$, so we have that $\frac{\abs{B}}{\abs{E}} \leq \xi$. Thus $\frac{\abs{E}x_{\varphi}}{\abs{B}} \geq \frac{x_{\varphi}}{\xi} = \theta$. By definition of $\theta$, we get that $\sum_{e \in B} C^{\varphi,e}(\mathcal{T}_e) \leq \eta x_{\varphi} \abs{\Gamma} = \frac{\varepsilon}{4} \varphi(x_{\varphi})\abs{\Gamma}$.
\end{proof}

In order to bound the contribution of $I$, we will prove a property on inconsistent edges:

\begin{prop}
 Let $e = (v_1,\ldots,v_h) \in E$ be an inconsistent hyperedge with respect to $\mathcal{T}$. Then we have that $\abs{C^{\varphi,e}(\mathcal{T}_e) - \psi^{\varphi}_{\abs{L(e)},h}n } \leq \eta n$.
  \label{prop:inconsistent}
\end{prop}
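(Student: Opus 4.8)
To prove Proposition~\ref{prop:inconsistent}, the plan is to deduce the bound directly from Property~2 of the partitioning system (Definition~\ref{defi:PartSystem}), after observing that inconsistency of $e$ forces $\mathcal{T}_e$ to select at most one subset from each collection $\mathcal{P}_i$. First I would unwind the definitions: writing $e = (v_1,\dots,v_h)$ and recalling that $T_\beta^{e,v_j} = P_{\pi_{e,v_j}(\beta),j}\times\set{e}$, we have $\mathcal{T}_e = \set{P_{\pi_{e,v_j}(\beta),j}\times\set{e} : j\in[h],\ \beta\in L(v_j)}$. The factor $\set{e}$ merely identifies $[n]\times\set{e}$ with $[n]$ and puts $\mathcal{T}_e$ in bijection with the subfamily $\mathcal{Q} := \set{P_{\pi_{e,v_j}(\beta),j} : j\in[h],\ \beta\in L(v_j)} \subseteq 2^{[n]}$, so that $C^{\varphi,e}(\mathcal{T}_e) = C^{\varphi}(\mathcal{Q})$. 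Hence it suffices to show that $\mathcal{Q} = \set{P_{i,j(i)} : i \in T}$ for some $T \subseteq [R]$ and some function $j : T \to [h]$ with $\abs{T} = \abs{L(e)}$, for then Property~2 of Definition~\ref{defi:PartSystem} gives $\abs{C^{\varphi}(\mathcal{Q}) - \psi^{\varphi}_{\abs{T},h} n} \leq \eta n$, which is exactly the claim.

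For this structural statement, I would use that $e$ is inconsistent, i.e.\ $\pi_{e,v_x}(L(v_x)) \cap \pi_{e,v_y}(L(v_y)) = \emptyset$ for all $x\neq y$ in $[h]$. Setting $T := \bigsqcup_{j\in[h]} \pi_{e,v_j}(L(v_j)) \subseteq [R]$, which is a genuine disjoint union, every $i \in T$ belongs to $\pi_{e,v_j}(L(v_j))$ for exactly one index $j =: j(i)$, so among $P_{i,1},\dots,P_{i,h}$ only $P_{i,j(i)}$ can appear in $\mathcal{Q}$; this yields $\mathcal{Q} = \set{P_{i,j(i)} : i \in T}$ as desired. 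Finally $\abs{T} = \sum_{j\in[h]} \abs{\pi_{e,v_j}(L(v_j))} = \abs{\mathcal{Q}} = \abs{\mathcal{T}_e} = \abs{L(e)}$, the last equality being the identity recorded right after the definition of $L(e)$.

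The only genuinely delicate point is this last chain of equalities, which hinges on the map $(j,\beta) \mapsto T_\beta^{e,v_j}$ being injective on $\set{(j,\beta) : \beta \in L(v_j)}$ — equivalently, that distinct labels of each $L(v_j)$ yield distinct subsets and that subsets arising from different coordinates are distinct. Across coordinates this follows from inconsistency together with the subsets $\set{P_{i,j}}_{i\in[R],j\in[h]}$ of the partitioning system being pairwise distinct; within a single coordinate $j$ it amounts to $\pi_{e,v_j}$ being injective on $L(v_j)$, which is precisely what is encoded by the earlier observation $\abs{\mathcal{T}_e} = \abs{L(e)}$. Everything else is a direct substitution into Property~2 of Definition~\ref{defi:PartSystem}.
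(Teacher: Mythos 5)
Your proposal is correct and follows essentially the same route as the paper: use inconsistency to show that $\mathcal{T}_e$ meets each collection $\mathcal{P}_i\times\set{e}$ in at most one subset, write $\mathcal{T}_e = \set{P_{i,j(i)}\times\set{e} : i\in T}$ with $\abs{T}=\abs{\mathcal{T}_e}=\abs{L(e)}$, and invoke the second condition of Definition~\ref{defi:PartSystem}. Your extra remark on the injectivity of $(j,\beta)\mapsto T_\beta^{e,v_j}$ just makes explicit the identity $\abs{\mathcal{T}_e}=\abs{L(e)}$ that the paper records (and likewise uses) right after defining $L(e)$.
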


\begin{proof}
  Since $e$ is inconsistent, $\forall x \not= y \in [h], \pi_{e,v_x}(L(v_x)) \cap \pi_{e,v_y}(L(v_y)) = \emptyset$. Therefore, for every $i \in [R]$, there is at most one $v \in e$ such that $i \in \pi_{e,v}(L(v))$, i.e., $\mathcal{T}_e$ intersects with $\mathcal{P}_i \times \set{e}$ in at most one subset. This gives us a subset $T \subseteq [R]$ and a function $j : T \rightarrow [h]$ such that $\mathcal{T}_e = \set{P_{i,j(i)} \times \set{e} : i \in T}$. As a consequence, $\abs{T} = \abs{\mathcal{T}_e} = \abs{L(e)}$ and by the second condition of the partitioning system, we get the expected result.
\end{proof}

Now, we can bound the contribution of $I$:

\begin{lem}
  $\sum_{e \in I} C^{\varphi,e}(\mathcal{T}_e) \leq (\alpha_{\varphi} + \frac{\varepsilon}{2})\varphi(x_{\varphi})\abs{\Gamma}$.
  \label{lem:contribI}
\end{lem}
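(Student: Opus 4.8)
The plan is to bound $\sum_{e \in I} C^{\varphi,e}(\mathcal{T}_e)$ by combining Proposition~\ref{prop:inconsistent} with a concavity/Jensen argument analogous to the one used for $B$. First I would apply Proposition~\ref{prop:inconsistent} to each $e \in I$ to get $C^{\varphi,e}(\mathcal{T}_e) \leq \psi^{\varphi}_{\abs{L(e)},h} n + \eta n$, so that $\sum_{e \in I} C^{\varphi,e}(\mathcal{T}_e) \leq n \sum_{e \in I} \psi^{\varphi}_{\abs{L(e)},h} + \eta n \abs{I}$. The term $\eta n \abs{I} \leq \eta n \abs{E} = \eta \abs{\Gamma} = \frac{\varphi(x_{\varphi})}{4x_{\varphi}}\varepsilon \abs{\Gamma} \leq \frac{\varepsilon}{4}\varphi(x_{\varphi})\abs{\Gamma}$ (using $x_{\varphi} \geq 1$ so $\frac{1}{x_{\varphi}} \leq 1$), which is already within the slack we can afford. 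So the heart of the matter is to show $\frac{1}{\abs{E}}\sum_{e \in I} \psi^{\varphi}_{\abs{L(e)},h} \leq \alpha_{\varphi}\varphi(x_{\varphi}) + O(\varepsilon)$, i.e. roughly that the average of $\psi^{\varphi}_{\abs{L(e)},h}$ over $e \in I$ is at most $\psi^{\varphi}_{h,h} + O(\eta)$, given that $\abs{L(e)} < h/\xi$ for $e \in I$ and that $\sum_{e \in E}\abs{L(e)} = \abs{E}h$ by Lemma~\ref{lem:labelBound}.

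The key structural fact I would exploit is that $k \mapsto \psi^{\varphi}_{k,h} = \mathbb{E}[\varphi(\Bin(k,x_{\varphi}/h))]$ is a \emph{concave} function of $k$ on the integers: indeed $\Bin(k+1,q)$ is $\Bin(k,q)$ plus an independent $\Ber(q)$, and convolving with a fixed $\Ber(q)$ preserves the convex order, so the increments $\psi^{\varphi}_{k+1,h} - \psi^{\varphi}_{k,h}$ are nonincreasing in $k$ by concavity of $\varphi$ (this is essentially the computation behind Proposition~\ref{prop:UnboundBinPoi}; if a clean statement of concavity of $k \mapsto \psi^{\varphi}_{k,h}$ is available earlier I would cite it, otherwise I would record it as a one-line observation). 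Given concavity, Jensen's inequality gives $\frac{1}{\abs{I}}\sum_{e \in I}\psi^{\varphi}_{\abs{L(e)},h} \leq \psi^{\varphi}_{\bar{L},h}$ where $\bar{L} = \frac{1}{\abs{I}}\sum_{e\in I}\abs{L(e)}$. Since $\sum_{e \in I}\abs{L(e)} \leq \sum_{e\in E}\abs{L(e)} = \abs{E}h$, we get $\bar{L} \leq \frac{\abs{E}}{\abs{I}}h$; but this could be larger than $h$, so $\psi^{\varphi}_{\bar L,h}$ alone does not immediately give the bound $\approx \psi^{\varphi}_{h,h}$. To handle this I would multiply through by $\abs{I}/\abs{E}$: writing $\rho = \abs{I}/\abs{E} \leq 1$, we have $\frac{1}{\abs{E}}\sum_{e \in I}\psi^{\varphi}_{\abs{L(e)},h} \leq \rho\, \psi^{\varphi}_{h/\rho,h}$, and then I would argue that $\rho\, \psi^{\varphi}_{h/\rho,h} \leq \psi^{\varphi}_{h,h} + O(\eta)\varphi(x_{\varphi})$ for $\rho \in (0,1]$, using that $\psi^{\varphi}_{k,h} = \mathbb{E}[\varphi(\Bin(k,x_{\varphi}/h))]$ and $\varphi(x) = o(x)$: for $k = h/\rho$ large the mean of $\Bin(k,x_{\varphi}/h)$ is $x_{\varphi}/\rho$ which is at least $\theta$ once $\rho$ is small, so $\psi^{\varphi}_{h/\rho,h} \leq \varphi(\mathbb{E}) \le \eta \cdot x_{\varphi}/\rho$ by concavity and the definition of $\theta$, hence $\rho \psi^{\varphi}_{h/\rho,h} \leq \eta x_{\varphi}$; and for $\rho$ bounded away from $0$ (say $\rho \geq \xi$, matching the $B$-analysis) one controls things directly by monotonicity of $\psi^{\varphi}$ in $k$ and the fact that $h/\rho$ is within a bounded factor of $h$ — more precisely I'd split on whether $\abs{I} \geq \xi\abs{E}$ or not, exactly paralleling the case split implicit in Lemma~\ref{lem:contribB}.

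The main obstacle I anticipate is precisely this last point: bounding $\rho\,\psi^{\varphi}_{h/\rho,h}$ uniformly over $\rho \in (0,1]$ by $\psi^{\varphi}_{h,h} + O(\varepsilon)\varphi(x_{\varphi})$. It is intuitively clear — as $\abs{L(e)}$ grows the per-edge coverage grows sublinearly by $\varphi(n) = o(n)$, so concentrating the total label budget $\abs{E}h$ on few edges is wasteful — but making the two regimes ($\abs{I}$ large vs. small relative to $\abs{E}$) fit together cleanly with the chosen constants ($\eta = \frac{\varphi(x_{\varphi})}{4x_{\varphi}}\varepsilon$, $\xi = x_{\varphi}/\theta$, $\theta$ from $\varphi(x)/x \leq \eta$) requires care. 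Concretely I expect: if $\abs{I} < \xi \abs{E}$ then $\frac{1}{\abs{E}}\sum_{e\in I}\psi^{\varphi}_{\abs{L(e)},h} \leq \rho\,\psi^{\varphi}_{h/\rho,h} \leq \rho \cdot \eta \cdot \frac{x_{\varphi}}{\rho} = \eta x_{\varphi} = \frac{\varepsilon}{4}\varphi(x_{\varphi})$ (valid because $h/\rho \geq h/\xi$ makes the binomial mean $x_{\varphi}/\rho \geq x_{\varphi}/\xi = \theta$); and if $\abs{I} \geq \xi\abs{E}$ then $\bar L = \frac{1}{\abs I}\sum_{e\in I}\abs{L(e)} \leq h/\xi$ wait — actually in this regime I would instead just use that each $\abs{L(e)} < h/\xi$ for $e \in I$ and... no: the cleaner route is to note $\sum_{e\in I}\abs{L(e)} \leq \abs{E}h$, apply concavity/Jensen to get $\frac{1}{\abs E}\sum_{e\in I}\psi^{\varphi}_{\abs{L(e)},h} \leq \rho \psi^{\varphi}_{h/\rho,h} \leq \psi^{\varphi}_{h,h}$ whenever $h/\rho \ge h$ and one has the scaling inequality $\rho\,\psi^{\varphi}_{t/\rho,h} \le \psi^{\varphi}_{t,h}$ for $\rho \le 1$ — which itself follows from concavity of $s \mapsto \psi^{\varphi}_{s,h}$ together with $\psi^{\varphi}_{0,h} = \varphi(0) = 0$ (a concave function through the origin satisfies $\psi(t) \geq \rho\,\psi(t/\rho)$ for $\rho \leq 1$). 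So in fact the single inequality $\rho\,\psi^{\varphi}_{h/\rho,h} \leq \psi^{\varphi}_{h,h}$ holds for \emph{all} $\rho \in (0,1]$ by concavity-through-origin alone, and then $\psi^{\varphi}_{h,h} \leq \alpha_{\varphi}\varphi(x_{\varphi}) + \eta$ by the choice of $h$; combining, $\sum_{e\in I}C^{\varphi,e}(\mathcal{T}_e) \leq n\abs{E}(\psi^{\varphi}_{h,h} + \eta) \leq (\alpha_{\varphi}\varphi(x_{\varphi}) + 2\eta)\abs{\Gamma} \leq (\alpha_{\varphi} + \frac{\varepsilon}{2})\varphi(x_{\varphi})\abs{\Gamma}$, using $2\eta = \frac{\varphi(x_{\varphi})}{2x_{\varphi}}\varepsilon \leq \frac{\varepsilon}{2}\varphi(x_{\varphi})$. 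That establishes the lemma; the only genuinely delicate verification is the concavity (or the convex-order monotonicity) of $k \mapsto \psi^{\varphi}_{k,h}$ and the through-the-origin scaling consequence, which I would state and prove as a short preliminary observation before assembling the bound.
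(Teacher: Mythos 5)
Your proof is correct and follows essentially the same route as the paper: Proposition~\ref{prop:inconsistent}, concavity of $k \mapsto \psi^{\varphi}_{k,h}$ (which is exactly Proposition~\ref{prop:BinCon}), Jensen's inequality against the label budget of Lemma~\ref{lem:labelBound}, and the same choices of $h$ and $\eta$. The only difference is cosmetic: where you restrict Jensen to $I$ and then invoke the scaling inequality $\rho\,\psi^{\varphi}_{h/\rho,h} \leq \psi^{\varphi}_{h,h}$ for concave functions vanishing at the origin, the paper simply extends the sum from $I$ to all of $E$ (the added terms are nonnegative) and applies Jensen there, where the average of $\abs{L(e)}$ is exactly $h$ --- which sidesteps the uniform-in-$\rho$ bound you spent most of your effort on.
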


\begin{proof}
Thanks to Proposition \ref{prop:inconsistent}, we have:

  \begin{equation}
    \begin{aligned}
      \sum_{e \in I} C^{\varphi,e}(\mathcal{T}_e) \leq \sum_{e \in I} (\psi^{\varphi}_{\abs{L(e)},h} +\eta)n \leq \sum_{e \in E} (\psi^{\varphi}_{\abs{L(e)},h} +\eta)n \ ,
    \end{aligned} 
  \end{equation}
  since $I \subseteq E$ and $\psi^{\varphi}_{\abs{L(e)},h} \geq 0$. But $\sum_{e \in E} \abs{L(e)} = \abs{E}h$ by Lemma \ref{lem:labelBound} and $x \mapsto \psi^{\varphi}_{x,h}$ is concave thanks to Proposition \ref{prop:BinCon}, so we can use Jensen's inequality to get $\sum_{e \in E} \psi^{\varphi}_{\abs{L(e)},h} \leq \abs{E} \psi^{\varphi}_{\frac{\sum_{e \in E} \abs{L(e)}}{\abs{E}},h} = \abs{E}\psi^{\varphi}_{h,h}$ and thus:
  \begin{equation}
    \begin{aligned}
      \sum_{e \in I} C^{\varphi,e}(\mathcal{T}_e) \leq (\psi^{\varphi}_{h,h} +\eta)n\abs{E} \leq (\alpha_{\varphi}\varphi(x_{\varphi}) + 2\eta)\abs{\Gamma} \ ,
    \end{aligned} 
  \end{equation}
  by definition of $h$. This implies that the total contribution of inconsistent edges $I$ is at most $\sum_{e \in I} C^{\varphi,e}(\mathcal{T}_e) \leq (\alpha_{\varphi}\varphi(x_{\varphi}) + 2\eta)\abs{\Gamma} \leq (\alpha_{\varphi}+ \frac{\varepsilon}{2})\varphi(x_{\varphi})\abs{\Gamma}$ by definition of $\eta$.
\end{proof}

\begin{lem}
  $\frac{\abs{N}}{\abs{E}} \geq \xi\eta$.
  \label{lem:nice}
\end{lem}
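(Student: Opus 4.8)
The plan is to lower-bound $C^\varphi(\mathcal{T})$ using the decomposition $C^\varphi(\mathcal{T}) = \sum_{e \in B} C^{\varphi,e}(\mathcal{T}_e) + \sum_{e \in N} C^{\varphi,e}(\mathcal{T}_e) + \sum_{e \in I} C^{\varphi,e}(\mathcal{T}_e)$ and exploit the assumption $C^\varphi(\mathcal{T}) > (\alpha_\varphi + \varepsilon)\varphi(x_\varphi)|\Gamma|$. Since Lemma \ref{lem:contribB} bounds the $B$-term by $\frac{\varepsilon}{4}\varphi(x_\varphi)|\Gamma|$ and Lemma \ref{lem:contribI} bounds the $I$-term by $(\alpha_\varphi + \frac{\varepsilon}{2})\varphi(x_\varphi)|\Gamma|$, rearranging immediately gives
\[
\sum_{e \in N} C^{\varphi,e}(\mathcal{T}_e) \;>\; \Big(\alpha_\varphi + \varepsilon - \frac{\varepsilon}{4} - \alpha_\varphi - \frac{\varepsilon}{2}\Big)\varphi(x_\varphi)|\Gamma| \;=\; \frac{\varepsilon}{4}\varphi(x_\varphi)|\Gamma|\ .
\]

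Next I would upper-bound each term $C^{\varphi,e}(\mathcal{T}_e)$ for $e \in N$. For any edge we have the crude pointwise bound from the remark following Definition \ref{defi:PartSystem}: $C^{\varphi,e}(\mathcal{T}_e) \le n\varphi\big(|L(e)|\tfrac{x_\varphi}{h}\big)$. For $e \in N$ we additionally know $|L(e)| < \tfrac{h}{\xi}$, so $|L(e)|\tfrac{x_\varphi}{h} < \tfrac{x_\varphi}{\xi} = \theta$; but a cleaner route is to use monotonicity of $t \mapsto \varphi(t)$ together with $\varphi(t) \le t$ (which follows from normalization $\varphi(1)=1$ and concavity, $\varphi(t) \le t\varphi(1) = t$ for $t \ge 1$, and trivially for $t \le 1$), giving $C^{\varphi,e}(\mathcal{T}_e) \le n \cdot |L(e)|\tfrac{x_\varphi}{h}$. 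Summing over $e \in N$ yields $\sum_{e \in N} C^{\varphi,e}(\mathcal{T}_e) \le \tfrac{n x_\varphi}{h}\sum_{e\in N}|L(e)| < \tfrac{n x_\varphi}{h} \cdot |N| \cdot \tfrac{h}{\xi} = \tfrac{n x_\varphi}{\xi}|N|$. Actually, I expect one wants the sharper consequence via $|L(e)| < h/\xi$ directly: $C^{\varphi,e}(\mathcal{T}_e) \le n\varphi(\theta)$, and by choice of $\theta$ we have $\varphi(\theta) \le \eta\theta$, hence $C^{\varphi,e}(\mathcal{T}_e) \le \eta\theta n = \eta \tfrac{x_\varphi}{\xi}n$.

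Combining the lower and upper bounds on $\sum_{e \in N} C^{\varphi,e}(\mathcal{T}_e)$: with the upper bound $|N| \cdot \eta \tfrac{x_\varphi}{\xi} n$ and the lower bound $\tfrac{\varepsilon}{4}\varphi(x_\varphi)|\Gamma| = \tfrac{\varepsilon}{4}\varphi(x_\varphi)n|E|$, we get $|N| \ge \tfrac{\varepsilon \varphi(x_\varphi)\xi}{4\eta x_\varphi}|E|$. Substituting $\eta = \tfrac{\varphi(x_\varphi)}{4x_\varphi}\varepsilon$ makes the coefficient collapse to exactly $\tfrac{\varepsilon \varphi(x_\varphi) \xi}{4 x_\varphi} \cdot \tfrac{4x_\varphi}{\varphi(x_\varphi)\varepsilon} = \xi$, so I would obtain $|N| \ge \xi |E| \ge \xi\eta|E|$ (using $\eta < 1$), which is the claimed bound $\tfrac{|N|}{|E|} \ge \xi\eta$. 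I should double-check the constants, since the paper's bound $\xi\eta$ is weaker than what this computation suggests ($\xi$), which likely means the intended upper bound on $C^{\varphi,e}(\mathcal{T}_e)$ for $e \in N$ is the coarser one $n\varphi(k x_\varphi/h)$ at $k = |L(e)|$ without invoking the $o(n)$ growth, or that a looser estimate on $\varphi(\theta)$ is used; the main obstacle is thus pinning down exactly which upper bound on the $N$-contribution is needed so the arithmetic with $\eta$, $\xi$, $\theta$, $\delta$ lines up with how Lemma \ref{lem:nice} is subsequently applied to extract a labeling weakly satisfying more than $\delta$ fraction of edges. Modulo that bookkeeping, the proof is a one-line rearrangement followed by a per-edge estimate.
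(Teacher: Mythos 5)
Your proof is correct and follows the paper's argument: the same rearrangement using Lemmas \ref{lem:contribB} and \ref{lem:contribI} yields $\sum_{e\in N}C^{\varphi,e}(\mathcal{T}_e) > \frac{\varepsilon}{4}\varphi(x_\varphi)\abs{\Gamma}$, and the paper then uses exactly your first per-edge bound, $C^{\varphi,e}(\mathcal{T}_e)\le n\varphi\big(\abs{L(e)}\tfrac{x_\varphi}{h}\big)\le n\varphi(x_\varphi/\xi)\le n x_\varphi/\xi$ (i.e.\ only $\varphi(t)\le t$, not the stronger $\varphi(\theta)\le\eta\theta$), which is what makes the constants come out to exactly $\xi\eta$. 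Your sharper variant via $\varphi(\theta)\le\eta\theta$ is also valid and gives the stronger conclusion $\abs{N}/\abs{E}\ge\xi$, which implies the lemma since $\eta<1$.
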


\begin{proof}
Since we have supposed that $\sum_{e \in E} C^{\varphi,e}(\mathcal{T}_e) = C^{\varphi}(\mathcal{T}) > (\alpha_{\varphi} + \varepsilon)\varphi(x_{\varphi})\abs{\Gamma}$, and with the help of Lemmas \ref{lem:contribB} and \ref{lem:contribI}, we have that the contribution of $N$ is:

\[\sum_{e \in N}  C^{\varphi,e}(\mathcal{T}_e) > \frac{\varepsilon}{4}\varphi(x_{\varphi})\abs{\Gamma}\ .\]

However, we have that for $e \in N$ that $C^{\varphi,e}(\mathcal{T}_e) \leq  n\varphi\Big(\abs{\mathcal{T}_e}\frac{x_{\varphi}}{h}\Big) = n\varphi\Big(\abs{L(e)}\frac{x_{\varphi}}{h}\Big) \leq n \varphi\Big(\frac{x_{\varphi}}{\xi}\Big) \leq \frac{nx_{\varphi}}{\xi}$ thanks to the remark on Definition \ref{defi:PartSystem} and the bound $\abs{L(e)} < \frac{h}{\xi}$. This implies that:
\[\frac{\abs{N}}{\abs{E}} \geq \frac{\xi}{x_{\varphi}}\frac{\varepsilon \varphi(x_{\varphi})}{4} = \xi \eta \ .\]
\end{proof}

Finally, we construct a randomized labeling $\sigma : V \mapsto [L]$ as follows: for $v \in V$, if $L(v) \not= \emptyset$, set $\sigma(v)$ uniformly from $L(v)$, otherwise set it arbitrarily. We claim that in expectation, this labeling must weakly satisfy $\delta$ fraction of the hyperedges.

To see this, fix any $e = (v_1,\ldots,v_h) \in N$. Thus $\exists x \not= y \in [h], \pi_{e,v_x}(L(v_x)) \cap \pi_{e,v_y}(L(v_y)) \not= \emptyset$. Furthermore $\abs{L(v_x)},\abs{L(v_y)} \leq \frac{h}{\xi}$. Thus, we have that $\pi_{e,v_x}(L(v_x)) = \pi_{e,v_y}(L(v_y))$ with probability at least $\frac{1}{\abs{L(v_x)}\abs{L(v_y)}} \geq \Big(\frac{\xi}{h}\Big)^2$.

Therefore:
\begin{equation}
  \begin{aligned}
    &&& \mathbb{E}_{\sigma}\mathbb{E}_{e \sim E}[\sigma \text{ weakly satisfies } e]\\
    &\geq&& \xi \eta \mathbb{E}_{\sigma}\mathbb{E}_{e \sim E}[\sigma \text{ weakly satisfies } e | e \in N] \quad \text{by Lemma \ref{lem:nice}}\\
    &>&& \frac{\eta}{2} \frac{\xi^3}{h^2} = \delta \ .
  \end{aligned}
\end{equation}

In particular there exists some labeling $\sigma$ such that $\mathbb{E}_{e \sim E}[\sigma \text{ weakly satisfies } e] > \delta$, and thus the soundness is also proved.
\end{proof}

  \subsection{Further hardness under Gap-ETH}
  \label{section:GapETH}
  The Gap Exponential Time hypothesis states that, for some constant $\delta > 0$, there is no $2^{o(n)}$-time algorithm that, given $n$-variable $3$-SAT formula, can distinguish whether the formula is fully satisfiable or that it is not even $(1-\delta)$-satisfiable. Gap-ETH is a standard assumption in proving FPT hardness of approximation (see e.g. \cite{CCKLMNT17}). Under such hypothesis, Manurangsi showed the following theorem:

  \begin{theo}[\cite{Manurangsi20}, adapted to $(\delta,h)$-\textsc{AryGapLabelCover}]
    \label{theo:GapETH}
    Assuming Gap-ETH, for every $\delta > 0$, every $h \in \mathbb{N}, h \geq 2$ and any sufficiently large $R \in \mathbb{N}$ (depending on $\delta, h$), no $f(k) \cdot N^{o(k)}$-time algorithm can solve $(\delta,h)$-\textsc{AryGapLabelCover} with right alphabet $[R]$, where $k$ denotes the number of vertices in $h$-\textsc{AryLabelCover}, $N$ is the size of the instance, and $f$ can be any function.
  \end{theo}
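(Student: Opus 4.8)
The statement is a translation of Manurangsi's running-time lower bound into the language of $(\delta,h)$-\textsc{AryGapLabelCover}, so the plan is to recall his result in its native form, transport it through the same correspondence used for the {\rm NP}-hardness version (Appendix~\ref{app:NPhardnessGap}), and check that nothing in this correspondence destroys the $f(k)\cdot N^{o(k)}$ lower bound. Manurangsi's theorem gives, under Gap-ETH, for every $\delta>0$ a constant right-alphabet size such that no $f(k)\cdot N^{o(k)}$-time algorithm can distinguish a bipartite projection game (equivalently $2$-\textsc{AryLabelCover}) on $k$ variables that is fully satisfiable from one in which no labeling satisfies more than a $\delta$ fraction of the constraints; here the parameter is the number of variables and $N$ is the instance size. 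A standard arity-lifting step produces, for each fixed $h\ge 2$, instances whose constraints have arity $h$ while keeping the alphabet a constant depending only on $\delta$ and $h$.

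Concretely, I would take a $2$-ary instance as above and apply the reduction of Appendix~\ref{app:NPhardnessGap} to obtain an $h$-\textsc{AryLabelCover} instance $\mathcal{G}'$, then verify three things. \emph{Completeness:} a labeling strongly satisfying all $2$-ary constraints induces a labeling strongly satisfying every hyperedge of $\mathcal{G}'$, so the YES case maps to the YES case of $(\delta,h)$-\textsc{AryGapLabelCover}. \emph{Soundness:} any labeling of $\mathcal{G}'$ that weakly satisfies more than a $\delta$ fraction of the hyperedges can be converted --- by selecting, in each weakly satisfied hyperedge, a pair of vertices whose projections agree and averaging over such pairs using the regularity of the hypergraph --- into a labeling of the original bipartite game satisfying more than $\delta_0$ of its constraints, where $\delta_0$ is an explicit positive function of $\delta$, $h$, and the alphabet size; choosing Manurangsi's parameter to be $\delta_0$ then forces $\mathcal{G}'$ to be a NO instance. \emph{Parameter and size:} the number of vertices of $\mathcal{G}'$ equals $k$ up to a factor $O_h(1)$, its size is $N^{O_h(1)}$, and its right alphabet is a constant depending only on $\delta$ and $h$; hence an $f(k')\cdot (N')^{o(k')}$-time algorithm for $(\delta,h)$-\textsc{AryGapLabelCover} would yield an $f'(k)\cdot N^{o(k)}$-time algorithm for Manurangsi's problem, contradicting Gap-ETH.

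The main obstacle is the soundness step: one must argue that the \emph{weak} satisfaction condition --- a single agreeing pair inside an $h$-tuple --- still cannot be met on more than a $\delta$ fraction of the hyperedges, i.e.\ that pushing a weak-satisfaction witness back to the $2$-ary game loses only a multiplicative constant depending on $h$ and the alphabet, uniformly over instances, so that a fixed $\delta_0>0$ survives. The secondary point, routine for the usual construction whose vertices are the original variables, is that the reduction blows up $k$ by at most a constant factor (in $h$) and $N$, $R$ only polynomially; any super-constant blow-up in $k$ would degrade the $N^{o(k)}$ conclusion and break the argument.
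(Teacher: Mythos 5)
Your proposal starts from the wrong form of Manurangsi's theorem, and this creates the obstacle that you yourself flag as unresolved. The result as it is actually invoked here (via \cite{DMMS20}) is \emph{not} a $2$-ary projection game with standard soundness: it is $\delta$-Gap-Label-Cover$(t,R)$, a bipartite Label Cover whose right vertices have degree $t$ and whose soundness is already phrased as ``no left-labeling \emph{weakly satisfies} more than a $\delta$ fraction of the right vertices,'' where weak satisfaction of a right vertex means that \emph{some pair} of its $t$ neighbours have agreeing projections. This is precisely the output of Manurangsi's $t$-wise agreement testing machinery, and it is parameterized by the number of left vertices. Given that, the passage to $(\delta,h)$-\textsc{AryGapLabelCover} is the purely syntactic relabeling of Appendix~\ref{app:NPhardnessGap}: set $h=t$, replace each right vertex $b$ by the hyperedge $N(b)$, and keep the same projections. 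Strong/weak satisfaction, the labelings, the gap $\delta$, the alphabet $[R]$, the parameter $k=\abs{A}$, and the instance size all transfer \emph{verbatim} -- there is no completeness/soundness analysis to do, no $\delta_0$ to compute, and no blow-up in $k$ or $N$ to control. Your ``main obstacle'' and your ``secondary point'' both evaporate.

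If one insists on your route -- starting from a $2$-ary game with standard soundness and genuinely lifting the arity -- the soundness step is not merely ``losing a multiplicative constant by averaging over agreeing pairs'': an agreeing pair at a right vertex does not by itself correspond to a satisfied projection constraint, so you must first extend the left-labeling to a right-labeling (assign each weakly satisfied right vertex the agreed value of one of its agreeing pairs), which converts a $\delta'$ fraction of weakly satisfied right vertices into at least a $2\delta'/h$ fraction of satisfied edges; the loss depends only on $h$, not on the alphabet size as you suggest. But even then you would need, as a black box, a $2$-ary game under Gap-ETH with $f(k)\cdot N^{o(k)}$ hardness, \emph{constant} alphabet, and \emph{arbitrarily small} standard soundness $2\delta/h$ -- which is essentially the content of Manurangsi's theorem itself rather than a more primitive statement one can cite independently. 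So the proposal, as written, leaves its central step open and manufactures difficulty that the paper's one-line equivalence avoids.
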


  Such a statement can be made in terms of the $(\delta,h)$-\textsc{AryGapLabelCover} problem, since it can be shown to be equivalent to $\delta$-Gap-Label-Cover$(t,R)$ (see Appendix \ref{app:NPhardnessGap} for more details).

  Furthermore, in the previous reduction, the constructed instance $(\Gamma,k,\mathcal{F})$ sizes are $\abs{\Gamma} = n\abs{E}$ (with $n$ a constant independent of the size of the instance), $k=\abs{V}$, and $\abs{\mathcal{F}} = k \cdot L$. Therefore, plugin Theorem \ref{theo:GapETH} in the previous reduction leads to the following hardness result:

  \begin{theo*}
    \label{theo:GapHardness}
    Assuming Gap-ETH and $\varphi(n) = o(n)$, we cannot achieve an ($\alpha_{\varphi} + \varepsilon$)-approximation for the $\varphi$-\textsc{MaxCoverage} problem, even in $f(k) \cdot m^{o(k)}$-time, for any function $f$, with $m$ the number of cover sets and $k$ the cardinality constraint.
  \end{theo*}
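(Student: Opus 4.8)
The plan is to reuse verbatim the reduction constructed in the proof of Theorem~\ref{theo:Hardness}, but to feed it with the stronger FPT-hardness of $(\delta,h)$-\textsc{AryGapLabelCover} from Theorem~\ref{theo:GapETH} instead of its mere NP-hardness (Proposition~\ref{prop:hardness-ary-lc}). Concretely, fix $\varepsilon>0$ and instantiate the constants $\eta, h, \theta, \xi, \delta, R$ exactly as in the proof of Theorem~\ref{theo:Hardness}; the crucial observation is that each of these depends only on $\varphi$ and $\varepsilon$, and not on the input instance $\mathcal{G}$ of $(\delta,h)$-\textsc{AryGapLabelCover}. Likewise, the universe size $n$ of the $([n],h,R,\varphi,\eta)$-partitioning system supplied by Proposition~\ref{prop:Partitioning} is a constant independent of $\mathcal{G}$, so the gadget can be precomputed in time depending only on $\varphi$ and $\varepsilon$.

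First I would record the parameters of the instance $(\Gamma,\mathcal{F},k)$ output by the reduction: $|\Gamma| = n|E|$ with $n = O_{\varphi,\varepsilon}(1)$, the cardinality constraint is $k = |V|$, and the number of cover sets is $m = |\mathcal{F}| = k \cdot L$. In particular $m$ is polynomial (indeed linear) in the size $N$ of $\mathcal{G}$, and---this is the key point for an FPT lower bound---the parameter $k$ of the coverage instance equals the parameter (number of vertices) of the label cover instance, i.e.\ the reduction is strictly parameter-preserving. Moreover the whole reduction runs in time $\mathrm{poly}(N)$ once the constant-size gadget is fixed. The completeness and soundness analyses carry over unchanged: a YES instance yields $\mathcal{T}$ of size $k$ with $C^{\varphi}(\mathcal{T}) \geq \varphi(x_{\varphi})|\Gamma|$, while on a NO instance every size-$k$ collection satisfies $C^{\varphi}(\mathcal{T}) \leq (\alpha_{\varphi}+\varepsilon)\varphi(x_{\varphi})|\Gamma|$.

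Then I would argue by contradiction. Suppose some algorithm $A$ computes an $(\alpha_{\varphi}+\varepsilon)$-approximation to $\varphi$-\textsc{MaxCoverage} in time $f(k)\cdot m^{o(k)}$ for some function $f$. Running the reduction and then $A$ on the resulting instance distinguishes YES from NO instances of $(\delta,h)$-\textsc{AryGapLabelCover}: on a YES instance, $A$ returns a solution of value at least $\alpha_\varphi \cdot \varphi(x_{\varphi})|\Gamma|$ applied to the optimum $\varphi(x_{\varphi})|\Gamma|$, hmm---more simply, its value is strictly larger than the uniform bound $(\alpha_{\varphi}+\varepsilon)\varphi(x_{\varphi})|\Gamma|$ that holds for \emph{all} solutions on NO instances, using $\varphi(x_{\varphi})>0$. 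Because $m = \mathrm{poly}(N)$ and $k=|V|$ is unchanged, the composite procedure runs in time $\mathrm{poly}(N) + f(k)\cdot m^{o(k)} = f(k)\cdot N^{o(k)}$, absorbing the polynomial overhead into the $o(k)$ exponent for large $k$. This contradicts Theorem~\ref{theo:GapETH} under Gap-ETH, and since $\varphi(n)=o(n)$ is exactly what made the choice of $\theta$ (hence $\xi,\delta$) possible in the reduction, the hypothesis $\varphi(n)=o(n)$ is used in the same place as in Theorem~\ref{theo:Hardness}.

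The only genuinely delicate point---and the one I would be most careful about---is the bookkeeping that turns an $f(k)\cdot m^{o(k)}$-time coverage algorithm into an $f'(k)\cdot N^{o(k)}$-time algorithm for $(\delta,h)$-\textsc{AryGapLabelCover}: one must check that the reduction does not inflate the parameter (it does not, since $k=|V|$), that $m$ is polynomially bounded in $N$ (it is, as $m=kL$), and that $h,R,\delta,n$ are absolute constants fixed by $\varphi$ and $\varepsilon$ alone (which is precisely how they were selected in the proof of Theorem~\ref{theo:Hardness}, and is also the regime in which Theorem~\ref{theo:GapETH} is stated). Everything else is a direct transcription of the NP-hardness argument.
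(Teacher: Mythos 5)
Your proposal is correct and follows the same route as the paper: reuse the reduction from Theorem~\ref{theo:Hardness} unchanged, note that it is parameter-preserving ($k=|V|$), that $m = kL$ is polynomial in the label cover instance size while the gadget size $n$ depends only on $\varphi$ and $\varepsilon$, and then invoke Theorem~\ref{theo:GapETH} in place of Proposition~\ref{prop:hardness-ary-lc}. The bookkeeping you flag as delicate is exactly the content of the paper's (brief) argument, so nothing is missing.
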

  
\section{Applications}
\label{section:applications}
This section shows that instantiations of $\varphi$-\textsc{MaxCoverage} encapsulate and generalize multiple problems from fields such as computational social choice \cite{BCE16} and algorithmic game theory \cite{AGT}.

\subsection{Multiwinner Elections}
\label{subsection:multiwinner-election} 
 
As mentioned previously, multiwinner elections (with a utilitarian model for the voters) entail selection of $k$ (out of $m$) candidates that maximize the utility across $n$ voters. Here, the utility of each voter $a \in [n]$ increases with the number of approved (by $a$) selections. 
The work of Dudycz et al.~\cite{DMMS20} study the computational complexity of such elections and, in particular, address classic voting rules in which---for a specified sequence of nonnegative weights $(w_1, w_2, \ldots)$---voter $a$'s utility is equal to $\sum_{i=1}^j w_i$, when she approves of $j$ candidates among the selected ones. One can view this election exercise as a coverage problem by considering subset $T_i \subseteq [n]$ as the set of voters that approve of candidate $i \in [m]$ and $\varphi(j) = \sum_{i=1}^j w_i$. Indeed, for a subset of candidates $S \subseteq [m]$, the utility of a voter $a \in [n]$ is equal to $\varphi ( \abs{S}_a)$, with $\abs{S}_a = \abs{\set{i \in S : a \in T_i}}$. 

Dudycz et al.~\cite{DMMS20} show that if the weights satisfy $w_1 \geq w_2 \geq \ldots$ (i.e., bear a  diminishing returns property) along with geometric dominance ($w_{i } \cdot w_{i+2} \geq w_{i+1}^2$ for all $i \in \mathbb{N}^*$) and $\lim_{i \rightarrow \infty} w_i = 0$, then a tight approximation guarantee can be obtained for the election problem at hand. Note that the diminishing returns property implies that $\varphi(j) = \sum_{i=1}^j w_i$ is concave and $\lim_{i \rightarrow \infty} w_i = 0$ ensures that $\varphi$ is sublinear (see Proposition \ref{prop:thieleEqLim}). Hence, Theorem \ref{theo:main}, together with Proposition \ref{prop:geoDominant}, can be invoked to recover the result in \cite{DMMS20} where we get $\alpha_{\varphi} = \alpha_{\varphi}(1)$. In fact, Theorem \ref{theo:main} does not require geometric dominance among the weights and, hence, applies to a broader class of voting rules. For instance, the geometric dominance property does not hold if one considers the voting weights induced by $\ell$-\textsc{MultiCoverage}, i.e., $w_i = 1$, for $1 \leq i \leq \ell$, and $w_j = 0$ for $j > \ell$.  However, using Theorem \ref{theo:main}, we get that for this voting rule we can approximate the optimal utility within a factor of $\alpha_\varphi = 1-\frac{\ell^{\ell}e^{-\ell}}{\ell!}$ (see Proposition~\ref{prop:lCover}). Another example of such a separation arises if one truncates the proportional approval voting. The standard proportional approval voting corresponds to $w_i = \frac{1}{i}$, for all $i \in \mathbb{N}$ (equivalently,  $\varphi(j) = \sum_{i=1}^j \frac{1}{i}$) and falls within the purview of \cite{DMMS20}. While the truncated version with $\varphi(j) = \sum_{i=1}^{\min\{j, \ell\}} \frac{1}{i}$, for a given threshold $\ell$, does not satisfy geometric dominance, Theorem \ref{theo:main} continues to hold and provide a tight approximation ratio that can be computed numerically (see Proposition~\ref{prop:phiLinear} and Table \ref{figComp} for examples).

\subsection{Resource Allocation in Multiagent Systems}
\label{subsection:welfare-maximization}
A significant body of prior work in algorithmic game theory has addressed game-theoretic aspects of maximizing welfare among multiple (strategic) agents; see, e.g.,~\cite{PM19}. Complementing such results, this section shows that the optimization problem underlying multiple welfare-maximization games can be expressed in terms of $\varphi$-\textsc{MaxCoverage}. 

Specifically, consider a setting with $n$ resources, $k$ agents, and a (counting) function $\varphi: \mathbb{N} \mapsto \mathbb{R}_+$. Every agent $i$ is endowed with a collection of resource subsets $\mathcal{A}_i = \set{T^i_1, \ldots, T^i_{m_i}}  \subseteq 2^{[n]}$ (i.e., each $T^i_j \subseteq [n]$). The objective is to select a subset $A_i \in \mathcal{A}_i$, for all $i \in [k]$, so as to maximize $W^{\varphi}(A_1, A_2, \ldots, A_k)  \coloneqq \sum_{a \in [n]} w_a \ \varphi(\abs{A}_a)$. Here, $w_a \in \mathbb{R}_+$ is a weight associated with $a \in [n]$ and $\abs{A}_a \coloneqq  \abs{\set{i \in [k] : a \in A_i}}$. We will refer to this problem as the $\varphi$-\textsc{Resource Allocation} problem.

While $\varphi$-\textsc{Resource Allocation} does not directly reduce to $\varphi$-\textsc{MaxCoverage}, the next theorem shows that it corresponds to maximizing $\varphi$-coverage functions subject to a matroid constraint. Hence, invoking our result from Section~\ref{sec:matroid}, we obtain a tight $\alpha_\varphi$-approximation for $\varphi$-\textsc{Resource Allocation} (see Appendix \ref{app:HardnessMA} for the proof):

\begin{theo*}
For any normalized nondecreasing concave function $\varphi$, there exists a polynomial-time $\alpha_\varphi$-approximation algorithm for $\varphi$-\textsc{Resource Allocation}. Furthermore, for $\varphi(n) = o(n)$, it is {\rm NP}-hard to approximate $\varphi$-\textsc{Resource Allocation} within a factor better than $\alpha_\varphi + \varepsilon$, for any constant $\varepsilon >0$.
  \label{theo:HardnessMA}
\end{theo*}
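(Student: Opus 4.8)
The plan is to reduce $\varphi$-\textsc{Resource Allocation} to $\varphi$-\textsc{MaxCoverage} under a matroid constraint, apply Theorem~\ref{theo:AlgoMat} for the algorithmic side, and then reverse the reduction (together with Theorem~\ref{theo:Hardness}) for the hardness side. The key observation is that the set $\set{A_1, \dots, A_k}$ with $A_i \in \mathcal{A}_i$ is naturally a basis of a \emph{partition matroid}. Concretely, on the ground set $\bigsqcup_{i \in [k]} \mathcal{A}_i$ (think of this as the disjoint union of labeled subsets $(i,j)$ for $j \in [m_i]$), define the partition matroid $\mathcal{M}$ whose parts are the $\mathcal{A}_i$ and where each part may contribute at most one element to an independent set; its bases are exactly the tuples choosing one $A_i$ per agent. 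The cover sets of the constructed $\varphi$-\textsc{MaxCoverage} instance are the subsets $T^i_j \subseteq [n]$, with universe $[n]$ and the same weights $w_a$. Then for any basis $S = \set{(1, j_1), \dots, (k, j_k)}$ of $\mathcal{M}$, we have $\abs{S}_a = \abs{\set{i : a \in T^i_{j_i}}} = \abs{A}_a$, so $C^{\varphi}(S) = W^{\varphi}(A_1, \dots, A_k)$; the optima coincide.

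For the algorithmic direction, first I would check that the reduction is polynomial-time: the ground set has size $\sum_i m_i \le \sum_i m_i$, which is polynomial in the input, and the partition matroid's base polytope $B(\mathcal{M}) = \set{x \in [0,1]^{\sum m_i} : \sum_{j \in [m_i]} x_{i,j} = 1 \ \forall i \in [k]}$ is given by $k$ explicit linear equalities. So the LP from Definition~\ref{defi:relaxedMatProg} can be solved in polynomial time, pipage rounding over partition matroids runs in polynomial time, and $F^{\varphi}$ is polynomially computable by Proposition~\ref{prop:Fpoly}. Invoking Theorem~\ref{theo:AlgoMat} then yields $C^{\varphi}(x^{\text{int}}) \ge \alpha_{\varphi} \max_{S \in \mathcal{M}} C^{\varphi}(S) = \alpha_{\varphi} \max W^{\varphi}$, which is the claimed $\alpha_{\varphi}$-approximation once we translate the rounded basis back to a tuple $(A_1, \dots, A_k)$.

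For the hardness direction, I would go the other way: given an instance $(\Gamma, \mathcal{F}, k)$ of $\varphi$-\textsc{MaxCoverage} with unit weights produced by the reduction in the proof of Theorem~\ref{theo:Hardness}, build a $\varphi$-\textsc{Resource Allocation} instance with $k$ agents, where each agent $i$ is given the \emph{same} collection $\mathcal{A}_i = \mathcal{F}$ of resource subsets, and $w_a = 1$ for all $a \in \Gamma$. A feasible choice $(A_1, \dots, A_k)$ selects (with multiplicity) a multiset of $\le k$ cover sets; since $\varphi$ is nondecreasing and concave, covering an element more times via repeated identical sets never helps, so the optimal allocation uses $k$ \emph{distinct} cover sets, recovering exactly a size-$k$ subset $S \subseteq \mathcal{F}$ with $W^{\varphi} = C^{\varphi}(S)$. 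Hence a better-than-$(\alpha_{\varphi} + \varepsilon)$ approximation for $\varphi$-\textsc{Resource Allocation} would give one for $\varphi$-\textsc{MaxCoverage}, contradicting Theorem~\ref{theo:Hardness} for sublinear $\varphi$.

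The main obstacle is the subtle point in the hardness reduction that repeated sets are never beneficial, i.e., that the optimal allocation can be taken to use distinct cover sets; this needs a short exchange argument using concavity of $\varphi$ (replacing a duplicated set $A_i = A_{i'}$ by any set not yet chosen cannot decrease $W^{\varphi}$, iterated until all choices are distinct or all of $\mathcal{F}$ is used). One should also handle the degenerate case $k > \abs{\mathcal{F}}$ separately, but this does not arise in Feige-style reductions where $\abs{\mathcal{F}}$ is much larger than $k$. Everything else is a routine translation between the matroid-constrained coverage formulation and the allocation formulation, using Theorems~\ref{theo:AlgoMat} and~\ref{theo:Hardness} as black boxes.
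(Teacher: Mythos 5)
Your algorithmic direction is correct and is exactly the paper's argument: encode the choice of one $A_i$ per agent as a basis of a partition matroid on $\bigsqcup_i \mathcal{A}_i$ with one unit of capacity per part, observe $W^{\varphi}(A) = C^{\varphi}(f(A))$ for the induced bijection $f$, and invoke Theorem~\ref{theo:AlgoMat}.

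The hardness direction, however, has a genuine gap. You give every agent the \emph{same} collection $\mathcal{A}_i = \mathcal{F}$, so a feasible allocation is a size-$k$ \emph{multiset} of cover sets, and you then claim that a duplicated set $A_i = A_{i'}$ can be replaced ``by any set not yet chosen'' without decreasing $W^{\varphi}$. This is false, and the underlying statement (that the multiset optimum equals the set optimum) is false as well: take $\mathcal{F} = \{T_1, T_2\}$ with $T_1 = [n]$, $T_2 = \emptyset$, $k = 2$, and $\varphi(j) = \min\{j,2\}$. The multiset $\{T_1, T_1\}$ has value $n\varphi(2) = 2n$, while the only size-$2$ subset $\{T_1, T_2\}$ has value $n$. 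Concavity only says the marginal gain of a second copy of $T$ is at most that of the first copy; it does not say some other available set has a larger marginal gain. Consequently, in your reduction the NO-case optimum of the allocation instance is not controlled by the soundness bound of Theorem~\ref{theo:Hardness}, which is proved only for size-$k$ \emph{subsets}; redoing that soundness analysis for multisets would also break Proposition~\ref{prop:inconsistent}, since condition 2 of the partitioning system only governs collections containing at most one block from each $\mathcal{P}_i$ and says nothing about taking the same block twice. So the gap is not preserved by your reduction as written.

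The paper sidesteps all of this by \emph{not} giving every agent the full collection: it sets $k = \abs{V}$ and $\mathcal{A}_i := \{F^{v_i}_{\beta} : \beta \in [L]\}$, so agent $i$ may only choose among the sets attached to its own vertex $v_i$. Duplicates across agents cannot arise, and every feasible allocation is a size-$k$ subset of $\mathcal{F}$ of a restricted shape (exactly one set per vertex). Completeness is unaffected because the YES-case solution $\{F^v_{\sigma(v)}\}_{v}$ already has this shape, and soundness follows \emph{a fortiori} since the feasible region is a subset of the one analyzed in Theorem~\ref{theo:Hardness}. If you want to keep your ``reverse the reduction'' framing, this vertex-partitioned choice of the $\mathcal{A}_i$ is the repair you need.
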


\subsection{Vehicle-Target Assignment}
\label{subsection:vehicle-target}
\textsc{Vehicle-Target Assignment}~\cite{Murphey00,PM19} is another problem which highlights the applicability of coverage problems, with a concave $\varphi$. In particular, \textsc{Vehicle-Target Assignment} can be directly expressed as $\varphi$-\textsc{Resource Allocation}: the $[n]$ resources correspond to targets, the agents correspond to vehicles $i \in [k]$, each  with a collection of covering choices $\mathcal{A}_i \subseteq 2^{[n]}$, and $\varphi^p(j) = \frac{1-(1-p)^j}{p}$, for a given parameter $p \in (0,1)$. As limit cases, we define $\varphi^0(j) := \lim_{p \rightarrow 0} \varphi^p(j) = j$ and $\varphi^1(j) := 1$. Since $\varphi^p(j)$ is concave, by Proposition \ref{prop:VTA} and Theorem \ref{theo:HardnessMA}, we obtain a novel tight approximation ratio of $\alpha_{\varphi^p} = \frac{1 - e^{-p}}{p}$ for this problem. Also, one can look at the capped version of this problem, $\varphi^p_{\ell}(j) := \varphi^p(\min\{j,\ell\})$. In particular, we recover the $\ell$-\textsc{MultiCoverage} function when $p=0$. In Figure \ref{fig:VTATrunc}, we have plotted several cases of the tight approximations $\alpha_{\varphi^p_{\ell}}$ in function of $\ell$ for several values of $\ell$:

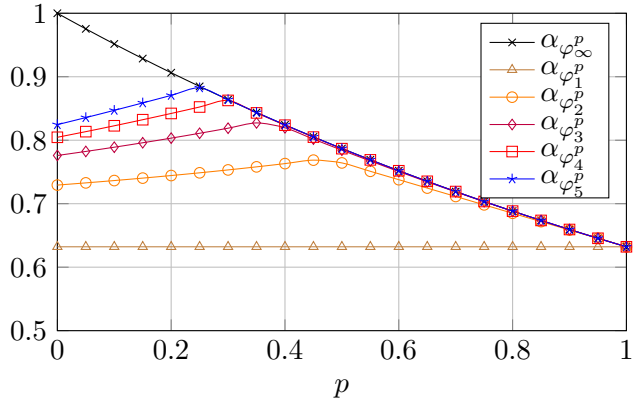
\begin{figure}[!h]
  \begin{center}
    \begin{tikzpicture}
      \begin{axis}[
          xmin = 0, xmax = 1,
          ymin = 0.5, ymax = 1,
          xtick distance = 0.2,
          ytick distance = 0.1,
          grid = both,
          width = 0.52\textwidth,
          height = 0.33\textwidth,
          legend cell align = {left},
          legend pos = north east,
          xlabel=$p$,
        ]
        \addplot[
          domain = 0:1,
          samples = 21,
          smooth,
          black,
          mark = x,
        ] table[x=p,y=alpha,col sep=comma] {VTATruncPlot.csv};
        \addplot[
          domain = 0:1,
          samples = 21,
          smooth,
          brown,
          mark = triangle,
        ] table[x=p,y=m1,col sep=comma] {VTATruncPlot.csv};
        \addplot[
          domain = 0:1,
          samples = 21,
          smooth,
          orange,
          mark = o,
        ] table[x=p,y=m2,col sep=comma] {VTATruncPlot.csv};
        \addplot[
          domain = 0:1,
          samples = 21,
          smooth,
          purple,
          mark = diamond,
        ] table[x=p,y=m3,col sep=comma] {VTATruncPlot.csv};
        \addplot[
          domain = 0:1,
          samples = 21,
          smooth,
          red,
          mark = square,
        ] table[x=p,y=m4,col sep=comma] {VTATruncPlot.csv};
        \addplot[
          domain = 0:1,
          samples = 21,
          smooth,
          blue,
          mark = star,
        ] table[x=p,y=m5,col sep=comma] {VTATruncPlot.csv};

        \legend{$\alpha_{\varphi^p_{\infty}}$, $\alpha_{\varphi^p_{1}}$, $\alpha_{\varphi^p_{2}}$, $\alpha_{\varphi^p_{3}}$, $\alpha_{\varphi^p_{4}}$, $\alpha_{\varphi^p_{5}}$ }
      \end{axis}
    \end{tikzpicture}
  \end{center}
  \caption{Tight approximation ratios $\alpha_{\varphi^p_{\ell}}$, where $\ell$ is the rank of the capped version of the $p$-\textsc{Vehicle-Target Assignment} problem. When $p=0$, we recover the $\ell$-coverage problem.}
  \label{fig:VTATrunc}
\end{figure}

Paccagnan and Marden~\cite{PM19} study the game-theoretic aspects of \textsc{Vehicle-target assignment}. A key goal in~\cite{PM19} is to bound the welfare loss incurred due to strategic selection by the $k$ vehicles, i.e., the selection of each $A_i \in \mathcal{A}_i$ by a self-interested vehicle/agent  $i \in [k]$. The loss is quantified in terms of the \emph{Price of Anarchy} (PoA). Formally, this performance metric is defined as ratio between the welfare of the worst-possible equilibria and the optimal welfare. Paccagnan and Marden~\cite{PM19} show that, for computationally tractable equilibrium concepts (in particular, for coarse correlated equilibria), tight price of anarchy bounds can be obtained via linear programs.

Note that our hardness result (Theorem~\ref{theo:main}) provides upper bounds on PoA of tractable equilibrium concepts--this follows from the observation that computing an equilibrium provides a specific method for finding a coverage solution. In \cite{CPM19} and in the particular case of the $\ell$-\textsc{MultiCoverage} problem, it is shown that this in fact an equality, i.e., PoA $=\alpha_{\varphi}$ if $\varphi(j) = \min\{j,\ell\}$ for all values of $\ell$. However, numerically comparing the approximation ratio for \textsc{Vehicle-Target Assignment}, $\alpha_{\varphi^p} = \frac{1 - e^{-p}}{p}$, with the optimal PoA bound, we note that $\alpha_{\varphi^p}$ can in fact be strictly greater than the PoA guarantee; see Figure~\ref{fig:VTA}.

Another form of the current problem, considered in \cite{PM19}, corresponds to $\varphi^d(j) = j^d$, for a given parameter $d \in (0,1)$. We refer to this instantiation as the $d$-\textsc{Power} function and for it obtain the approximation ratio $\alpha_{\varphi^d} = e^{-1}\sum_{k=1}^{+\infty}\frac{k^d}{k!}$ (Proposition \ref{prop:dPower}). In this case, the question whether the inequality PoA $\leq \alpha_{\varphi}$ is tight remains open; see Figure~\ref{fig:dPower}.

\begin{figure}[!h]
      \begin{center}
       \begin{tikzpicture}
          \begin{axis}[
            xmin = 0, xmax = 1,
            ymin = 0.5, ymax = 1,
            xtick distance = 0.2,
            ytick distance = 0.1,
            grid = both,
            width = 0.52\textwidth,
            height = 0.33\textwidth,
            legend cell align = {left},
            legend pos = north east,
            xlabel=$p$,
            ]
            \addplot[
              domain = 0:1,
              samples = 21,
              smooth,
              black,
              mark = x,
            ] table[x=p,y=alpha,col sep=comma] {VTAPoA.csv};
            \addplot[
              domain = 0:1,
              samples = 21,
              smooth,
              blue,
              mark = o,
            ] table[x=p,y=PoA,col sep=comma] {VTAPoA.csv};
            \addplot[
              domain = 0:1,
              samples = 21,
              smooth,
              red,
              mark = diamond,
            ] table[x=p,y=App,col sep=comma] {VTAPoA.csv};
            \legend{$\alpha_{\varphi^p} = \frac{1 - e^{-p}}{p}$, PoA$^{20}$,Curv $=1 - \frac{c}{e}$}
          \end{axis}
       \end{tikzpicture}
    \end{center}
    \caption{Comparison between the PoA and $\alpha_{\varphi}$ for the \textsc{Vehicle-Target Assignment} problem. Using the linear program found in \cite{PM19}, we were able to compute the blue curve PoA$^{20}$, the \emph{Price of Anarchy} of this problem for $m=20$ players. Since the PoA only decreases when the number of players grows, this means that PoA $< \alpha_{\varphi}$ in that case. As a comparison, the red curve Curv depicts the general approximation ratio (see \cite{SVW17}) obtained for submodular function with curvature $c$, with $c=1-\varphi^p(m) + \varphi^p(m-1)$ here.}
    \label{fig:VTA}
\end{figure}
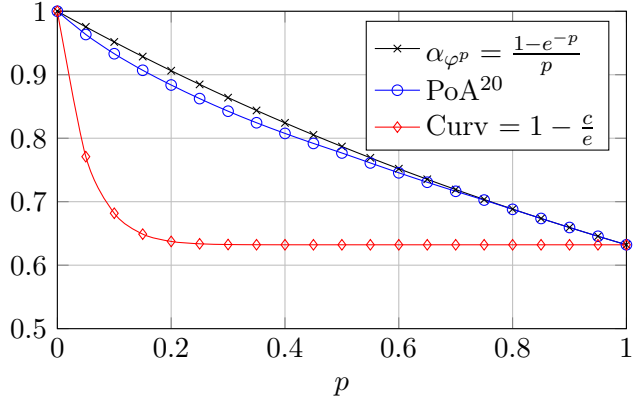

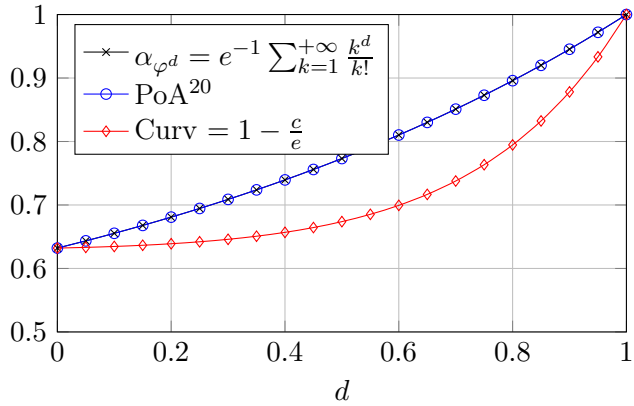
\begin{figure}[!h]
  \begin{center}
    \begin{tikzpicture}
      \begin{axis}[
          xmin = 0, xmax = 1,
          ymin = 0.5, ymax = 1,
          xtick distance = 0.2,
          ytick distance = 0.1,
          grid = both,
          width = 0.52\textwidth,
          height = 0.33\textwidth,
          legend cell align = {left},
          legend pos = north west,
          xlabel=$d$,
        ]
        \addplot[
          domain = 0:1,
          samples = 21,
          smooth,
          black,
          mark = x,
        ] table[x=d,y=alpha,col sep=comma] {dPowerPoA.csv};
        \addplot[
          domain = 0:1,
          samples = 21,
          smooth,
          blue,
          mark = o,
        ] table[x=d,y=PoA,col sep=comma] {dPowerPoA.csv};
        \addplot[
          domain = 0:1,
          samples = 21,
          smooth,
          red,
          mark = diamond,
        ] table[x=d,y=App,col sep=comma] {dPowerPoA.csv};
        \legend{$\alpha_{\varphi^d} = e^{-1}\sum_{k=1}^{+\infty}\frac{k^d}{k!}$, PoA$^{20}$, Curv $=1 - \frac{c}{e}$}
      \end{axis}
    \end{tikzpicture}
  \end{center}
  \caption{Comparison between the PoA and $\alpha_{\varphi}$ for the $d$-\textsc{Power} problem. Using the linear program found in \cite{PM19}, we were able to compute the blue curve PoA$^{20}$, the \emph{Price of Anarchy} of this problem for $m=20$ players. Here, the question whether the inequality PoA $\leq \alpha_{\varphi}$ is tight remains open. As a comparison, the red curve Curv depicts the general approximation ratio (see \cite{SVW17}) obtained for submodular function with curvature $c$, with $c=1-\varphi^d(m) + \varphi^d(m-1)$ here.}
  \label{fig:dPower}
\end{figure}

\subsection{Welfare Maximization for $\varphi$-Coverage}
Maximizing (social) welfare by partitioning items among agents is a key problem in algorithmic game theory; see, e.g., the extensive work on combinatorial auctions~\cite{AGT}. The goal here is to partition $t$ items among a set of $k$ agents such that the sum of values achieved by the agents---referred to as the social welfare---is maximized. That is, one needs to partition $[t]$ into $k$ pairwise disjoint subsets $A_1, A_2, \ldots, A_k$ with the objective of maximizing $\sum_{i=1}^k v_i(A_i)$. Here, $v_i(S)$ denotes the valuation that agent $i$ has for a subset of items $S \subseteq [t]$. 

When each agent's valuation $v_i$ is submodular, a tight $(1- e^{-1})$-approximation ratio is known for social welfare maximization~\cite{Vondrak07}. This section shows that improved approximation guarantees can be achieved if, in particular, the agents' valuations are $\varphi$-coverage functions. Towards a stylized application of such valuations, consider a setting in which each ``item'' $b \in [t]$ represents a bundle (subset) of goods $T_b \subseteq [n]$ and the value of an agent increases with the number of copies of any good $a \in [n]$ that get accumulated. Indeed, if each agent's value for $j$ copies of a good is $\varphi(j)$, then we have a $\varphi$-coverage function and the overall optimization problem is find a $k$-partition, $A_1, A_2, \ldots, A_k$, of $[t]$ that maximizes $\sum_{i=1}^k  \left( \sum_{a \in [n]} \varphi \left(\abs{A_i}_a \right) \right)$, where $ \abs{A_i}_a \coloneqq \{ b \in A_i : a \in T_b \}$. 

In the current setup, one can obtain an $\alpha_\varphi$ approximation ratio for social-welfare maximization by reducing this problem to $\varphi$-coverage with a matroid constraint, and applying the result from Section~\ref{sec:matroid}. Specifically, we can consider a partition matroid over the universe $[t] \times [k]$: for a bundle/item $b \in [t]$ and an agent $i \in [k]$, the element $(b,i)$ in the universe represents that bundle $b$ is assigned to agent $i$, i.e., $b \in A_i$. The partition-matroid constraint is imposed to ensure that each bundle $b$ is assigned to at most one agent. Furthermore, we can create $k$ copies of the underlying set of goods $[n]$ and set $T_{(b,i)} \coloneqq \{ (a, i) : a \in T_b \}$ to map the $\varphi$-coverage over the universe to the social-welfare objective. This, overall, gives us the desired  $\alpha_\varphi$ approximation guarantee.

\section*{Conclusion}
We have introduced the $\varphi$-\textsc{MaxCoverage} problem where having $c$ copies of element $a$ gives a value $\varphi(c)$. We have shown that when $\varphi$ is normalized, nondecreasing and concave, we can obtain an approximation guarantee given by the \emph{Poisson concavity ratio} $\alpha_{\varphi} := \min_{x \in \mathbb{N}} \frac{\mathbb{E}[\varphi(\Poi(x))]}{\varphi(\mathbb{E}[\Poi(x)])}$ and we showed it is tight for sublinear functions $\varphi$. The Poisson concavity ratio strictly beats the bound one gets when using the notion of curvature submodular functions, except in very special cases such as \textsc{MaxCoverage} where the two bounds are equal.

An interesting open question is whether there exists combinatorial algorithms that achieve this approximation ratio. As mentioned in \cite{BFGG20}, for the $\ell$-\textsc{MultiCoverage} with $\ell \geq 2$, which is the special case where $\varphi(x) = \min\{x,\ell\}$, the simple greedy algorithm only gives a $1 - e^{-1}$ approximation ratio, which is strictly less than the ratio $\alpha_{\varphi} = 1-\frac{\ell^{\ell}e^{-\ell}}{\ell!}$ in that case. Also, for any geometrically dominant vector $w=(\varphi(i+1)-\varphi(i))_{i \in \mathbb{N}}$ which is not $p$-geometric, such as \textsc{Proportional Approval Voting}, the greedy algorithm achieves an approximation ratio which is strictly less than $\alpha_{\varphi}$ (see Theorem 18 of \cite{DMMS20}).

Another open question is whether the hardness result remains true even when $\varphi(n) \not= o(n)$. A good example is given by $\varphi(0)=0$ and $\varphi(1+t) = 1 + (1-c)t$ with $c \in (0,1)$. We know that the problem is hard for $c=1$ but easy for $c=0$. One can show that the approximation ratio achieved by our algorithm is $\alpha_{\varphi} = 1 - \frac{c}{e}$ in that case (which is the same approximation ratio obtained from the curvature in \cite{SVW17}), but the tightness of this approximation ratio remains open.

\section*{Acknowledgements}

This research is supported by the French ANR project ANR-18-CE47-0011 (ACOM). SB gratefully acknowledges the support of a Ramanujan Fellowship (SERB - {SB/S2/RJN-128/2015}) and a Pratiksha Trust Young Investigator Award.

\bibliographystyle{plainurl}
\bibliography{MaxCoverage}

\newpage
\appendix

\section{General properties}
In this section, we will assume that $\varphi$ is specified over the nonnegative integers (i.e., $\varphi : \mathbb{N} \rightarrow \mathbb{R}_+$) and it is nondecreasing, concave, and normalized: $\varphi(0)=0$ and $\varphi(1)=1$. We will consider its piecewise linear extension on  $\mathbb{R}_+$ by defining $\varphi(x) := \lambda \varphi(\lfloor x \rfloor) + (1-\lambda)\varphi(\lceil x \rceil)$; here, parameter $\lambda\in [0,1]$ satisfies $x = \lambda\lfloor x \rfloor + (1-\lambda)\lceil x \rceil$. Note that the piecewise linear extension is also nondecreasing and concave.

\begin{prop}
For all $x \in \mathbb{R}_+$, we have $\alpha_{\varphi}(x) \geq \min\{\alpha_{\varphi}(\lfloor x \rfloor), \alpha_{\varphi}(\lceil x \rceil)\}$; here, $\alpha_{\varphi}(0) := \underset{x \rightarrow 0}{\lim} \alpha_{\varphi}(x) = 1$.
  \label{prop:minInt}
\end{prop}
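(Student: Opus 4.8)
The plan is to reduce everything to the concavity of the map $g(x) := \mathbb{E}[\varphi(\Poi(x))] = \sum_{k \geq 0} e^{-x}\frac{x^k}{k!}\varphi(k)$ on $\mathbb{R}_+$, together with an elementary mediant inequality. First note that if $x \in \mathbb{N}$ the claim is trivial, both sides being equal to $\alpha_\varphi(x)$. So assume $x \notin \mathbb{N}$, set $n := \lfloor x \rfloor$ and $\lambda := \lceil x \rceil - x \in (0,1)$, so that $x = \lambda n + (1-\lambda)(n+1)$; since the piecewise linear extension of $\varphi$ is affine on $[n, n+1]$, we have $\varphi(x) = \lambda \varphi(n) + (1-\lambda)\varphi(n+1)$, and recall $\alpha_\varphi(x) = g(x)/\varphi(x)$ for $x > 0$.

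Next I would establish that $g$ is concave on $\mathbb{R}_+$ with $g(0) = \varphi(0) = 0$. Because $\varphi$ is concave and normalized we have $\varphi(k) \leq k$ for all $k$, so $g$ and the series obtained by differentiating it term by term converge locally uniformly, giving the standard identities $g'(x) = \mathbb{E}[\varphi(\Poi(x)+1) - \varphi(\Poi(x))]$ and $g''(x) = \mathbb{E}[\varphi(\Poi(x)+2) - 2\varphi(\Poi(x)+1) + \varphi(\Poi(x))]$. The integrand in the second identity is $\leq 0$ by concavity of $\varphi$ on the integers, hence $g'' \leq 0$ and $g$ is concave. (Equivalently: $g'(x) = \mathbb{E}[\Delta\varphi(\Poi(x))]$ with $\Delta\varphi$ nonincreasing, and $\Poi(x)$ is stochastically increasing in $x$, so $g'$ is nonincreasing.) Concavity then yields $g(x) \geq \lambda g(n) + (1-\lambda) g(n+1)$.

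Then I would finish by cases. If $n \geq 1$, both $\varphi(n)$ and $\varphi(n+1)$ are positive (as $\varphi$ is nondecreasing with $\varphi(1) = 1$), so the mediant inequality $\frac{\lambda a_1 + (1-\lambda)a_2}{\lambda b_1 + (1-\lambda)b_2} \geq \min\{a_1/b_1, a_2/b_2\}$, valid for $\lambda \in [0,1]$ and $b_1, b_2 > 0$, applied with $a_i = g$ and $b_i = \varphi$ evaluated at $n$ and $n+1$ gives
\[
\alpha_\varphi(x) = \frac{g(x)}{\varphi(x)} \geq \frac{\lambda g(n) + (1-\lambda)g(n+1)}{\lambda \varphi(n) + (1-\lambda)\varphi(n+1)} \geq \min\{\alpha_\varphi(n), \alpha_\varphi(n+1)\}.
\]
If $n = 0$, then $x \in (0,1)$, $\varphi(x) = x$, and concavity of $g$ together with $g(0) = 0$ gives $g(x) = g\big((1-\lambda)\cdot 1 + \lambda \cdot 0\big) \geq (1-\lambda) g(1) = x\, g(1)$; hence $\alpha_\varphi(x) = g(x)/x \geq g(1) = \alpha_\varphi(1) \geq \min\{\alpha_\varphi(0), \alpha_\varphi(1)\}$, which completes the argument.

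There is no serious obstacle here. The one conceptual point is recognizing that $x \mapsto \mathbb{E}[\varphi(\Poi(x))]$ inherits concavity from $\varphi$ (the same phenomenon that lets one restrict the infimum defining $\alpha_\varphi$ to integers). The only place demanding care is the boundary case $n = 0$: there $\varphi(0) = 0$ makes the denominator in the mediant step vanish, so it must be treated separately by the short argument above.
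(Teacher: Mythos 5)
Your proof is correct, and for $x \geq 1$ it is essentially identical to the paper's: both rest on the concavity of $g(x) = \mathbb{E}[\varphi(\Poi(x))]$ (the paper's Proposition~\ref{prop:PoiCon}, proved by the same term-by-term differentiation you sketch) followed by the mediant inequality, which the paper writes out as the two-line chain $\lambda\alpha_{\varphi}(\lfloor x\rfloor)\varphi(\lfloor x\rfloor)+(1-\lambda)\alpha_{\varphi}(\lceil x\rceil)\varphi(\lceil x\rceil)\geq \min\{\cdot,\cdot\}\,\varphi(x)$. Where you genuinely diverge is the case $x\in(0,1)$: the paper expands $\alpha_{\varphi}(x)=e^{-x}\sum_{k\geq 0}\tfrac{\varphi(k+1)}{k+1}\tfrac{x^k}{k!}$ and checks by a derivative computation that $\alpha_{\varphi}$ is non-increasing on $[0,1]$, whereas you get $\alpha_{\varphi}(x)\geq\alpha_{\varphi}(1)$ in one line from $g(0)=0$ and concavity of $g$ (i.e., $g(x)\geq x\,g(1)$). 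Your route is shorter and reuses the concavity of $g$ you already have; the paper's explicit series form is what also justifies the claim $\lim_{x\to 0}\alpha_{\varphi}(x)=1$ baked into the statement, which your writeup takes for granted -- a one-line check ($g(x)=x+O(x^2)$ as $x\to 0$) would close that small remaining point.
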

\begin{proof}
For any $x \geq 1$, consider parameter $\lambda \in [0,1]$ such that $x = \lambda \lfloor x \rfloor + (1-\lambda) \lceil x \rceil$. Since $x \mapsto \mathbb{E}[\varphi(\Poi(x))]$ is concave (Proposition \ref{prop:PoiCon}), the following bound holds for all $x \geq 1$:
  \begin{align*}
      \mathbb{E}[\varphi(\Poi(x))] &\geq \lambda \mathbb{E}[\varphi(\Poi(\lfloor x \rfloor))] + (1-\lambda) \mathbb{E}[\varphi(\Poi(\lceil x \rceil))] \\      
      &=  \lambda \alpha_{\varphi}(\lfloor x \rfloor)\varphi(\lfloor x \rfloor) + (1-\lambda) \alpha_{\varphi}(\lceil x \rceil)\varphi(\lceil x \rceil) \quad \text{by definition of $\alpha_{\varphi}(x)$}\\
      &\geq \min\{\alpha_{\varphi}(\lfloor x \rfloor), \alpha_{\varphi}(\lceil x \rceil)\} \left(\lambda \varphi(\lfloor x \rfloor) + (1-\lambda) \varphi(\lceil x \rceil) \right)\\
      &=  \min\{\alpha_{\varphi}(\lfloor x \rfloor), \alpha_{\varphi}(\lceil x \rceil)\} \varphi(x) \quad \text{since $\varphi$ linear between integer points.} 
  \end{align*}
Therefore, $\alpha_{\varphi}(x) = \frac{\mathbb{E}[\varphi(\Poi(x))]}{\varphi(x)} \geq \min\{\alpha_{\varphi}(\lfloor x \rfloor), \alpha_{\varphi}(\lceil x \rceil)\}$.

Next we will show that $\alpha_{\varphi}(x)$ is non-increasing from $0$ to $1$, which implies that for $x \in [0,1)$, we have $\alpha_{\varphi}(x) \geq  \min\{\alpha_{\varphi}(\lfloor x \rfloor), \alpha_{\varphi}(\lceil x \rceil)\}$.
Recall that $\varphi$, by definition, is linear between integers. Hence, the fact that $\varphi(0) = 0$ and $\varphi(1) = 1$, gives us $\varphi(x) = x$ for all $x \in [0,1]$. Therefore, 
  \[\alpha_{\varphi}(x) = \frac{\mathbb{E}[\varphi(\Poi(x))]}{x} = e^{-x}\sum_{k=1}^{+\infty}\frac{\varphi(k)}{k}\frac{x^{k-1}}{(k-1)!} = e^{-x}\sum_{k=0}^{+\infty}\frac{\varphi(k+1)}{k+1}\frac{x^k}{k!} \ . \]
  In particular, $\alpha_{\varphi}(x)$ is well-defined at $0$ and $\alpha_{\varphi}(0) = e^{-0}\sum_{k=0}^{+\infty}\frac{\varphi(k+1)}{k+1}\frac{0^k}{k!} = 1$. Now, consider the derivative
   \[\alpha_{\varphi}'(x) = e^{-x}\left(-\sum_{k=0}^{+\infty}\frac{\varphi(k+1)}{k+1}\frac{x^k}{k!} + \sum_{k=1}^{+\infty}\frac{\varphi(k+1)}{k+1}\frac{x^{k-1}}{(k-1)!} \right) = e^{-x}\sum_{k=0}^{+\infty}\left(\frac{\varphi(k+2)}{k+2} - \frac{\varphi(k+1)}{k+1} \right)\frac{x^k}{k!} \ .\]

Note that $\frac{\varphi(k+2)}{k+2} - \frac{\varphi(k+1)}{k+1} = \frac{\varphi(k+2) - \varphi(0)}{(k+2)-0} - \frac{\varphi(k+1) - \varphi(0)}{(k+1)-0}  \leq 0$; the last inequality follows from the concavity of $\varphi$. Hence, $\alpha_{\varphi}'(x) \leq 0$. That is, $\alpha_{\varphi}(x)$ is non-increasing from $0$ to $1$.
\end{proof}

\begin{prop}
For any $\varepsilon >0$,  the bound $1 - \alpha_{\varphi}(x) \leq \varepsilon$ holds for all $x \geq \left(\frac{6}{\varepsilon}\right)^4$.
\label{prop:ConvergenceRate}
\end{prop}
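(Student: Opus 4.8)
The plan is to control the Jensen gap $\varphi(x) - \mathbb{E}[\varphi(\Poi(x))]$ \emph{relative to} $\varphi(x)$ by splitting the expectation according to whether $\Poi(x)$ is close to its mean. Write $Y \sim \Poi(x)$, so $\mathbb{E}[Y] = x$ and $\mathrm{Var}(Y) = x$, and set $A := \{\, |Y - x| \le x^{3/4} \,\}$. Two elementary facts will be used throughout: since $\varphi$ is nondecreasing, $\varphi(x) - \varphi(Y) \le 0$ on $\{Y \ge x\}$; and since $\varphi \ge 0$, the crude bound $\varphi(x) - \varphi(Y) \le \varphi(x)$ holds everywhere. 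On the tail event $A^c$ I use this crude bound together with Markov's inequality applied to $(Y-x)^2$: $\Pr[A^c] = \Pr[(Y-x)^2 > x^{3/2}] \le \mathrm{Var}(Y)/x^{3/2} = x^{-1/2}$, so the tail contributes at most $\varphi(x)\, x^{-1/2}$ to the gap.

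The step where concavity and the normalization $\varphi(0) = 0$ are essential is the bulk event. For $x \ge 16$ we have $x^{3/4} \le x/2$, hence on $A \cap \{Y < x\}$ it holds that $x/2 \le Y < x$; comparing the chords of the concave function $\varphi$ over $[0,Y]$ and $[Y,x]$ gives $\varphi(x) - \varphi(Y) \le \frac{\varphi(Y)}{Y}\,(x - Y)$, and using $\varphi(Y) \le \varphi(x)$, $Y \ge x/2$ and $x - Y \le x^{3/4}$ this is at most $\frac{2\varphi(x)}{x}\cdot x^{3/4} = 2\varphi(x)\, x^{-1/4}$. Adding the two contributions, $\varphi(x) - \mathbb{E}[\varphi(\Poi(x))] \le 2\varphi(x)x^{-1/4} + \varphi(x)x^{-1/2} \le 3\varphi(x)\, x^{-1/4}$ for all $x \ge 16$; dividing by $\varphi(x) \ge 1$ yields $1 - \alpha_\varphi(x) \le 3 x^{-1/4}$.

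It then remains to plug in the threshold. If $\varepsilon \ge 1$ the claim is trivial because $\alpha_\varphi(x) \ge 0$ forces $1 - \alpha_\varphi(x) \le 1 \le \varepsilon$. If $\varepsilon < 1$, then $(6/\varepsilon)^4 > 6^4 > 16$, so every $x \ge (6/\varepsilon)^4$ satisfies $x \ge 16$, whence $1 - \alpha_\varphi(x) \le 3 x^{-1/4} \le 3 \cdot \tfrac{\varepsilon}{6} = \tfrac{\varepsilon}{2} \le \varepsilon$.

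The main obstacle to anticipate is the natural-but-failing strategy of bounding the gap via the $1$-Lipschitz property of $\varphi$ (which follows from concavity and $\varphi(1) - \varphi(0) = 1$): that gives $\varphi(x) - \mathbb{E}[\varphi(\Poi(x))] \le \mathbb{E}|Y - x| \le \sqrt{x}$ and hence only $1 - \alpha_\varphi(x) \le \sqrt{x}/\varphi(x)$, which is vacuous for slowly growing $\varphi$ such as the $d$-\textsc{Power} function $\varphi(j) = j^d$ with $d < 1/2$. The resolution is precisely to keep every estimate proportional to $\varphi(x)$: on the tail one can afford the trivial bound $\varphi(x) - \varphi(Y) \le \varphi(x)$ because the tail probability is already $o(1)$, while on the bulk concavity together with $\varphi(0) = 0$ bounds the local slope of $\varphi$ near $x$ by $\varphi(x)/(x/2)$, which is the quantitative heart of the argument.
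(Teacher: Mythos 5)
Your proof is correct and follows essentially the same strategy as the paper's: split at deviation scale $x^{3/4}$ (i.e.\ $\delta(x)=x^{-1/4}$), control the tail by a concentration bound times the crude estimate $\varphi(x)-\varphi(Y)\le\varphi(x)$, and control the bulk by using concavity together with $\varphi(0)=0$ to bound the slope of $\varphi$ near $x$ by $O(\varphi(x)/x)$. The only differences are cosmetic: you use Chebyshev's inequality where the paper invokes a Poisson Chernoff bound (Chebyshev suffices since $x^{3/4}\gg\sqrt{x}$), and your chord comparison $\frac{\varphi(x)-\varphi(Y)}{x-Y}\le\frac{\varphi(Y)}{Y}$ plays exactly the role of the paper's bound via the increments $w^x_k$; you even end up with the slightly sharper constant $3x^{-1/4}$ in place of $6x^{-1/4}$, which comfortably meets the stated threshold.
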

\begin{proof}
Write $X \sim \Poi(x)$ and note that $\mathbb{P}(X \leq x(1-\delta(x))) \leq \exp(-\frac{x\delta(x)^2}{2(1+\delta(x))})$, for any positive function $\delta(\cdot)$ which satisfies $\delta(x) < 1$, for all $x >1$; see, e.g., \cite{PoissonBound}. Therefore, 
    \begin{align}
      \mathbb{E}[\varphi(X)] &\geq  e^{-x}\sum_{k=\lceil x(1-\delta(x)) \rceil}^{+\infty}\varphi(k)\frac{x^k}{k!} \quad \text{since $\varphi$ nonnegative} \\
      &\geq   \varphi(x(1-\delta(x)))\sum_{k=\lceil x(1-\delta(x)) \rceil}^{+\infty}e^{-x}\frac{x^k}{k!}  \quad \text{since $\varphi$ nondecreasing} \\
      &\geq  \varphi(x(1-\delta(x)))(1-\mathbb{P}(X \leq x(1-\delta(x))))\\
      &\geq  \varphi(x(1-\delta(x)))\left(1-\exp(-\frac{x\delta(x)^2}{2(1+\delta(x))})\right)\ .  \label{ineq:exp-soup}
    \end{align}

Next, we will show that $\frac{\varphi(x(1-\delta(x)))}{\varphi(x)} \geq 1-\frac{\delta(x)+\frac{1}{x}}{1-\delta(x)}$. Towards this end, we will first bound $\varphi(x+y)-\varphi(x)$ in terms of $w^x_k = \varphi(x+k)-\varphi(x+k-1)$, which constitutes a non-increasing sequence (since $\varphi$ is concave):

    \[ \varphi(x+y) - \varphi(x) \leq \varphi(x+\lfloor y \rfloor + 1)-\varphi(x+\lfloor y \rfloor) + \sum_{k=1}^{\lfloor y \rfloor} w^x_k \leq (\lfloor y \rfloor + 1) w^x_1. \]

    Applying this bound to $x(1-\delta(x))$ and $x\delta(x)$ gives us
   
      \begin{align}
        1 - \frac{\varphi(x(1-\delta(x)))}{\varphi(x)} &=  \frac{\varphi(x)-\varphi(x(1-\delta(x)))}{\varphi(x)} \leq \frac{(\lfloor x\delta(x) \rfloor+1) w^{x(1-\delta(x))}_1}{\varphi(x)} \nonumber \\
        &\leq  \frac{x\delta(x)+1}{\varphi(x)} \ \frac{\varphi(x(1-\delta(x)))}{x(1-\delta(x))} \leq \frac{x\delta(x)+1}{x(1-\delta(x))} = \frac{\delta(x)+\frac{1}{x}}{1-\delta(x)}. \label{ineq:phi-soup}    \end{align}

Here, $w^{x(1-\delta(x))}_1 = \frac{\varphi(x(1-\delta(x))+1) - \varphi(x(1-\delta(x)))}{(x(1-\delta(x))+1)-(x(1-\delta(x)))} \leq \frac{\varphi(x(1-\delta(x)))-\varphi(0)}{x(1-\delta(x))-0}= \frac{\varphi(x(1-\delta(x)))}{x(1-\delta(x))}$ follows from the concavity of $\varphi$ and $\frac{\varphi(x(1-\delta(x)))}{\varphi(x)}\ \leq 1$ from the fact that $\varphi$ is nondecreasing. 

Inequalities (\ref{ineq:exp-soup}) and (\ref{ineq:phi-soup}) lead to following upper bound on $1 - \alpha_{\varphi}(x)$ in terms of $\delta(x)$: 
    \begin{align}
        1 - \alpha_{\varphi}(x) & = 1 - \frac{\mathbb{E}[\varphi(\Poi(x))]}{\varphi(x)} \leq 1 - \left(1-\frac{\delta(x) + \frac{1}{x}}{1-\delta(x)}\right)\left(1-\exp(-\frac{x\delta(x)^2}{2(1+\delta(x))})\right) \nonumber \\
         & \leq  \frac{\delta(x) + \frac{1}{x}}{1-\delta(x)} +  \exp(-\frac{x\delta(x)^2}{2(1+\delta(x))}). \label{ineq:delta-bound}
    \end{align}

Specifically setting $\delta(x)=x^{-\frac{1}{4}}$, we have (for all $x \geq 16$): $\delta(x) \leq \frac{1}{2}$, $\frac{1}{x} \leq x^{-\frac{1}{4}}$, and $\exp(-\frac{x\delta(x)^2}{2(1+\delta(x))}) \leq \exp(-\frac{\sqrt{x}}{4}) \leq 2x^{-\frac{1}{4}}$. Hence, inequality (\ref{ineq:delta-bound}) reduces to 
\begin{align*}
1 - \alpha_{\varphi}(x) \leq \frac{2x^{-\frac{1}{4}}}{1-\frac{1}{2}} + 2x^{-\frac{1}{4}} \leq 6x^{-\frac{1}{4}} \qquad \text{for all $x \geq 16$.}
\end{align*}

If $\varepsilon \geq 1$, we have $1 - \alpha_{\varphi}(x) \leq 1 \leq \varepsilon$. Otherwise, we have that $\left(\frac{6}{\varepsilon}\right)^4 \geq 6^4 \geq 16$. Therefore, given any $\varepsilon >0$, for all $x \geq \left(\frac{6}{\varepsilon}\right)^4$ we have $1 - \alpha_{\varphi}(x) \leq \varepsilon$.
\end{proof}

\begin{prop}
    We have that $\alpha_{\varphi} = \inf_{x \in \mathbb{R}_+} \alpha_{\varphi}(x) = \min_{x \in \mathbb{N}^*} \alpha_{\varphi}(x)$.
  \label{prop:minNotInf}
\end{prop}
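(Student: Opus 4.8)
The plan is to establish the chain of equalities $\alpha_{\varphi} = \inf_{x \in \mathbb{R}_+} \alpha_{\varphi}(x) = \min_{x \in \mathbb{N}^*} \alpha_{\varphi}(x)$ by combining the two previously established propositions. Recall that $\alpha_{\varphi}$ is defined as $\inf_{x \in \mathbb{N}^*} \alpha_{\varphi}(x)$, so the real content is to show (a) the infimum over $\mathbb{R}_+$ coincides with the infimum over $\mathbb{N}^*$, and (b) this infimum is actually attained.

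First I would argue that $\inf_{x \in \mathbb{R}_+} \alpha_{\varphi}(x) = \inf_{x \in \mathbb{N}^*} \alpha_{\varphi}(x)$. The inequality $\inf_{x \in \mathbb{R}_+} \alpha_{\varphi}(x) \leq \inf_{x \in \mathbb{N}^*} \alpha_{\varphi}(x)$ is immediate since $\mathbb{N}^* \subseteq \mathbb{R}_+$. For the reverse inequality, Proposition \ref{prop:minInt} tells us that for every $x \in \mathbb{R}_+$ we have $\alpha_{\varphi}(x) \geq \min\{\alpha_{\varphi}(\lfloor x \rfloor), \alpha_{\varphi}(\lceil x \rceil)\}$. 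When $x \geq 1$ both $\lfloor x \rfloor$ and $\lceil x \rceil$ lie in $\mathbb{N}^*$, so $\alpha_{\varphi}(x) \geq \inf_{j \in \mathbb{N}^*}\alpha_{\varphi}(j)$; and for $x \in [0,1)$, the proof of Proposition \ref{prop:minInt} shows $\alpha_{\varphi}$ is non-increasing on $[0,1]$, so $\alpha_{\varphi}(x) \geq \alpha_{\varphi}(1) \geq \inf_{j \in \mathbb{N}^*}\alpha_{\varphi}(j)$ there as well (and $\alpha_\varphi$ is undefined or equals $1$ only at $0$, which does not lower the infimum). Taking the infimum over $x \in \mathbb{R}_+$ gives $\inf_{x \in \mathbb{R}_+}\alpha_{\varphi}(x) \geq \inf_{j \in \mathbb{N}^*}\alpha_{\varphi}(j)$, hence equality.

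Next I would show that the infimum over $\mathbb{N}^*$ is attained, i.e.\ is a minimum. This is where Proposition \ref{prop:ConvergenceRate} does the work: it guarantees that $1 - \alpha_{\varphi}(x) \leq \varepsilon$ for all $x \geq (6/\varepsilon)^4$. Concretely, fix any $\varepsilon_0 \in (0,1)$ with $\varepsilon_0 < 1 - \alpha_{\varphi}(1)$ if $\alpha_\varphi(1) < 1$ (if $\alpha_\varphi(1) = 1$ then $\varphi$ is affine and $\alpha_\varphi \equiv 1$, trivially a minimum at any integer). Then every integer $x$ with $\alpha_{\varphi}(x) < \alpha_{\varphi}(1)$ must satisfy $x < N_{\varepsilon_0} := \lceil (6/\varepsilon_0)^4 \rceil$, since for $x \geq N_{\varepsilon_0}$ we would have $\alpha_{\varphi}(x) \geq 1 - \varepsilon_0 > \alpha_{\varphi}(1)$. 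Hence $\inf_{x \in \mathbb{N}^*} \alpha_{\varphi}(x) = \min_{x \in \{1, \ldots, N_{\varepsilon_0}\}} \alpha_{\varphi}(x)$, a minimum over a finite set, which is therefore attained.

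I do not expect a serious obstacle here; the proposition is essentially a bookkeeping corollary of Propositions \ref{prop:minInt} and \ref{prop:ConvergenceRate}. The one point requiring mild care is the degenerate case $\alpha_{\varphi} = 1$ (equivalently $\varphi$ affine on $[0,\infty)$, so $\varphi(x) = x$), where one cannot pick $\varepsilon_0 < 1 - \alpha_\varphi(1)$; in that case $\alpha_{\varphi}(x) = 1$ for all $x$ and the minimum is trivially attained, so the claim still holds. A secondary subtlety is making sure the behavior near $x = 0$ is handled correctly, but since $\alpha_{\varphi}(0) = 1$ and the infimum is over $\mathbb{N}^*$ (or, for the real-valued statement, is $\leq \alpha_\varphi(1) \leq 1$ anyway), the point $0$ never affects the value of the infimum.
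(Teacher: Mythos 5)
Your proposal is correct and follows exactly the paper's route: Proposition \ref{prop:minInt} reduces the infimum over $\mathbb{R}_+$ to the infimum over $\mathbb{N}^*$, and Proposition \ref{prop:ConvergenceRate} (together with $\alpha_{\varphi}(x)\leq 1$) shows that infimum is attained. The paper states this in two lines; your write-up just fills in the finite-set and degenerate-case details.
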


\begin{proof}
  Thanks to Proposition \ref{prop:minInt}, we have that $\inf_{x \in \mathbb{R}^+} \alpha_{\varphi}(x) = \inf_{x \in \mathbb{N}^*} \alpha_{\varphi}(x)$, and thanks to Proposition \ref{prop:ConvergenceRate}, since $\alpha_{\varphi}(x) \leq 1$, we have that $\inf_{x \in \mathbb{N}^*} \alpha_{\varphi}(x) = \min_{x \in \mathbb{N}^*} \alpha_{\varphi}(x)$.
\end{proof}

\begin{prop}
  $C^{\varphi}$ is submodular, its curvature is at most $c = 1 - (\varphi(m) - \varphi(m-1))$ and it cannot be improved for a general instance with $m$ cover sets.
  \label{prop:SubCurv}
\end{prop}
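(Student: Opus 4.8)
The plan is to verify the three assertions in order: submodularity of $C^{\varphi}$, the upper bound $c \le 1 - (\varphi(m) - \varphi(m-1))$ on its curvature, and a matching lower bound via an explicit instance.

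\textbf{Submodularity.} Since $C^{\varphi} = \sum_{a \in [n]} w_a C_a^{\varphi}$ with all $w_a > 0$, and nonnegative combinations of submodular functions are submodular, it suffices to show that each $C_a^{\varphi}(S) = \varphi(\abs{S}_a)$ is submodular on $2^{[m]}$. Adjoining an index $i$ to $S$ increases $\abs{S}_a$ by one if $a \in T_i$ and leaves it unchanged otherwise, so the marginal value $C_a^{\varphi}(S \cup \{i\}) - C_a^{\varphi}(S)$ equals $\varphi(\abs{S}_a + 1) - \varphi(\abs{S}_a)$ if $a \in T_i$ and $0$ otherwise. Concavity of $\varphi$ makes $j \mapsto \varphi(j+1) - \varphi(j)$ nonincreasing, and $\abs{S}_a \le \abs{S'}_a$ whenever $S \subseteq S'$; hence the marginal value at $S$ is at least the marginal value at $S'$, which is the diminishing-returns characterization of submodularity. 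Monotonicity of $C^{\varphi}$ follows likewise since $\varphi$ is nondecreasing, and $C^{\varphi}(\emptyset) = 0$ since $\varphi(0) = 0$, so the curvature is well-defined.

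\textbf{Upper bound on the curvature.} Recall from \cite{CC84} that for a monotone submodular $f$ on $N = [m]$ with $f(\emptyset) = 0$, the curvature is $c = 1 - \min_{j} \frac{f(N) - f(N \setminus \{j\})}{f(\{j\})}$, the minimum taken over $j$ with $f(\{j\}) > 0$; we may assume every $T_j \ne \emptyset$ (empty cover sets contribute nothing and can be discarded), so that $f(\{j\}) = \sum_{a \in T_j} w_a \varphi(1) = \sum_{a \in T_j} w_a > 0$ using $\varphi(1) = 1$. Writing $d_a := \abs{\set{j \in [m] : a \in T_j}}$, only the elements $a \in T_j$ see their multiplicity drop (from $d_a$ to $d_a - 1$) when $j$ is removed from $N$, so $f(N) - f(N \setminus \{j\}) = \sum_{a \in T_j} w_a \big( \varphi(d_a) - \varphi(d_a - 1) \big)$. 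Since $d_a \le m$ and $\varphi$ is concave, $\varphi(d_a) - \varphi(d_a - 1) \ge \varphi(m) - \varphi(m-1)$ for each term, whence $f(N) - f(N \setminus \{j\}) \ge \big( \varphi(m) - \varphi(m-1) \big) \sum_{a \in T_j} w_a = \big( \varphi(m) - \varphi(m-1) \big) f(\{j\})$. Dividing and minimizing over $j$ gives $c \le 1 - (\varphi(m) - \varphi(m-1))$.

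\textbf{Tightness.} For the matching instance I would take $n = 1$, $w_1 = 1$, and $T_1 = \dots = T_m = \set{1}$, so that $\abs{S}_1 = \abs{S}$ and $C^{\varphi}(S) = \varphi(\abs{S})$. Then $f(\{j\}) = \varphi(1) = 1$ and $f([m]) - f([m] \setminus \{j\}) = \varphi(m) - \varphi(m-1)$ for every $j$, so the minimum above is exactly $\varphi(m) - \varphi(m-1)$ and the curvature equals $1 - (\varphi(m) - \varphi(m-1))$; hence no bound depending only on $m$ can do better. The only points needing slight care are matching the exact normalization of curvature used in \cite{CC84} and, for the inequality $\varphi(d_a) - \varphi(d_a - 1) \ge \varphi(m) - \varphi(m-1)$, simply invoking concavity of $\varphi$ together with $1 \le d_a \le m$; I do not anticipate a substantial obstacle.
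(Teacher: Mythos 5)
Your proposal is correct and follows essentially the same route as the paper: the curvature computation and the tightness instance are the same, and your submodularity argument (diminishing marginals of each $C_a^{\varphi}$ via concavity of $\varphi$, then summing with positive weights) is just a cleaner packaging of the paper's case analysis.
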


\begin{proof}

  We use the following lemma which is trivial to prove:

  \begin{lem}[Properties of $\abs{S}_a = \abs{\set{i \in S : a \in T_i}}$.]
    We have:
  \begin{enumerate}
  \item $\abs{S}_a \leq \abs{S}$,
  \item $\abs{S \cup S'}_a\leq \abs{S}_a + \abs{S'}_a$. In particular, if $S \subseteq T$ then $\abs{S}_a \leq \abs{T}_a$ and $\abs{S\cup\set{x}}_a \leq \abs{S}_a + 1$,
  \item If $S \subseteq T$, $x \not\in T$ then $\abs{S}_a = \abs{T}_a \Rightarrow \abs{S\cup\set{x}}_a = \abs{T\cup\set{x}}_a$.
  \end{enumerate}
  \label{lem:ke}
\end{lem}
  
  Let us show first the submodularity of $C^{\varphi}$. Let $S \subseteq T \subseteq [m]$ and $x \not\in T$:
  \begin{equation}
    \begin{aligned}
      &&& C^{\varphi}(S \cup \set{x}) - C^{\varphi}(S) - (C^{\varphi}(T \cup \set{x}) - C^{\varphi}(T))= \\
      &=&& \sum_{a \in [n]} w_a[\varphi(\abs{S\cup\set{x}}_a) - \varphi(\abs{S}_a) - (\varphi(\abs{T\cup\set{x}}_a) - \varphi(\abs{T}_a))]\ .\\
    \end{aligned}
  \end{equation}

  Let us call $g(a) := \varphi(\abs{S\cup\set{x}}_a) - \varphi(\abs{S}_a) - (\varphi(\abs{T\cup\set{x}}_a) - \varphi(\abs{T}_a))$:
  \begin{enumerate}
  \item If $\abs{T}_a = \abs{S}_a$ then thanks to Lemma \ref{lem:ke}, we have that $\abs{T\cup\set{x}}_a = \abs{S\cup\set{x}}_a$, so $g(a) = 0$
  \item Else, we have that $\abs{T}_a > \abs{S}_a$:
    \begin{enumerate}
    \item If $\abs{S\cup\set{x}}_a = \abs{S}_a$, then we add elements of $T-S$ using Lemma \ref{lem:ke} to get that $\abs{T\cup\set{x}}_a = \abs{T}_a$, so $g(a)=0$ in that case.
    \item Else $\abs{S\cup\set{x}}_a \not= \abs{S}_a$. So with $\abs{S}_a = k$, we get that $\abs{S\cup\set{x}}_a = k+1$ and $\abs{T}_a > \abs{S}_a$ so $\abs{T}_a \geq k+1$.

      \begin{enumerate}
      \item If $\abs{T\cup\set{x}}_a = \abs{T}_a$, then $g(a) =  \varphi(k+1) - \varphi(k) \geq 0$ since $\varphi$ is nondecreasing.
      \item Else $\abs{T\cup\set{x}}_a \not= \abs{T}_a$ so with $\abs{T}_a = \ell$ with $\ell \geq k+1$, we get that $\abs{T}_a = \ell+1$. So we have that:
        \begin{equation}
          \begin{aligned}
            g(a) &=&& \varphi(k+1) - \varphi(k) - (\varphi(\ell+1) - \varphi(\ell))\\
            &=&& \frac{\varphi(k+1) - \varphi(k)}{(k+1) - k} - \frac{\varphi(\ell+1) - \varphi(\ell)}{(\ell+1) - \ell} \geq 0\ ,
          \end{aligned}
        \end{equation}
        by concavity of $\varphi$: its slopes are nonincreasing.
      \end{enumerate}
    \end{enumerate}
  \end{enumerate}
  So in all cases, we have $g(a) \geq 0$ so $ C^{\varphi}(S \cup \set{x}) - C^{\varphi}(S) - (C^{\varphi}(T \cup \set{x}) - C^{\varphi}(T)) \geq 0$: $C^{\varphi}$ is submodular.

  Let us now compute its curvature:
  \[c = 1 - \min_{i \in [m]} \frac{C^{\varphi}([m]) - C^{\varphi}([m]-\set{i})}{C^{\varphi}(\set{i}) - C^{\varphi}(\emptyset)}\ .\]

  Let $i \in [m]$ fixed:
  \begin{equation}
    \begin{aligned}
      \frac{C^{\varphi}([m]) - C^{\varphi}([m]-\set{i})}{C^{\varphi}(\set{i}) - C^{\varphi}(\emptyset)} &=&& \frac{\sum_{a \in [n]} w_a[\varphi(\abs{[m]}_a) - \varphi(\abs{[m]-\set{i}}_a)]}{\sum_{a \in [n]}w_a[\varphi(\abs{\set{i}}_a) - \varphi(\abs{\emptyset}_a)]}\\
      &=&& \frac{\sum_{a \in T_i} w_a[\varphi(\abs{[m]}_a) - \varphi(\abs{[m]-\set{i}}_a)]}{\sum_{a \in T_i} w_a}\\
      &=&& \frac{\sum_{a \in T_i} w_a[\varphi(\abs{[m]}_a) - \varphi(\abs{[m]}_a-1)]}{\sum_{a \in T_i} w_a} \text{ since } a \in T_i \ .\\
    \end{aligned}
  \end{equation}
  But $\abs{[m]}_a \leq m$ and $\varphi$ concave, so $\varphi(\abs{[m]}_a)) - \varphi(\abs{[m]}_a-1) \geq \varphi(m) - \varphi(m-1)$ for all $a \in [n]$. As a consequence we have that:

  \[\frac{C^{\varphi}([m]) - C^{\varphi}([m]-\set{i})}{C^{\varphi}(\set{i}) - C^{\varphi}(\emptyset)} \geq \varphi(m) - \varphi(m-1)\ .\]
  and this lower bound is true for its minimum over $i$. Thus we get that $c \leq 1 - (\varphi(m) - \varphi(m-1))$.
  Also one can find instances for all $m$ such that this bound is tight: take $T_1 =\set{a}$ and $\forall j \in [m], a \in T_j$ for instance.
\end{proof}

\begin{prop}
  Let $\ell \in \mathbb{N}^*$. if $\forall x \geq \ell, \varphi(x) = \varphi(\ell) +a(x-\ell)$ for some $0 \leq a \leq \varphi(\ell)-\varphi(\ell-1)$, then $\alpha_{\varphi}(x)$ is nondecreasing from $\ell$ to $+\infty$ and:
  \[\alpha_{\varphi}(x) =  \frac{\varphi(\ell) + a(x-\ell)}{\varphi(x)} - \frac{e^{-x}}{\varphi(x)}\left(\sum_{k=0}^{\ell} \left(\varphi(\ell) + a(x-\ell) - \varphi(k)\right)\frac{x^k}{k!} - a\frac{x^{\ell+1}}{\ell!} \right) \ .\]
  In particular, $\alpha_{\varphi} = \min_{x \in [\ell]} \alpha_{\varphi}(x)$, and the argmin can be computed numerically.
  \label{prop:phiLinear}
\end{prop}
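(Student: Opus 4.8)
The plan is to first derive the closed-form expression for $\mathbb{E}[\varphi(\Poi(x))]$, then read off the monotonicity of $\alpha_{\varphi}$ on $[\ell,+\infty)$ from it, and finally combine with Proposition~\ref{prop:minNotInf} to get the last assertion.

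\textbf{Closed form.} Write $L(x) := \varphi(\ell) + a(x-\ell)$ for the affine function that, by hypothesis, coincides with $\varphi$ on the integers $\geq \ell$; since the piecewise linear extension interpolates linearly between consecutive integers, in fact $\varphi(x) = L(x)$ for all real $x \geq \ell$. Starting from $\mathbb{E}[\varphi(\Poi(x))] = \sum_{k \geq 0} \varphi(k) e^{-x}\frac{x^k}{k!}$, I would use $\varphi(k) = L(k)$ for $k \geq \ell$ together with $\sum_{k\geq 0} L(k) e^{-x}\frac{x^k}{k!} = L(x)$ (valid because $\mathbb{E}[\Poi(x)] = x$) to write
\[
\mathbb{E}[\varphi(\Poi(x))] = L(x) - \sum_{k=0}^{\ell} \big(L(k) - \varphi(k)\big) e^{-x}\frac{x^k}{k!},
\]
where the terms $k > \ell$ drop out. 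Substituting $L(k) = L(x) + a(k-x)$ and applying the elementary Poisson partial-sum identity $\sum_{k=0}^{\ell}(k-x) e^{-x}\frac{x^k}{k!} = -e^{-x}\frac{x^{\ell+1}}{\ell!}$ (obtained via the shift $k\frac{x^k}{k!} = x\frac{x^{k-1}}{(k-1)!}$) turns the right-hand side into exactly $L(x) - e^{-x}\big(\sum_{k=0}^{\ell}(L(x) - \varphi(k))\frac{x^k}{k!} - a\frac{x^{\ell+1}}{\ell!}\big)$. Dividing by $\varphi(x)$ yields the stated formula; note it is in fact valid for every $x > 0$, not only $x \geq \ell$.

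\textbf{Monotonicity on $[\ell,+\infty)$.} For $x \geq \ell$ we have $\varphi(x) = L(x)$, so $\alpha_{\varphi}(x) = g(x)/L(x)$ with $g(x) := \mathbb{E}[\varphi(\Poi(x))]$, which is a convergent power series (since $\varphi(k) = O(k)$), hence smooth, with $g'(x) = \sum_{k\geq 0}(\varphi(k+1)-\varphi(k)) e^{-x}\frac{x^k}{k!}$ (the differentiation identity already used for $\alpha_{\varphi}'$ in the proof of Proposition~\ref{prop:minInt}). Concavity of $\varphi$ gives $\varphi(k+1)-\varphi(k) \geq \varphi(\ell)-\varphi(\ell-1) \geq a$ for $k \leq \ell-1$, while $\varphi(k+1)-\varphi(k) = a$ for $k \geq \ell$, so every coefficient is $\geq a$ and therefore $g'(x) \geq a$. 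Jensen's inequality gives $g(x) \leq \varphi(x) = L(x)$, and $L(x) = \varphi(x) \geq \varphi(1) = 1 > 0$ with $L'(x) = a$, so
\[
\left(\frac{g}{L}\right)'(x) = \frac{g'(x)L(x) - a\,g(x)}{L(x)^2} \geq \frac{a\,L(x) - a\,g(x)}{L(x)^2} \geq 0 \qquad \text{on } (\ell,+\infty),
\]
and hence $\alpha_{\varphi}$ is nondecreasing on $[\ell,+\infty)$ by continuity.

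\textbf{Conclusion.} By Proposition~\ref{prop:minNotInf}, $\alpha_{\varphi} = \min_{x \in \mathbb{N}^*}\alpha_{\varphi}(x)$, and the monotonicity just established gives $\alpha_{\varphi}(x) \geq \alpha_{\varphi}(\ell)$ for every integer $x \geq \ell$; thus the minimum is attained in $\{1,\dots,\ell\}$, i.e.\ $\alpha_{\varphi} = \min_{x \in [\ell]}\alpha_{\varphi}(x)$. Each of these $\ell$ values is a finite sum via the closed form above, so the argmin can be evaluated numerically. I expect the only delicate part to be the index bookkeeping in the Poisson partial-sum manipulation giving the closed form; the monotonicity is then a short consequence of $g' \geq a$ together with Jensen's inequality.
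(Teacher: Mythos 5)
Your proof is correct, and the monotonicity step takes a genuinely different and substantially shorter route than the paper's. For the closed form, the paper splits the Poisson sum at $\ell$ and regroups terms by hand; your use of $\mathbb{E}[L(\Poi(x))]=L(x)$ for the affine function $L$, combined with the partial-sum identity $\sum_{k=0}^{\ell}(k-x)e^{-x}\frac{x^k}{k!}=-e^{-x}\frac{x^{\ell+1}}{\ell!}$, reaches the same expression with less bookkeeping; both are routine. The real divergence is in the monotonicity: the paper differentiates the closed-form expression for $\alpha_{\varphi}(x)$ explicitly and spends roughly a page of algebraic manipulation (telescoping the sums, invoking concavity termwise, and the identity $(x-\ell)\sum_{k=0}^{\ell-1}\frac{x^k}{k!}-\frac{x^{\ell}}{(\ell-1)!}=-\ell$ at $x=\ell$... actually for general $x$ in their chain) to conclude $\alpha_{\varphi}'\geq 0$ on $[\ell,+\infty)$. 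You instead observe that $g'(x)=e^{-x}\sum_{k\geq 0}(\varphi(k+1)-\varphi(k))\frac{x^k}{k!}\geq a$ because every increment of $\varphi$ is at least $a$ (by concavity below $\ell$ and by hypothesis above), while $L'=a$ and $g\leq L$ by Jensen, so $g'L\geq aL\geq ag=gL'$ and $(g/L)'\geq 0$. This completely bypasses the closed form for the monotonicity claim and is cleaner and arguably more robust (it isolates exactly which structural facts are used: slope lower bound plus Jensen). The concluding step, invoking Proposition \ref{prop:minNotInf} to restrict the minimum to $[\ell]$, matches the paper and introduces no circularity since that proposition depends only on Propositions \ref{prop:minInt} and \ref{prop:ConvergenceRate}. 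One cosmetic point: the derivative identity for $g$ is established in Proposition \ref{prop:PoiCon} rather than in the proof of Proposition \ref{prop:minInt}, so the citation should point there.
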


\begin{proof}
  One can compute a closed form value for $\alpha_{\varphi}(x)$ using the fact that $\varphi$ is linear from $\ell$:
  \begin{align*}
    \alpha_{\varphi}(x) &= \frac{e^{-x}}{\varphi(x)}\sum_{k=0}^{+\infty}\varphi(k)\frac{x^k}{k!} = \frac{e^{-x}}{\varphi(x)}\left(\sum_{k=0}^{\ell}\varphi(k)\frac{x^k}{k!} + \sum_{k=\ell+1}^{+\infty}\left(\varphi(\ell)+a(k-\ell)\right)\frac{x^k}{k!} \right)\\
    &= \frac{\varphi(\ell)-a\ell}{\varphi(x)} + \frac{e^{-x}}{\varphi(x)}\left(\sum_{k=0}^{\ell}\left(\varphi(k) - \varphi(\ell) + a\ell\right)\frac{x^k}{k!} + ax\sum_{k=\ell+1}^{+\infty}\frac{x^{k-1}}{(k-1)!} \right)\\
    &= \frac{\varphi(\ell)+a(x-\ell)}{\varphi(x)} + \frac{e^{-x}}{\varphi(x)}\left(\sum_{k=0}^{\ell}\left(\varphi(k) - \varphi(\ell) + a(\ell-x)\right)\frac{x^k}{k!} + ax\frac{x^{\ell}}{\ell!} \right) \ ,\\
  \end{align*}
  and thus we get:
  
  \[ \alpha_{\varphi}(x) = \frac{\varphi(\ell) + a(x-\ell)}{\varphi(x)} - \frac{e^{-x}}{\varphi(x)}\left(\sum_{k=0}^{\ell} \left(\varphi(\ell) + a(x-\ell) - \varphi(k)\right)\frac{x^k}{k!} - a\frac{x^{\ell+1}}{\ell!} \right) \ .\]

  Let us show that it is nondecreasing from $\ell$ to $+\infty$ by computing its derivative. Indeed, for $x \geq \ell$, we have that $\varphi(x) = \varphi(\ell) + a(x-\ell)$ and $\varphi'(x) = a$, so:

  \[
    \alpha_{\varphi}(x) = 1 - e^{-x}\left(\sum_{k=0}^{\ell} \frac{x^k}{k!} - \frac{1}{\varphi(x)}\left(\sum_{k=0}^{\ell} \varphi(k)\frac{x^k}{k!} + a\frac{x^{\ell+1}}{\ell!}\right) \right) \ .
  \]

  Thus:

  \begin{align*}
    \alpha_{\varphi}'(x) &= e^{-x}\left(\sum_{k=0}^{\ell} \frac{x^k}{k!} - \frac{1}{\varphi(x)}\left(\sum_{k=0}^{\ell} \varphi(k)\frac{x^k}{k!} + a\frac{x^{\ell+1}}{\ell!}\right) \right)\\
    &- e^{-x}\left(\sum_{k=0}^{\ell-1} \frac{x^k}{k!}  - \frac{1}{\varphi(x)}\left(\sum_{k=0}^{\ell-1} \varphi(k+1)\frac{x^k}{k!} + a(\ell+1)\frac{x^{\ell}}{\ell!}\right) + \frac{a}{\varphi(x)^2}\left(\sum_{k=0}^{\ell} \varphi(k)\frac{x^k}{k!} + a\frac{x^{\ell+1}}{\ell!}\right)\right)\\
    &= e^{-x}\left(\frac{x^{\ell}}{\ell!} - \frac{1}{\varphi(x)}\left(\sum_{k=0}^{\ell-1}(\varphi(k)- \varphi(k+1))\frac{x^k}{k!} + \varphi(\ell)\frac{x^{\ell}}{\ell!} +  (a(x-\ell)-a)\frac{x^{\ell}}{\ell!}\right)\right)\\
    &- e^{-x}\left(\frac{a}{\varphi(x)^2}\left(\sum_{k=0}^{\ell} \varphi(k)\frac{x^k}{k!} + a\frac{x^{\ell+1}}{\ell!}\right)\right)\\
    &= e^{-x}\left(\frac{x^{\ell}}{\ell!} + \frac{1}{\varphi(x)}\left(\sum_{k=0}^{\ell-1}(\varphi(k+1)- \varphi(k))\frac{x^k}{k!}\right) - \frac{\varphi(x)-a}{\varphi(x)}\frac{x^{\ell}}{\ell!}  - \frac{a}{\varphi(x)^2}\left(\sum_{k=0}^{\ell} \varphi(k)\frac{x^k}{k!} + a\frac{x^{\ell+1}}{\ell!}\right)\right)\\
    &= e^{-x}\left(\frac{1}{\varphi(x)}\left(\sum_{k=0}^{\ell-1}(\varphi(k+1)- \varphi(k))\frac{x^k}{k!}\right) + \frac{a}{\varphi(x)^2}\left(\varphi(x)\frac{x^{\ell}}{\ell!} - \sum_{k=0}^{\ell} \varphi(k)\frac{x^k}{k!} - a\frac{x^{\ell+1}}{\ell!}\right)\right) \ .\\
  \end{align*}
  
  If $a=0$, then it is nonnegative since $\varphi$ nondecreasing and nonnegative. Otherwise, suppose that $a > 0$. Then:

\begin{align*}
    \alpha_{\varphi}(x) &= \frac{ae^{-x}}{\varphi(x)^2} \left(\sum_{k=0}^{\ell-1}\left(\varphi(x)\frac{\varphi(k+1)- \varphi(k)}{a} - \varphi(k)\right)\frac{x^k}{k!} +\left(\varphi(x)-\varphi(\ell) - ax\right)\frac{x^{\ell}}{\ell!}\right)\\
    &\geq \frac{ae^{-x}}{\varphi(x)^2} \left(\sum_{k=0}^{\ell-1}\left(\varphi(x)- \varphi(k)\right)\frac{x^k}{k!} -a\frac{x^{\ell}}{(\ell-1)!}\right) \ ,
\end{align*}
since $\frac{\varphi(k+1)- \varphi(k)}{a} \geq \frac{\varphi(k+1)- \varphi(k)}{\varphi(\ell)-\varphi(\ell-1)} \geq 1$ by concavity of $\varphi$. Thus:
\begin{align*}
  \alpha_{\varphi}(x) &\geq \frac{ae^{-x}}{\varphi(x)^2} \left(\sum_{k=0}^{\ell-1}\left(\varphi(\ell)- \varphi(k)\right)\frac{x^k}{k!} +a\left((x-\ell)\sum_{k=0}^{\ell-1}\frac{x^k}{k!}-\frac{x^{\ell}}{(\ell-1)!}\right)\right) \ ,
  \end{align*}
but:
\[(x-\ell)\sum_{k=0}^{\ell-1}\frac{x^k}{k!}-\frac{x^{\ell}}{(\ell-1)!} = \ell\sum_{k=1}^{\ell}\frac{x^k}{k!} -\ell\sum_{k=0}^{\ell-1}\frac{x^k}{k!}-\frac{x^{\ell}}{(\ell-1)!} = - \ell \ ,\]
so:
 \begin{align*}
   \alpha_{\varphi}'(x) &\geq \frac{ae^{-x}}{\varphi(x)^2} \left(\sum_{k=0}^{\ell-1}\left(\varphi(\ell)- \varphi(k)\right)\frac{x^k}{k!} -a\ell \right)\\
   &\geq \frac{ae^{-x}}{\varphi(x)^2} \left(\sum_{k=0}^{\ell-1}\left(\varphi(\ell)- \varphi(k)\right)\frac{x^k}{k!} -(\varphi(\ell)-\varphi(\ell-1))\ell \right)\\
    &\geq \frac{ae^{-x}}{\varphi(x)^2} \left(\sum_{k=0}^{\ell-1}\left(\varphi(\ell)- \varphi(k)\right)\frac{x^k}{k!} - \sum_{k=0}^{\ell-1}(\varphi(\ell)-\varphi(\ell-1))\frac{x^k}{k!}\right) \quad \text{since $\frac{x^k}{k!} \geq \frac{\ell^k}{k!} \geq 1$ for $k \leq \ell$}\\
   &= \frac{ae^{-x}}{\varphi(x)^2} \left(\sum_{k=0}^{\ell-1}\left(\varphi(\ell-1)- \varphi(k)\right)\frac{x^k}{k!}\right) \geq 0  \quad \text{since $\varphi$ nondecreasing} \ .
 \end{align*}

 Thus, $\alpha_{\varphi}(x)$ is nondecreasing from $\ell$ to $+\infty$, and we get that $\alpha_{\varphi} = \min_{x \in [\ell]} \alpha_{\varphi}(x)$.
\end{proof}

\begin{prop}
  The Poisson concavity ratio $\alpha_{\varphi}$ is always greater than or equal to the curvature-dependent ratio defined in $\cite{SVW17}$: if $\varphi$ is linear from $m$ with slope $1-c = \varphi(m) - \varphi(m-1)$, then we have $\alpha_{\varphi} \geq 1 - ce^{-1}$.
  \label{prop:BetterRatio}
\end{prop}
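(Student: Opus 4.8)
The plan is to establish the pointwise inequality $\alpha_{\varphi}(x) \ge 1 - ce^{-1}$ for every integer $x \ge 1$; since $\alpha_{\varphi} = \inf_{x \in \mathbb{N}^*}\alpha_{\varphi}(x)$ by definition, this immediately yields $\alpha_{\varphi} \ge 1 - ce^{-1}$. Writing $X \sim \Poi(x)$, the pointwise claim is equivalent to $\varphi(x) - \mathbb{E}[\varphi(X)] \le ce^{-1}\varphi(x)$.

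The first step is to peel off the linear part of $\varphi$. Since $\varphi$ is linear from $m$, we have $\varphi(j)-\varphi(j-1) = 1-c$ for all $j \ge m$, hence $a := 1-c$ equals $\lim_j(\varphi(j)-\varphi(j-1))$. Then $\psi(k) := \varphi(k) - ak$ is nonnegative, nondecreasing and concave, with $\psi(0)=0$ and $\psi(1) = 1-a = c$; consequently $k \mapsto \psi(k)/k$ is nonincreasing, so $\psi(x) \le cx$ for $x \ge 1$. Because $\mathbb{E}[X] = x$, we get $\varphi(x)-\mathbb{E}[\varphi(X)] = \psi(x)-\mathbb{E}[\psi(X)]$, so it suffices to control the concavity gap for $\psi$.

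The core is the following lemma: for any nonnegative nondecreasing concave $\psi:\mathbb{N}\to\mathbb{R}_+$ with $\psi(0)=0$ and any integer $x \ge 1$, $\mathbb{E}[\psi(\Poi(x))] \ge (1-e^{-1})\psi(x)$. To prove it I would decompose $\psi$ into elementary pieces: with $w_j = \psi(j)-\psi(j-1)$ (nonnegative, nonincreasing), $a' := \lim_j w_j \ge 0$ and $\Delta_i := w_i - w_{i+1} \ge 0$, one has $\psi(k) = a'k + \sum_{i\ge1}\Delta_i\min\{k,i\}$. By linearity of expectation (Tonelli, all terms nonnegative) it then suffices to treat each piece: the linear piece is immediate since $\mathbb{E}[\Poi(x)] = x \ge (1-e^{-1})x$, so the real content is the hinge bound $\mathbb{E}[\min\{\Poi(x),i\}] \ge (1-e^{-1})\min\{x,i\}$ for integers $i \ge 1$. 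Setting $\mu := \min\{x,i\} \ge 1$, we have $\mathbb{E}[\min\{\Poi(x),i\}] \ge \mathbb{E}[\min\{\Poi(x),\mu\}] = \mu - \mathbb{E}[(\mu-\Poi(x))^+]$; since $\lambda \mapsto \mathbb{E}[(\mu-\Poi(\lambda))^+]$ is nonincreasing by stochastic monotonicity of the Poisson family and $x \ge \mu$, this is $\ge \mu - \mathbb{E}[(\mu-\Poi(\mu))^+]$. A short telescoping computation gives $\mathbb{E}[(\mu-\Poi(\mu))^+] = e^{-\mu}\mu^{\mu}/(\mu-1)! = \mu\,\mathbb{P}(\Poi(\mu)=\mu)$, and $\mathbb{P}(\Poi(\mu)=\mu) = e^{-\mu}\mu^{\mu}/\mu!$ is decreasing in $\mu$ (the ratio of consecutive terms is $e^{-1}(1+1/\mu)^{\mu} < 1$) with value $e^{-1}$ at $\mu=1$; hence $\mathbb{E}[(\mu-\Poi(\mu))^+] \le e^{-1}\mu$, which proves the hinge bound.

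To finish, $\varphi(x)-\mathbb{E}[\varphi(X)] = \psi(x)-\mathbb{E}[\psi(X)] \le e^{-1}\psi(x) \le ce^{-1}\varphi(x)$: the last inequality rearranges to $(1-c)\psi(x) \le cax$, i.e. $\psi(x)\le cx$ when $a = 1-c>0$ (which we proved), while if $c=1$ then $a=0$, $\varphi=\psi$, and the bound is immediate. Thus $\alpha_{\varphi}(x) \ge 1 - ce^{-1}$ for all integers $x \ge 1$, hence $\alpha_{\varphi} \ge 1 - ce^{-1}$. I expect the main obstacle to be the hinge sub-lemma — specifically recognizing the exact evaluation $\mathbb{E}[(\mu-\Poi(\mu))^+] = \mu\,\mathbb{P}(\Poi(\mu)=\mu)$ together with the monotonicity $\mathbb{P}(\Poi(\mu)=\mu) \le e^{-1}$; the remaining ingredients (the hinge decomposition of $\psi$ and the peeling of the linear part) are routine bookkeeping, and it is the value $e^{-1}$ attained at $\mu=1$ that makes the bound come out exactly $1-ce^{-1}$, matching the $\min\{j,1\}$-type extremal instances.
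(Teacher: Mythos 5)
Your proof is correct, and it takes a genuinely different route from the paper's. The paper's argument first invokes Proposition~\ref{prop:phiLinear} to restrict the infimum to integers $\ell \in [m]$, then minorizes $\varphi$ by the two-piece linear function that agrees with it at $\ell$, computes $\alpha$ for that extremal function in closed form, and finishes with $\frac{\ell^{\ell}}{\ell!}e^{-\ell} \leq e^{-1}$. You instead prove the pointwise bound $\alpha_{\varphi}(x) \geq 1 - ce^{-1}$ at every integer $x \geq 1$ directly: you peel off the asymptotic linear part (setting $\psi = \varphi - (1-c)\,\mathrm{id}$, which is indeed nonnegative, nondecreasing, concave, with $\psi(1) = c$ and $\psi(x) \leq cx$ by the chord-slope argument), decompose $\psi$ as a nonnegative combination of hinges $\min\{\cdot,i\}$ plus a linear term, and reduce everything to the single inequality $\mathbb{E}[\min\{\Poi(x),i\}] \geq (1-e^{-1})\min\{x,i\}$, which you get from the identity $\mathbb{E}[(\mu-\Poi(\mu))^+] = \mu\,\mathbb{P}(\Poi(\mu)=\mu)$ together with stochastic monotonicity of the Poisson family and the fact that $e^{-\mu}\mu^{\mu}/\mu!$ decreases from its value $e^{-1}$ at $\mu=1$. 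Both arguments ultimately rest on that same extremal quantity, but yours is more modular and self-contained: it isolates the clean general lemma $\mathbb{E}[\psi(\Poi(x))] \geq (1-e^{-1})\psi(x)$ for any nonnegative nondecreasing concave $\psi$ with $\psi(0)=0$ (the $c=1$ case of the proposition), avoids the closed-form expressions and derivative estimates of Proposition~\ref{prop:phiLinear} entirely, and re-derives only the fragment of Proposition~\ref{prop:lCover} that is actually needed. All the individual steps check out, including the telescoping in the hinge decomposition, the Tonelli interchange (all terms nonnegative), and the final rearrangement $e^{-1}\psi(x) \leq ce^{-1}\varphi(x) \iff (1-c)\psi(x) \leq c(1-c)x$, with the degenerate case $c=1$ handled separately as you note.
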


\begin{proof}
  Note that by Proposition \ref{prop:SubCurv}, the curvature of $C^{\varphi}$ is equal to $c$, so the efficiency of the algorithm described in $\cite{SVW17}$ is indeed $ 1 - ce^{-1}$.
  Thanks to Proposition \ref{prop:phiLinear}, we have that $\alpha_{\varphi} = \min_{x \in [\ell]} \alpha_{\varphi}(x)$, so we only have to show that:
  \[\min_{\ell \in [m]} \alpha_{\varphi}(\ell) \geq 1 - ce^{-1} \ .\]

  Let us denote by $\varphi^{(\ell,a)}$ the function which is equal to $\varphi$ for $k \leq \ell$ and linear from $\ell$ with nonnegative coefficient $a$: $\forall k \geq \ell, \varphi^{(\ell,a)}(k) = \varphi(\ell) + a(k-\ell)$. Note that we ask that $0 \leq a \leq \varphi(\ell)-\varphi(\ell-1)$ in order to $\varphi^{(\ell,a)}$ to be nondecreasing concave, and $\ell \geq 1$.

This is done in two steps:

\begin{enumerate}
\item Let $1 \leq \ell \leq m$, then:
  \[\alpha_{\varphi}(\ell) = \frac{\mathbb{E}[\varphi(\Poi(\ell))]}{\varphi(\ell)} = \frac{\mathbb{E}[\varphi(\Poi(\ell))]}{\varphi^{(\ell,1-c)}(\ell)} \geq \frac{\mathbb{E}[\varphi^{(\ell,1-c)}(\Poi(\ell))]}{\varphi^{(\ell,1-c)}(\ell)} = \alpha_{\varphi^{(\ell,1-c)}}(\ell) \ , \]
  since $\varphi^{(\ell,1-c)}(x) \leq \varphi(x)$ for all $x$. Note that we have $\varphi(\ell)-\varphi(\ell-1) \geq \varphi(m)-\varphi(m-1) = 1-c$ by concavity of $\varphi$. So, we only have to show that for all $1 \leq \ell \leq m$, we have $\alpha^{(\ell,1-c)} := \alpha_{\varphi^{(\ell,1-c)}}(\ell) \geq 1 - ce^{-1}$.

\item Let us show that $\alpha^{(\ell,1-c)} := \alpha_{\varphi^{(\ell,1-c)}}(\ell) \geq 1 - ce^{-1}$ for $1 \leq \ell \leq m$.

  Using the closed-form expression of Proposition \ref{prop:phiLinear} on $\varphi^{(\ell,1-c)}$ evaluated at $\ell$, one gets:
  \[ \alpha^{(\ell,1-c)} = \alpha_{\varphi^{(\ell,1-c)}}(\ell) = 1 - e^{-\ell}\left(\sum_{k=0}^{\ell-1}\left(\frac{\varphi(\ell)-\varphi(k)}{\varphi(\ell)}\right)\frac{\ell^k}{k!}  - \frac{1-c}{\varphi(\ell)}\frac{\ell^{\ell+1}}{\ell!} \right) \ .\]
  
The worst case occurs when $\varphi^{(\ell,1-c)}$ is linear between $1$ and $\ell$, which we call $\varphi_{\text{lin}}^{(\ell,1-c)}$. Indeed, if we call $b:=\frac{\varphi(\ell)-1}{\ell-1}$, then for $1 \leq k \leq \ell$, we have that $\varphi_{\text{lin}}^{(\ell,1-c)}(k) = 1 + b(k-1)$. But:

\[\sum_{k=0}^{\ell-1}\left(\frac{\varphi(\ell)-\varphi(k)}{\varphi(\ell)}\right)\frac{\ell^k}{k!} \leq 1 + \sum_{k=1}^{\ell-1}\left(\frac{\varphi(\ell)- \left(1+b(k-1)\right)}{\varphi(\ell)}\right)\frac{\ell^k}{k!} \ ,\]
since $\varphi(k) \geq 1 + b(k-1)$, because $\frac{\varphi(k)-\varphi(1)}{k-1} \geq \frac{\varphi(\ell)-\varphi(1)}{\ell-1} = b$ by concavity of $\varphi$. In that case, the expression can be simplified:

\begin{align*}
  \alpha^{(\ell,1-c)} \geq \alpha_{\varphi_{\text{lin}}^{(\ell,1-c)}}(\ell) &=  1-e^{-\ell}\left(1+ \sum_{k=1}^{\ell-1}\left(\frac{b(\ell-k)}{\varphi(\ell)}\right)\frac{\ell^k}{k!} -\frac{1-c}{\varphi(\ell)}\frac{\ell^{\ell+1}}{\ell!}\right)\\
  &= 1-e^{-\ell}\left(1 + \frac{b\ell}{\varphi(\ell)}\sum_{k=1}^{\ell-1}\frac{\ell^k}{k!} - \frac{b\ell}{\varphi(\ell)}\sum_{k=1}^{\ell-1}\frac{\ell^{k-1}}{(k-1)!} -\frac{1-c}{\varphi(\ell)}\frac{\ell^{\ell+1}}{\ell!}\right)\\
  &= 1-\frac{e^{-\ell}}{\varphi(\ell)}\left(\varphi(\ell) + b\ell\left(\frac{\ell^{\ell-1}}{(\ell-1)!} - 1\right) -(1-c)\frac{\ell^{\ell+1}}{\ell!}\right)\\
     &= 1-\frac{e^{-\ell}}{\varphi(\ell)}\left(1 + b(\ell-1) + b\left(\frac{\ell^{\ell}}{(\ell-1)!} - \ell\right) -(1-c)\frac{\ell^{\ell+1}}{\ell!}\right)\\
  &= 1 - e^{-\ell}\frac{1-b + (b-(1-c))\frac{\ell^{\ell+1}}{\ell!}}{\varphi(\ell)} \ .
\end{align*}

We have also that $b \geq \varphi(\ell)-\varphi(\ell-1) \geq 1-c$ since $\varphi$ concave. As a function of $(b-(1-c))$ for $c$ fixed, we get $g(x) := 1 - e^{-\ell}\frac{c + x\left(\frac{\ell^{\ell+1}}{\ell!}-1\right)}{1 + (x+(1-c))(\ell-1)}$. In particular, we have that $\alpha_{\varphi_{\text{lin}}^{(\ell,(1-c))}}(\ell) = g(b-(1-c))$, since $\varphi(\ell) = 1 + b(\ell-1)$. We have that $g'(x) = -e^{-\ell}\frac{\ell\left(\frac{\ell^{\ell}}{\ell!}-1\right)+(1-c)\frac{\ell^{\ell+1}}{\ell!}(\ell-1)}{(1 + (x+(1-c))(\ell-1))^2} \leq 0$, so $g$ is nonincreasing: it is thus enough to show that $g(c) \geq 1-ce^{-1}$ to get the result, since $\alpha^{(\ell,1-c)} \geq g(b-(1-c)) \geq g(c) \geq  1-ce^{-1}$. But:

\[ g(c) = 1 - \frac{c\frac{\ell^{\ell+1}}{\ell!}}{1 + \ell-1}e^{-\ell} = 1 - c\frac{\ell^{\ell}}{\ell!}e^{-\ell} \geq 1-ce^{-1} \ , \]

since $\frac{\ell^\ell}{\ell!}e^{-\ell}$ is a decreasing sequence.
\end{enumerate}
\end{proof}

\begin{prop}
  Let $F(x) := \mathbb{E}_{X \sim x}[C^{\varphi}(X)]$ for $x \in \set{0,1}^m$. We have an explicit formula for $F$:
  \[  F(x) = \sum_{a = 1}^n \sum_{k=0}^{m} \Big[\frac{1}{m+1}\sum_{\ell = 0}^{m} \omega_{m+1}^{-\ell k}\prod_{j \in [m] : a \in  T_j}(1 +(\omega_{m+1}^{\ell}-1)x_j)\Big]\varphi(k) \text{ with } \omega_{m+1} := \exp(\frac{2i\pi}{m+1})\ . \]
  Thus, $F$ is computable in polynomial time in $n$ and $m$.
  \label{prop:Fpoly}
\end{prop}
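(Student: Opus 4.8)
The plan is to use linearity of expectation to break $F$ up element by element and then recognize each contribution as an expectation over a Poisson–binomial random variable whose probability mass function is the coefficient sequence of an explicit polynomial of degree at most $m$. Concretely, since $C^{\varphi}(X) = \sum_{a \in [n]} w_a \varphi(\abs{X}_a)$ with the $X_j \sim \Ber(x_j)$ independent, we have $F(x) = \sum_{a\in[n]} w_a\,\mathbb{E}[\varphi(\abs{X}_a)] = \sum_{a\in[n]} w_a \sum_{k=0}^m \varphi(k)\, \mathbb{P}(\abs{X}_a = k)$, where $\abs{X}_a = \sum_{j \in [m]: a \in T_j} X_j$ is a sum of at most $m$ independent Bernoulli variables and is therefore supported on $\{0,1,\dots,m\}$. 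Its probability generating polynomial is $g_a(z) := \mathbb{E}[z^{\abs{X}_a}] = \prod_{j \in [m]: a \in T_j}\bigl(1 + (z-1)x_j\bigr)$, a polynomial in $z$ of degree at most $\abs{\set{j : a \in T_j}} \le m$ whose coefficient of $z^k$ is exactly $\mathbb{P}(\abs{X}_a = k)$.

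The next step is to extract those coefficients by a roots-of-unity filter (an inverse discrete Fourier transform). Writing $\omega := \omega_{m+1} = \exp(2i\pi/(m+1))$ and using the character orthogonality relation $\frac{1}{m+1}\sum_{\ell=0}^m \omega^{\ell(r-k)} = \mathbf{1}[\,r \equiv k \pmod{m+1}\,]$, one obtains for every $0 \le k \le m$
\[
\mathbb{P}(\abs{X}_a = k) \;=\; [z^k]\, g_a(z) \;=\; \frac{1}{m+1}\sum_{\ell=0}^m \omega^{-\ell k}\, g_a(\omega^{\ell}),
\]
where the reduction of the congruence $r \equiv k \pmod{m+1}$ to honest equality $r = k$ is valid precisely because every monomial appearing in $g_a$ has exponent in $\{0,\dots,m\}$, i.e.\ $\deg g_a \le m$ (this is exactly why $m+1$ evaluation points suffice). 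Substituting into the decomposition of $F$ and expanding $g_a(\omega^{\ell}) = \prod_{j \in [m]: a \in T_j}(1 + (\omega^{\ell}-1)x_j)$ yields precisely the claimed closed form.

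It then remains to argue polynomial-time computability, which is where the only real delicacy lies: the formula is written over the irrational numbers $\omega^{\ell}$, so one must say carefully how the intermediate complex quantities are represented. I would sidestep this by observing that the roots of unity need not actually be evaluated: for each $a$, the polynomial $g_a$ can be expanded in exact rational arithmetic by multiplying its at most $m$ linear factors $(1 - x_j + x_j z)$ together one at a time, at a cost of $O(m^2)$ operations on rationals whose numerators and denominators are products of at most $m$ input numbers and hence have bit-length polynomial in the input size; the coefficient of $z^k$ in the result is $\mathbb{P}(\abs{X}_a = k)$, after which $F(x) = \sum_{a\in[n]} w_a \sum_{k=0}^m \varphi(k)\,\mathbb{P}(\abs{X}_a = k)$ is a polynomially long sum of polynomially bounded numbers, using the assumed polynomial-bit access to the values $\varphi(k)$. (Alternatively, one can evaluate the displayed roots-of-unity formula numerically to precision polynomial in the input size; since all quantities are bounded by $1$ in modulus this also suffices.) Either way $F$ is computable in time polynomial in $n$ and $m$, and this bookkeeping around the complex arithmetic is the main obstacle to making the statement fully rigorous.
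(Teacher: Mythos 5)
Your proposal is correct and follows essentially the same route as the paper: linearity of expectation, recognition of $\abs{X}_a$ as Poisson–binomial, and the discrete-Fourier/roots-of-unity expression for its probability mass function (which the paper simply cites from the literature rather than deriving via the generating polynomial as you do). Your additional care about representing the complex roots of unity — or avoiding them entirely by expanding the generating polynomial in exact rational arithmetic — is a sound refinement of the polynomial-time claim that the paper leaves implicit.
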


\begin{proof}
  Recall that $C^{\varphi}(S) = \sum_{a=1}^n C_a^{\varphi}(S)$, so by linearity of expectation we can focus on $\mathbb{E}_{X \sim x}[C_a^{\varphi}(X)]$. But $C_a^{\varphi}(X) = \varphi(\abs{X}_a)$ where $\abs{X}_a = \abs{\set{ i \in [m] : X_i = 1 \text{ and } a \in T_i}} \in [0,m]$. Thus:
  \[ \mathbb{E}_{X \sim x}[C_a^{\varphi}(X)] = \sum_{k=0}^{m} \mathbb{P}_{X \sim x}(\abs{X}_a=k)\varphi(k)\ .\]

  It remains to compute the distribution of $\abs{X}_a$. But $\abs{X}_a = \sum_{i \in [m] : a \in T_i} X_i$ and $X_i \sim \Ber(x_i)$. Thus, $\abs{X}_a \sim \Poi\Bin((x_i)_{i \in [m] : a \in T_i})$, which is known as the Poisson binomial law. Thanks to \cite{FW10}, we have that:
  \[ \mathbb{P}_{X \sim x}(\abs{X}_a=k) = \frac{1}{m+1}\sum_{\ell = 0}^{m} \omega_{m+1}^{-\ell k}\prod_{j \in [m] : a \in  T_j}(1 +(\omega_{m+1}^{\ell}-1)x_j)\ ,\]
  where $\omega_{m+1} := \exp(\frac{2i\pi}{m+1})$, and the result is proved.
\end{proof}

\begin{prop}
  We have that
  \[ \abs{\mathbb{E}[\varphi(\Bin(n,x/n))] - \mathbb{E}[\varphi(\Poi(x))]} \leq \frac{x \varphi(n)}{2n} + \frac{x^{n+1}}{n!} \ .\]
  In particular when $\varphi(n) = o(n)$:
  \[ \lim_{n \rightarrow  \infty} \mathbb{E}[\varphi(\Bin(n,x_{\varphi}/n))] = \mathbb{E}[\varphi(\Poi(x_{\varphi}))] = \alpha_{\varphi}\varphi(x_{\varphi}) \ . \]
  \label{prop:UnboundBinPoi}
\end{prop}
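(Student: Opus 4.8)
The plan is to reduce to a one-sided estimate and then control two distinct error sources by coupling the $n$ summands. Since $\Bin(n,x/n)$ is a sum of $n$ i.i.d.\ $\Ber(x/n)$'s and $\Poi(x)$ is a sum of $n$ i.i.d.\ $\Poi(x/n)$'s, Lemma~\ref{lem:ConvexOrder} applied with all $p_i = x/n$ gives $\mathbb{E}[\varphi(\Bin(n,x/n))] \geq \mathbb{E}[\varphi(\Poi(x))]$, so it suffices to bound from above the nonnegative quantity $D := \mathbb{E}[\varphi(\Bin(n,x/n))] - \mathbb{E}[\varphi(\Poi(x))]$. Setting $w_j := \varphi(j) - \varphi(j-1)$, which by concavity is nonincreasing with $0 \leq w_j \leq w_1 = 1$ and satisfies $\varphi(k) = \sum_{j=1}^{k} w_j$, Abel summation gives
\[ D \;=\; \sum_{j \geq 1} w_j\big(\mathbb{P}(\Bin(n,x/n) \geq j) - \mathbb{P}(\Poi(x) \geq j)\big). \]

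Next I would split this sum at $j = n$. For $j > n$ one has $\mathbb{P}(\Bin(n,x/n) \geq j) = 0$, so those terms sum to $-\sum_{j > n} w_j\,\mathbb{P}(\Poi(x) \geq j) = -\mathbb{E}\big[(\varphi(\Poi(x)) - \varphi(n))^{+}\big]$; they are nonpositive (hence can be dropped for the upper bound), but in magnitude they are at most $\mathbb{E}[(\Poi(x) - n)^{+}] \leq x\,\mathbb{P}(\Poi(x) \geq n) \leq x^{n+1}/n!$, using $\varphi(k) - \varphi(n) \leq k - n$ for $k \geq n$ and the elementary tail bound $\mathbb{P}(\Poi(x) \geq m) = e^{-x}\sum_{k \geq m} x^k/k! \leq x^m/m!$; this is the origin of the second term. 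For $1 \leq j \leq n$ I would estimate the per-level discrepancy via the natural coupling of the summands: realize $\Poi(x) = \sum_i Y_i$ with $Y_i \sim \Poi(x/n)$ i.i.d., set $\bar{B} := \sum_i \min(Y_i,1) \sim \Bin(n, 1 - e^{-x/n})$, which is $\leq \Poi(x)$ pointwise, and enlarge the probability space to carry $B \sim \Bin(n,x/n)$ with $B \geq \bar{B}$ (possible since $1 - e^{-x/n} \leq x/n$). The discrepancy at level $j$ is then controlled by the probability that some coordinate either has $Y_i \geq 2$ or is ``boosted'' from $0$ to $1$, each of which has per-coordinate probability $O((x/n)^2)$; weighting these level-$j$ contributions by $w_j$ and using the sharper concavity bound $w_j \leq \varphi(j)/j$ (instead of $w_j \leq 1$), together with the Poisson size-biasing identity $\mathbb{E}[\Poi(x)\,g(\Poi(x))] = x\,\mathbb{E}[g(\Poi(x)+1)]$, Jensen's inequality for the concave $\varphi$, and monotonicity of $\varphi$ (to replace arguments near $x$ by $n$), collapses the total to a bound of order $x\varphi(n)/n$, which yields the first term.

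Finally, the ``in particular'' is immediate: with $x = x_\varphi$ fixed, $x_\varphi^{n+1}/n! \to 0$ unconditionally and $x_\varphi\varphi(n)/(2n) \to 0$ precisely because $\varphi(n) = o(n)$, so $\mathbb{E}[\varphi(\Bin(n,x_\varphi/n))] \to \mathbb{E}[\varphi(\Poi(x_\varphi))]$; and $\mathbb{E}[\varphi(\Poi(x_\varphi))] = \alpha_\varphi(x_\varphi)\varphi(x_\varphi) = \alpha_\varphi\varphi(x_\varphi)$ by definition of $\alpha_\varphi(\cdot)$ and the choice $x_\varphi \in \text{argmin}_{x \in \mathbb{N}^*}\alpha_\varphi(x)$. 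The step I expect to be the main obstacle is precisely the per-level discrepancy estimate for $j \leq n$: a naive application of the coupling only yields the first term with an extra factor of $x$ (i.e.\ $x^2\varphi(n)/n$), and trimming it down to $x\varphi(n)/(2n)$ requires carefully combining $w_j \leq \varphi(j)/j$ with the exact $\Poi(x/n)$-mass on $\{\geq 2\}$ and handling the dependence between the ``boost'' indicators and $\bar{B}$ in the coupling.
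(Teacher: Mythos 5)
Your opening moves are sound, and the one\hyp{}sidedness $\mathbb{E}[\varphi(\Bin(n,x/n))] \geq \mathbb{E}[\varphi(\Poi(x))]$ obtained from Lemma~\ref{lem:ConvexOrder} with all $p_i = x/n$ is a nice observation that the paper's proof does not use. But the argument has a genuine gap exactly where you flag it: the estimate $\sum_{j=1}^{n} w_j\bigl(\mathbb{P}(\Bin(n,x/n)\geq j)-\mathbb{P}(\Poi(x)\geq j)\bigr) \leq \frac{x\varphi(n)}{2n}+\frac{x^{n+1}}{n!}$ is never established. Your coupling controls each level-$j$ discrepancy by $\mathbb{P}(\text{some coordinate is bad}) = O(x^2/n)$ uniformly in $j$, and summing the weights $w_j$ over $j\leq n$ gives $\varphi(n)\cdot O(x^2/n)$ --- off by a factor of $x$, as you acknowledge. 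The proposed repair does not close this. In particular, $w_j \leq \varphi(j)/j$ points the wrong way on the range you need: for $j\leq n$ one has $\varphi(j)/j \geq \varphi(n)/n$ (and $w_1 = 1$ while $\varphi(n)/n$ can be as small as $1/n$, e.g.\ for $\varphi(j)=\min\{j,1\}$), so it cannot be used to pull out a factor $\varphi(n)/n$ per level. The remaining tools (size-biasing, Jensen, monotonicity) are named but never combined into a derivation. Structurally, a survival-function decomposition plus monotone coupling naturally yields Wasserstein-type bounds of order $x^2/n$ with weight $\max_j w_j = 1$, whereas the stated inequality pairs $\varphi(n)=\max_{k\leq n}\varphi(k)$ with a single factor $x/(2n)$, i.e.\ it is a total-variation-type statement. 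Note also that proving a clean $O(x^2/n)$ bound, while enough for the limit in the second display, does \emph{not} imply the stated inequality: for $\varphi(j)=\min\{j,1\}$ and $x=3$, $\frac{9}{2n} \not\leq \frac{3}{2n}+\frac{3^{n+1}}{n!}$ for large $n$. (A minor further confusion: since the $j>n$ terms in your Abel sum are nonpositive and dropped, your decomposition does not actually generate the second term $x^{n+1}/n!$ at all.)

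For comparison, the paper's proof is a short direct computation at the level of probability mass functions rather than tails: it invokes the Barbour--Hall bound $\Delta(\Bin(n,x/n),\Poi(x)) \leq \frac{x}{2n}$ on the total variation distance, writes $\abs{\mathbb{E}[\varphi(B)]-\mathbb{E}[\varphi(P)]} \leq \sum_{k}\varphi(k)\abs{\mathbb{P}(B=k)-\mathbb{P}(P=k)}$, bounds $\varphi(k)\leq\varphi(n)$ on $k\leq n$ (where the binomial is supported) against the total variation distance, and handles $k>n$ via $\varphi(k)\leq k$ and the Poisson tail bound $e^{-x}\sum_{k\geq n}x^k/k!\leq x^n/n!$ --- producing exactly the two terms. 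If you want to keep your route, the cleanest fix is to retain the convex-order one-sidedness and then switch from the cdf-level coupling to the pmf-level ($\ell_1$) comparison, importing the total variation bound; otherwise you need a per-level discrepancy estimate that genuinely decays in $j$, which your current coupling does not provide.
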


\begin{proof}
  Thanks to \cite{Barbour84, TF19}, we have that the total variation distance between $\Bin(n,x/n)$ and $\Poi(x)$ is bounded in the following way:
  \[ \Delta(\Bin(n,x/n),\Poi(x)) \leq \frac{1 - e^{-x}}{2x} n \cdot \Big(\frac{x}{n}\Big)^2 \leq \frac{x}{2n}\ .\]
  Thus with $B \sim \Bin(n,x/n)$ and $P \sim \Poi(x)$:
  \begin{equation}
    \begin{aligned}
      \abs{\mathbb{E}[\varphi(B)] - \mathbb{E}[\varphi(P)]} &=&&  \abs{\sum_{k=0}^{+\infty}\varphi(k)\mathbb{P}(B=k) - \sum_{k=0}^{+\infty}\varphi(k)\mathbb{P}(P=k)}\\
      &=&&  \abs{\sum_{k=0}^{+\infty}\varphi(k)(\mathbb{P}(B=k) - \mathbb{P}(P=k))}\\
      &\leq&& \sum_{k=0}^{+\infty}\varphi(k)\abs{\mathbb{P}(B=k) - \mathbb{P}(P=k)}\\
      &\leq&& \varphi(n)\Delta(\Bin(n,x/n),\Poi(x)) + \sum_{k=n+1}^{+\infty}\varphi(k)\mathbb{P}(P=k)\\
      &\leq&&\frac{x \varphi(n)}{2n}  + e^{-x}\sum_{k=n+1}^{+\infty}k\frac{x^k}{k!} \quad \text{since } \varphi(k) \leq k\\
      &=&& \frac{x \varphi(n)}{2n}  + xe^{-x}\sum_{k=n}^{+\infty}\frac{x^k}{k!}\\
      &\leq&& \frac{x \varphi(n)}{2n}  + \frac{x^{n+1}}{n!} \underset{n \rightarrow \infty}{\rightarrow} 0 \text{ when } \varphi(n) = o(n)\ ,
    \end{aligned}
  \end{equation}
  by a standard upper bound on the remainder of the exponential series.

\end{proof}

\begin{prop}
  The function $g : x \mapsto \mathbb{E}[\varphi(\Poi(x))]$ on $\mathbb{R}^+$ is $\mathcal{C}^{\infty}$ nondecreasing concave.
  \label{prop:PoiCon}
\end{prop}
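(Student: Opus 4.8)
The plan is to write $g$ as an explicit power series and differentiate it twice term by term. Since $\Poi(x)$ is supported on $\mathbb{N}$ and $\varphi$ is only evaluated at integers here, we have
\[ g(x) = \mathbb{E}[\varphi(\Poi(x))] = e^{-x}\sum_{k=0}^{\infty}\varphi(k)\frac{x^k}{k!} = e^{-x}h(x), \qquad h(x) := \sum_{k=0}^{\infty}\varphi(k)\frac{x^k}{k!} . \]
First I would record the elementary bound $\varphi(k)\le k$ for all $k\in\mathbb{N}$, which follows from concavity and normalization since $\frac{\varphi(k)-\varphi(0)}{k-0}\le\frac{\varphi(1)-\varphi(0)}{1-0}=1$. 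Consequently the series defining $h$ is dominated on any bounded interval by $\sum_{k\ge 0}k\frac{x^k}{k!}=xe^x$, so $h$ converges for every $x\in\mathbb{R}_+$ and is an entire (in particular $\mathcal{C}^{\infty}$) function. Hence $g=e^{-x}h$ is $\mathcal{C}^{\infty}$ on $\mathbb{R}_+$, and since a power series with infinite radius of convergence may be differentiated term by term, the same is true for $g$.

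Next I would compute $g'$. Differentiating term by term and using $\varphi(0)=0$ to reindex,
\[ g'(x) = e^{-x}\Big(-\sum_{k\ge 0}\varphi(k)\frac{x^k}{k!} + \sum_{k\ge 1}\varphi(k)\frac{x^{k-1}}{(k-1)!}\Big) = e^{-x}\sum_{j\ge 0}\big(\varphi(j+1)-\varphi(j)\big)\frac{x^j}{j!}. \]
Every coefficient $\varphi(j+1)-\varphi(j)$ is nonnegative because $\varphi$ is nondecreasing, so $g'\ge 0$ and $g$ is nondecreasing. (Alternatively, for $y\ge x$ one can use that $\Poi(y)$ has the same law as $\Poi(x)+\Poi(y-x)$ with independent summands, together with the monotonicity of $\varphi$.)

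Finally I would differentiate once more, applying the same index shift to the series for $g'$:
\[ g''(x) = e^{-x}\Big(-\sum_{j\ge 0}\big(\varphi(j+1)-\varphi(j)\big)\frac{x^j}{j!} + \sum_{j\ge 1}\big(\varphi(j+1)-\varphi(j)\big)\frac{x^{j-1}}{(j-1)!}\Big) = e^{-x}\sum_{i\ge 0}\big(\varphi(i+2)-2\varphi(i+1)+\varphi(i)\big)\frac{x^i}{i!}. \]
By concavity of $\varphi$ each second difference satisfies $\varphi(i+2)-2\varphi(i+1)+\varphi(i)\le 0$, so $g''\le 0$ and $g$ is concave, which completes the proof. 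There is no genuine obstacle here: the argument is a direct computation, and the only point deserving a (routine) justification is the term-by-term differentiation, which is legitimate because $h$ has infinite radius of convergence; the one thing to be careful about is keeping the index shifts aligned and invoking $\varphi(0)=0$ at the right moment.
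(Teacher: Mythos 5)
Your proof is correct and follows essentially the same route as the paper: write $g(x)=e^{-x}\sum_{k\ge 0}\varphi(k)\frac{x^k}{k!}$, justify smoothness via the bound $\varphi(k)\le k$, and read off the signs of $g'$ and $g''$ from the first and second differences of $\varphi$. The only cosmetic difference is that the paper phrases the second derivative as the first-derivative computation applied to $\psi(k)=\varphi(k+1)-\varphi(k)$, which is exactly your second-difference formula.
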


\begin{proof}
  Since we have that $0 \leq \varphi(k) \leq k$ for $k \in \mathbb{N}$, in particular $g(x) = e^{-x}\sum_{k=0}^{+\infty} \varphi(k)\frac{x^k}{k!}$ is $\mathcal{C}^{\infty}$. It is thus enough to compute its first and second derivatives:

  \begin{equation}
    \begin{aligned}
      g'(x) &=&& -e^{-x}\sum_{k=0}^{+\infty}\varphi(k)\frac{x^k}{k!}+ e^{-x}\sum_{k=1}^{+\infty}\varphi(k)k\frac{x^{k-1}}{k!}\\
      &=&& -e^{-x}\sum_{k=0}^{+\infty}\varphi(k)\frac{x^k}{k!}+ e^{-x}\sum_{k=0}^{+\infty}\varphi(k+1)\frac{x^{k}}{k!}\\
      &=&& e^{-x}\sum_{k=0}^{+\infty}(\varphi(k+1) -\varphi(k))\frac{x^k}{k!} \ .
    \end{aligned}
  \end{equation}
  But $\varphi(k+1) -\varphi(k) \geq 0$ since $\varphi$ nondecreasing, so $g'(x) \geq 0$ and $g$ is nondecreasing.

  The calculus of $g''$ is the same where we replace $\varphi$ by $\psi(k) := \varphi(k+1) -\varphi(k)$ which is a nonincreasing function by concavity of $\varphi$. Thus:
  \[g''(x) = e^{-x}\sum_{k=0}^{+\infty}(\psi(k+1) -\psi(k))\frac{x^k}{k!} \leq 0\ .\]
  since $\psi(k+1) -\psi(k) \leq 0$, and so $g$ is concave.
\end{proof}

\begin{prop}
  The function $g_q : n \mapsto \mathbb{E}[\varphi(\Bin(n,q))]$ defined on $\mathbb{N}$ is nondecreasing concave. As a consequence, one can uses Jensen's inequality on the piecewise linear extension of $g_q$ which is also continuous.
  \label{prop:BinCon}
\end{prop}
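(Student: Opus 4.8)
The plan is to prove monotonicity and concavity of $g_q$ by direct couplings with Bernoulli increments, and then to read off the statement about the piecewise linear extension.

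For monotonicity, I would write $\Bin(n+1,q) = \Bin(n,q) + \Ber(q)$ using mutually independent random variables. Since $\varphi$ is nondecreasing, for every value $b$ of the $\Bin(n,q)$ part and every value of the Bernoulli we have $\varphi(b + \Ber(q)) \ge \varphi(b)$, so taking expectations gives $g_q(n+1) = \mathbb{E}[\varphi(\Bin(n+1,q))] \ge \mathbb{E}[\varphi(\Bin(n,q))] = g_q(n)$. Hence $g_q$ is nondecreasing on $\mathbb{N}$.

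For concavity, it suffices to show the second finite difference $g_q(n+2) - 2g_q(n+1) + g_q(n)$ is nonpositive. Let $B \sim \Bin(n,q)$ and $Y_1, Y_2 \sim \Ber(q)$ be mutually independent. Then $B + Y_1 + Y_2 \sim \Bin(n+2,q)$, $B + Y_1 \sim \Bin(n+1,q)$, and $B \sim \Bin(n,q)$, so
\[
g_q(n+2) - 2g_q(n+1) + g_q(n) = \mathbb{E}\big[\varphi(B + Y_1 + Y_2) - 2\varphi(B + Y_1) + \varphi(B)\big].
\]
Conditioning on $B = b$, the increment $Y_1 + Y_2$ takes the values $0,1,2$ with probabilities $(1-q)^2$, $2q(1-q)$, $q^2$, and expanding everything the coefficient of $\varphi(b)$ becomes $(1-q)^2 - 2(1-q) + 1 = q^2$, the coefficient of $\varphi(b+1)$ becomes $2q(1-q) - 2q = -2q^2$, and the coefficient of $\varphi(b+2)$ becomes $q^2$. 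Thus the inner conditional expectation equals $q^2\big(\varphi(b+2) - 2\varphi(b+1) + \varphi(b)\big)$, which is $\le 0$ for every $b \in \mathbb{N}$ by concavity of $\varphi$; averaging over $B$ gives the claim, so $g_q$ is concave on $\mathbb{N}$.

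Finally, for the ``as a consequence'' part: the piecewise linear interpolation of a nondecreasing concave sequence on $\mathbb{N}$ is continuous and nondecreasing on $\mathbb{R}_+$ (each value on $[n,n+1]$ lies between $g_q(n)$ and $g_q(n+1)$), and it is concave because its one-sided slopes are exactly the finite differences $g_q(n+1) - g_q(n)$, which are nonincreasing in $n$. Hence Jensen's inequality applies to this extension, which is precisely the form in which it is used in Lemma \ref{lem:contribI}. I expect no real obstacle here: the only mildly delicate point is arranging the coupling so that all three relevant binomials appear simultaneously as marginals of a single probability space, after which the concavity of $\varphi$ does all the work and the rest is elementary bookkeeping.
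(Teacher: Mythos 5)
Your proof is correct, and it rests on the same underlying idea as the paper's: realize the binomials on a common probability space by adding independent Bernoulli increments and let the concavity of $\varphi$ do the work. The execution of the concavity step differs slightly. The paper first derives the identity $g_q(n+1)-g_q(n)=q\,\mathbb{E}[\psi(\Bin(n,q))]$ with $\psi(x)=\varphi(x+1)-\varphi(x)$, and then applies stochastic dominance $\Bin(n,q)\leq_{\text{st}}\Bin(n+1,q)$ a second time to the nonincreasing function $\psi$ to conclude that the first differences of $g_q$ are nonincreasing. You instead compute the second finite difference directly from the single coupling $B+Y_1+Y_2$, and your coefficient bookkeeping ($q^2$, $-2q^2$, $q^2$) is right, giving $q^2\big(\varphi(b+2)-2\varphi(b+1)+\varphi(b)\big)\leq 0$ pointwise in $b$. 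Your version is a bit more self-contained (no appeal to preservation of the stochastic order under monotone maps for the concavity step), while the paper's first-difference formula is reused in spirit elsewhere (e.g., in the proof of Proposition \ref{prop:geoDominant}); either is perfectly adequate here, and your handling of the piecewise linear extension matches what Lemma \ref{lem:contribI} needs.
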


\begin{proof}
  $\Bin(n,q) \leq_{\text{st}} \Bin(n+1,q)$ and we have that $\varphi$ is nondecreasing, so $\mathbb{E}[\varphi(\Bin(n,q))] \leq \mathbb{E}[\varphi(\Bin(n+1,q))]$, ie $g_q(n+1) - g_q(n) \geq 0$: $g_q$ is nondecreasing.

  We show then the concavity, ie. $g_q(n+2) - g_q(n+1) \leq g_q(n+1) - g_q(n)$. Call $\psi(x) = \varphi(x+1)-\varphi(x)$ which is nonincreasing since $\varphi$ concave. Let us take $X_{k,q} \sim \Bin(k,q)$. Then:

  \begin{equation}
    \begin{aligned}
      g_q(n+1) &=&& \mathbb{E}[\varphi(X_{n+1,q})]\\
      &=&& \sum_{i=0}^n \mathbb{E}[\varphi(X_{n,q}+X_{1,q})|X_{n,q}=i]\mathbb{P}(X_{n,q}=i)\\
      &=&& \sum_{i=0}^n \mathbb{E}[\varphi(i+X_{1,q}) - \varphi(i)]\mathbb{P}(X_{n,q}=i) + \sum_{i=0}^n \varphi(i)\mathbb{P}(X_{n,q}=i)\\
      &=&& \sum_{i=0}^n \mathbb{E}[\varphi(i+X_{1,q}) - \varphi(i)]\mathbb{P}(X_{n,q}=i) + g_q(n)\ .
    \end{aligned}
  \end{equation}

  Thus:

  \begin{equation}
    \begin{aligned}
      g_q(n+1) -g_q(n) &=&& \sum_{i=0}^n \mathbb{E}[\varphi(i+X_{1,q}) - \varphi(i)]\mathbb{P}(X_{n,q}=i)\\
      &=&& \sum_{i=0}^n q(\varphi(i+1) - \varphi(i))\mathbb{P}(X_{n,q}=i)\\
      &=&& q \mathbb{E}[\psi(\Bin(n,q))]\ .
    \end{aligned}
  \end{equation}

  Then thanks to the fact that  $\Bin(n,q) \leq_{\text{st}} \Bin(n+1,q)$ and $\psi$ is nonincreasing, we have that $\mathbb{E}[\psi(\Bin(n,q))] \geq \mathbb{E}[\psi(\Bin(n+1,q))]$, ie. $g_q(n+2) - g_q(n+1) \leq g_q(n+1) - g_q(n)$.
\end{proof}

\begin{prop}
  With $w_i := \varphi(i)-\varphi(i-1)$, we have:
  \[ \lim_{i \rightarrow +\infty} w_i = 0 \iff \varphi(n) = o(n)\ .\]
  \label{prop:thieleEqLim}
\end{prop}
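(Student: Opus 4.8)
The plan is to reduce both directions to a Cesàro-type statement about the partial sums of the increment sequence $w_i$. Since $\varphi$ is normalized, $\varphi(0) = 0$, and therefore telescoping gives $\varphi(n) = \sum_{i=1}^n w_i$ for every $n \in \mathbb{N}^*$. Moreover, concavity of $\varphi$ means precisely that the sequence $(w_i)_{i \geq 1}$ is nonincreasing, and the fact that $\varphi$ is nondecreasing means $w_i \geq 0$. Hence $(w_i)$ is a nonincreasing sequence of nonnegative reals, so it converges to some limit $\ell := \lim_{i \to \infty} w_i \geq 0$. The whole statement is then equivalent to: $\ell = 0 \iff \frac{1}{n}\sum_{i=1}^n w_i \to 0$.

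For the implication $\lim_i w_i = 0 \Rightarrow \varphi(n) = o(n)$, I would invoke the Cesàro mean theorem: if $w_i \to 0$ then the averages $\frac{1}{n}\sum_{i=1}^n w_i$ also tend to $0$, which is exactly $\varphi(n)/n \to 0$, i.e., $\varphi(n) = o(n)$. (If one prefers an elementary argument instead of quoting Cesàro: given $\varepsilon > 0$, pick $N$ with $w_i \leq \varepsilon/2$ for $i > N$; then for $n$ large enough, $\frac{1}{n}\sum_{i=1}^n w_i \leq \frac{1}{n}\sum_{i=1}^N w_i + \frac{\varepsilon}{2} \leq \varepsilon$.)

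For the converse $\varphi(n) = o(n) \Rightarrow \lim_i w_i = 0$, I would use monotonicity of $(w_i)$ in the other direction: since the sequence is nonincreasing with limit $\ell$, we have $w_i \geq \ell$ for all $i$, hence $\varphi(n) = \sum_{i=1}^n w_i \geq n\ell$, so $\varphi(n)/n \geq \ell$ for all $n$. The hypothesis $\varphi(n) = o(n)$ forces $\limsup_n \varphi(n)/n = 0$, and combined with the lower bound this gives $\ell \leq 0$; together with $\ell \geq 0$ we conclude $\ell = 0$, i.e., $\lim_i w_i = 0$.

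There is essentially no hard step here — the argument is elementary — but the one point worth stating carefully is the reduction at the start: that $\varphi(0) = 0$ and concavity together let us rewrite $\varphi(n)$ as a partial sum of a nonincreasing nonnegative sequence, after which both directions are immediate from standard facts about Cesàro means of monotone sequences.
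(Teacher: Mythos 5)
Your proof is correct and follows essentially the same route as the paper: the forward direction is the standard Ces\`aro/tail-splitting argument (your parenthetical elementary version matches the paper's almost verbatim), and the converse uses monotonicity of $(w_i)$ to get $\varphi(n)/n \geq \ell$ and conclude $\ell = 0$. No gaps.
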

\begin{proof}
  \begin{itemize}
  \item ($\Rightarrow$) Let $\epsilon > 0$, let us find a rank $N$ such that for $n \geq N$, $\frac{\varphi(n)}{n} \leq \epsilon$. Let $N_0$ the rank from which $w_i \leq \frac{\epsilon}{2}$ and $N_1$ the rank from which $\frac{1}{n} \sum_{i=1}^{N_0-1} w_i \leq \frac{\epsilon}{2}$. We have
    
    \begin{equation}
      \begin{aligned}
        \frac{\varphi(n)}{n} &=&& \frac{1}{n} \sum_{i=1}^{n} w_i \leq \frac{1}{n} \sum_{i=1}^{N_0-1} w_i + \frac{1}{n} \sum_{i=N_0}^{n-1} \frac{\epsilon}{2}\\
        &\leq&& \frac{\epsilon}{2} + \frac{\epsilon}{2} = \epsilon \text{ for } n \geq \max(N_0,N_1) =: N\ .
      \end{aligned}
    \end{equation}
  \item ($\Leftarrow$) Since $w_i = \varphi(i)-\varphi(i-1)$ is nonnegative and nonincreasing (respectively because $\varphi$ is nondecreasing and concave), then the sequence $w$ has a limit $L \geq 0$. But
    \[ \frac{\varphi(n)}{n} = \frac{1}{n} \sum_{i=1}^{n} w_i \geq L \ . \]
    
    Since the left hand side tends to $0$ by hypothesis, this means that $L=0$.
  \end{itemize}
\end{proof}

\begin{prop}
    If $w_i := \varphi(i) - \varphi(i-1)$ is \emph{geometrically dominant}, ie. $\forall i \in \mathbb{N}^*, \frac{w_i}{w_{i+1}} \geq \frac{w_{i+1}}{w_{i+2}}$, then $\alpha_{\varphi} = \alpha_{\varphi}(1)$.
  \label{prop:geoDominant}
\end{prop}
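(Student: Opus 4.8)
The plan is to establish that $\alpha_{\varphi}(n) \geq \alpha_{\varphi}(1)$ for every integer $n \geq 1$. Since Proposition~\ref{prop:minNotInf} gives $\alpha_{\varphi} = \min_{x \in \mathbb{N}^*} \alpha_{\varphi}(x) \leq \alpha_{\varphi}(1)$, this immediately yields $\alpha_{\varphi} = \alpha_{\varphi}(1)$. Writing $w_i = \varphi(i) - \varphi(i-1)$, geometric dominance is exactly the statement that $(w_i)_{i \geq 1}$ is log-convex, i.e. $w_i w_{i+2} \geq w_{i+1}^2$. First I would dispose of the degenerate case: if some $w_i = 0$, then (since the $w_i$ are nonincreasing) log-convexity forces $w_2 = 0$ and hence $\varphi(j) = \min\{j,1\}$, for which one checks directly that $\alpha_{\varphi}(x) = 1 - e^{-x}$ on $x \geq 1$ and so $\alpha_{\varphi} = \alpha_{\varphi}(1) = 1 - e^{-1}$. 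Thus we may assume all $w_i > 0$, so the ratios $r_i := w_{i+1}/w_i$ are well defined, lie in $(0,1]$ by concavity of $\varphi$, and are nondecreasing (this is precisely log-convexity).

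From this I would extract two structural inequalities. First, for all $j \geq 0$ and $k \geq 0$,
\[ \varphi(j+k) - \varphi(j) \;\geq\; w_{j+1}\,\varphi(k), \]
which I prove by writing $\varphi(j+k) - \varphi(j) = \sum_{i=1}^{k} w_{j+i}$ and noting that $w_{j+i}/w_i$ is nondecreasing in $i$ (its successive ratio equals $r_{j+i}/r_i \geq 1$), so $w_{j+i} \geq w_{j+1} w_i$ for every $i$; summing over $i = 1,\dots,k$ gives the bound. Second, the shifted sequence $(w_{j+1})_{j \geq 0}$ is convex: log-convexity implies convexity via AM--GM, since $w_j + w_{j+2} \geq 2\sqrt{w_j w_{j+2}} \geq 2 w_{j+1}$. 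Applying Jensen's inequality to the (convex) piecewise-linear extension of this sequence at the integer-valued random variable $\Poi(m)$, whose mean is $m$, yields $\mathbb{E}[w_{\Poi(m)+1}] \geq w_{m+1}$ for all $m \geq 0$.

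Then I would run a telescoping argument over the Poisson decomposition. Using $\Poi(m+1) \overset{d}{=} \Poi(m) + Y$ with $Y \sim \Poi(1)$ independent, I condition on $\Poi(m)$ and apply the first inequality with the random increment $k = Y$, together with $\mathbb{E}[\varphi(Y)] = \alpha_{\varphi}(1)\varphi(1) = \alpha_{\varphi}(1)$, to get
\[ \mathbb{E}[\varphi(\Poi(m+1))] - \mathbb{E}[\varphi(\Poi(m))] \;\geq\; \alpha_{\varphi}(1)\,\mathbb{E}[w_{\Poi(m)+1}] \;\geq\; \alpha_{\varphi}(1)\,w_{m+1}, \]
the last step being the second inequality. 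Summing over $m = 0, 1, \dots, n-1$ and using $\varphi(\Poi(0)) = \varphi(0) = 0$ together with $\sum_{m=0}^{n-1} w_{m+1} = \varphi(n)$ gives $\mathbb{E}[\varphi(\Poi(n))] \geq \alpha_{\varphi}(1)\,\varphi(n)$, i.e. $\alpha_{\varphi}(n) \geq \alpha_{\varphi}(1)$, which completes the argument.

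The only place where geometric dominance enters is in proving the two structural inequalities of the second paragraph, and that is where I expect the real content to lie; the Poisson splitting in the third paragraph is the same routine manipulation already used for Theorem~\ref{theo:AlgoCard} and Lemma~\ref{lem:ConvexOrder}. Minor care is needed for the degenerate case $w_2 = 0$ and for the boundary case $k = 0$ of the first inequality (where both sides vanish), but these pose no difficulty.
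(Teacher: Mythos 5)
Your proposal is correct and follows essentially the same route as the paper's proof: the two structural inequalities you isolate are exactly the paper's Lemma \ref{lem:geoDominant} ($w_{j+i} \geq w_{j+1} w_i$, hence $\varphi(j+k)-\varphi(j) \geq w_{j+1}\varphi(k)$) and Lemma \ref{lem:wConvex} (convexity of the shifted weight sequence), and the Poisson splitting, Jensen step, and telescoping sum are identical. Your AM--GM derivation of convexity and your explicit treatment of the degenerate case $w_i = 0$ are slightly cleaner than the paper's, but the argument is the same.
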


\begin{rk}
  Proposition \ref{prop:geoDominant} and in particular its proof uses similar ideas to the sketch provided in \cite{DMMS20}.
\end{rk}

  \begin{proof}
  Let $g(k) = \mathbb{E}[\varphi(\Poi(k))]$, and thus $\alpha_{\varphi}(k) = \frac{g(k)}{\varphi(k)}$. Let us show that for $k \in \mathbb{N}^*, \alpha_{\varphi}(k) \geq \alpha_{\varphi}(1)$, which will be enough to conclude. In order to show this, we will need the following lemmas:
  
  \begin{lem}
    $\forall k < i \in \mathbb{N}, w_i \geq w_{k+1}w_{i-k}$ and thus $\forall k,j \in \mathbb{N}, \varphi(k+j) - \varphi(k) \geq w_{k+1} \varphi(j)$.
    \label{lem:geoDominant}
  \end{lem}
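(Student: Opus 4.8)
The plan is to prove the product inequality $w_i \geq w_{k+1} w_{i-k}$ first, and then obtain the second assertion from it by telescoping. The crucial reformulation is that geometric dominance of $(w_i)$, i.e. $w_i w_{i+2} \geq w_{i+1}^2$, is exactly the statement that the ratio sequence $r_j := w_{j+1}/w_j$ is nondecreasing; note also $r_j \leq 1$ since $\varphi$ is concave, and $w_1 = \varphi(1) - \varphi(0) = 1$ by normalization. Before using ratios I would dispose of the degenerate cases. If $w_{k+1} = 0$ or $w_{i-k} = 0$, then $w_i \geq 0 = w_{k+1} w_{i-k}$ and we are done. Otherwise, suppose some $w_J = 0$ with $J$ minimal; then $J \geq 2$ (as $w_1 = 1$), and if $J \geq 3$, geometric dominance at index $J-2$ gives $0 = w_{J-2} w_J \geq w_{J-1}^2$, contradicting minimality; hence $J = 2$ and, by concavity of $\varphi$, $w_j = 0$ for all $j \geq 2$, i.e. $\varphi$ is constant on $\mathbb{N}^*$, a case in which both inequalities of the lemma are checked directly. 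Thus it remains to treat the case $w_j > 0$ for all $j \leq i$.

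In that case write $w_j = \prod_{t=1}^{j-1} r_t$ using $w_1 = 1$. Since the right-hand side $w_{k+1} w_{i-k}$ is symmetric under $k \mapsto i-1-k$, we may assume $k+1 \leq i-k$. Cancelling the positive factor $\prod_{t=1}^{k} r_t$, the inequality $w_i \geq w_{k+1} w_{i-k}$ reduces to $\prod_{t=k+1}^{i-1} r_t \geq \prod_{t=1}^{i-k-1} r_t$. Both products have exactly $i-1-k$ factors, so matching the $s$-th factors, $r_{k+s}$ on the left and $r_s$ on the right (for $s = 1,\ldots,i-1-k$), we have $r_{k+s} \geq r_s \geq 0$ because $r$ is nondecreasing and $k+s \geq s$; taking the product of these factorwise inequalities gives the claim (empty products, occurring only when $i = k+1$, reduce to $1 \geq 1$).

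For the second assertion I would telescope both sides: $\varphi(k+j) - \varphi(k) = \sum_{\ell=1}^{j} w_{k+\ell}$ and $w_{k+1}\varphi(j) = w_{k+1}\sum_{\ell=1}^{j} w_\ell = \sum_{\ell=1}^{j} w_{k+1} w_\ell$. Applying the first part with $i := k+\ell$ (so $i - k = \ell \geq 1$) yields $w_{k+\ell} \geq w_{k+1} w_\ell$ for each $\ell \in \{1,\ldots,j\}$, and summing over $\ell$ completes the proof.

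The argument is essentially routine; the only points requiring care are recognizing the equivalence of geometric dominance with monotonicity of the ratios, handling the possibility of zero entries in $(w_i)$ (which, apart from the trivial eventually-constant case, cannot occur), and choosing the WLOG direction so that the two telescoped products have equal length.
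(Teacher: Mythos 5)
Your proof is correct and follows essentially the same route as the paper's: both arguments rewrite the inequality $w_i \geq w_{k+1}w_{i-k}$ as a factor-by-factor comparison of products of consecutive ratios $w_{j+1}/w_j$, using that geometric dominance makes this ratio sequence nondecreasing, and then obtain the second claim by the identical telescoping sum. Your explicit treatment of vanishing $w_i$ is a small extra precaution (the paper's ratio formulation of geometric dominance already presupposes positivity) and does not change the substance.
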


  \begin{proof}
    We have that:
    \[w_i = \frac{w_i}{w_{i-1}}\frac{w_{i-1}}{w_{i-2}}\ldots\frac{w_{i-k+1}}{w_{i-k}}w_{i-k} \ .\]

    But for $j \in [k]$:

    \[ \frac{w_{i-j+1}}{w_{i-j}} \geq \frac{w_{(i-1)-j+1}}{w_{(i-1)-j}} \geq \ldots \geq \frac{w_{(k+1)-j+1}}{w_{(k+1)-j}} \ , \]

    since $w$ is geometrically dominant and $k+1 \leq i$. Thus applying this bound on each term of the previous product, we get:

    \[w_i \geq \frac{w_{k+1}}{w_{k}}\frac{w_{k}}{w_{k-1}}\ldots\frac{w_{2}}{w_{1}}w_{i-k} = \frac{w_{k+1}}{w_1} w_{i-k} = w_{k+1} w_{i-k} \ . \]

    In particular, $\forall k,j \in \mathbb{N}$, we get:
    \[\varphi(k+j) - \varphi(k) = \sum_{i=k+1}^{k+j} w_i \geq w_{k+1}\sum_{i=1}^{j} w_i = w_{k+1} \varphi(j) \ .\]
  \end{proof}
  
  \begin{lem}
    The piecewise linear extension on $[1,+\infty[$ of $w$, defined on integers by $w(k)=w_k$, is convex.
    \label{lem:wConvex}
  \end{lem}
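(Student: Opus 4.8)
The plan is to reduce the convexity of the piecewise linear extension of $w$ to a statement about second differences of the sequence $(w_k)_{k \geq 1}$, and then derive that statement from geometric dominance via the AM-GM inequality.

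First I would recall that a function which is affine on each interval $[k,k+1]$ for integer $k \geq 1$ and interpolates the values $w_k$ at the integers is convex on $[1,+\infty[$ if and only if its consecutive slopes are nondecreasing, i.e.\ if and only if $w_{k+1} - w_k \leq w_{k+2} - w_{k+1}$ for all $k \in \mathbb{N}^*$, equivalently
\[ w_{k+2} - 2 w_{k+1} + w_k \geq 0 \qquad \text{for all } k \in \mathbb{N}^* . \]
So it suffices to establish this last inequality for every $k$.

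Next I would invoke geometric dominance in its cross-multiplied form $w_k w_{k+2} \geq w_{k+1}^2$ for all $k \in \mathbb{N}^*$ (this is the correct reading of $\frac{w_i}{w_{i+1}} \geq \frac{w_{i+1}}{w_{i+2}}$, and it remains meaningful when some term vanishes). Since $\varphi$ is nondecreasing and concave, $(w_k)$ is a nonincreasing sequence of nonnegative reals. If $w_{k+1} = 0$, then $w_{k+2} = 0$ as well (the sequence is nonincreasing and nonnegative), so $w_{k+2} - 2 w_{k+1} + w_k = w_k \geq 0$ and we are done at that index. If $w_{k+1} > 0$, then by AM-GM,
\[ \frac{w_k + w_{k+2}}{2} \geq \sqrt{w_k w_{k+2}} \geq \sqrt{w_{k+1}^2} = w_{k+1}, \]
which rearranges to $w_{k+2} - 2 w_{k+1} + w_k \geq 0$.

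In all cases the second difference is nonnegative, so the consecutive slopes of the piecewise linear extension of $w$ are nondecreasing, and hence the extension is convex on $[1,+\infty[$. I do not expect a real obstacle here: the only point requiring care is the degenerate situation in which the sequence $(w_k)$ eventually reaches $0$, and the heart of the argument is the one-line AM-GM deduction from $w_k w_{k+2} \geq w_{k+1}^2$.
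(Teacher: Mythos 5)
Your proof is correct. Both you and the paper reduce convexity of the piecewise linear extension to the discrete second-difference inequality $w_{k+2}-2w_{k+1}+w_k\geq 0$, but you derive it differently: the paper manipulates ratios, chaining $\frac{w_{k+1}}{w_{k+2}}-1 \leq \frac{w_{k+1}}{w_{k+2}}\bigl(\frac{w_{k+1}}{w_{k+2}}-1\bigr) \leq \frac{w_{k+1}}{w_{k+2}}\bigl(\frac{w_k}{w_{k+1}}-1\bigr)$ and then clearing denominators, whereas you read geometric dominance in its cross-multiplied form $w_k w_{k+2}\geq w_{k+1}^2$ and apply AM--GM to get $\frac{w_k+w_{k+2}}{2}\geq \sqrt{w_kw_{k+2}}\geq w_{k+1}$. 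Your route is cleaner and, importantly, more robust: the paper's ratio manipulation implicitly requires $w_{k+1},w_{k+2}>0$ to make sense of the quotients, while your version remains valid when the sequence eventually vanishes (and your explicit handling of the $w_{k+1}=0$ case, using that $(w_k)$ is nonnegative and nonincreasing, closes that gap). The two arguments buy the same conclusion, but yours needs no positivity hypothesis beyond what concavity and monotonicity of $\varphi$ already provide.
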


  \begin{proof}
    We will show that $\forall k \in \mathbb{N}^*, w_{k+2}-w_{k+1} \geq w_{k+1}-w_k$ which implies the convexity of its piecewise linear extension on $[1,+\infty[$. For $k \in \mathbb{N}^*$ we have:

        \[ \frac{w_{k+1}}{w_{k+2}}-1 \leq \frac{w_{k+1}}{w_{k+2}}\Big(\frac{w_{k+1}}{w_{k+2}} - 1\Big) \leq \frac{w_{k+1}}{w_{k+2}}\Big(\frac{w_{k}}{w_{k+1}}-1\Big) = \frac{w_k-w_{k+1}}{w_{k+2}} \ , \]

        since $w$ is nonnegative nonincreasing (respectively $\varphi$ nondecreasing concave) and $\frac{w_{k+1}}{w_{k+2}} \leq \frac{w_{k}}{w_{k+1}}$ since $w$ is geometrically dominant. Then, multiplying by $-w_{k+2} \leq 0$ gives the expected result $w_{k+2}-w_{k+1} \geq w_{k+1}-w_k$.
  \end{proof}
  
  We have $g(k+1) = \mathbb{E}[\varphi(\Poi(k+1))] = \mathbb{E}[\varphi(\Poi(k)+\Poi(1))]$ since $\Poi(k+1) \sim \Poi(k)+\Poi(1)$. Thus:
  \begin{equation}
    \begin{aligned}
      g(k+1)-g(k) &=&& \mathbb{E}_{X,X' \sim \Poi(k), Y \sim \Poi(1)}[\varphi(X+Y) - \varphi(X')]\\
      &=&& \mathbb{E}_{X \sim \Poi(k), Y \sim \Poi(1)}[\varphi(X+Y) - \varphi(X)]\\
      &\geq&& \mathbb{E}_{X \sim \Poi(k), Y \sim \Poi(1)}[w_{X+1}\varphi(Y)] \quad \text{by Lemma \ref{lem:geoDominant}}\\
      &=&& \mathbb{E}_{X \sim \Poi(k)}[w(X+1)]\mathbb{E}_{Y \sim \Poi(1)}[\varphi(Y)] \quad \text{by independence of $w(X+1)$ and $\varphi(Y)$}.
    \end{aligned}
  \end{equation}
  
  Since $w$ is convex on $[1,+\infty[$ by Lemma \ref{lem:wConvex} and $\Poi(k)+1 \in [1,+\infty[$, we have that $\mathbb{E}[w(\Poi(k)+1)] \geq w(\mathbb{E}[\Poi(k)+1]) = w(k+1) = w_{k+1}$ thanks to Jensen's inequality. Note that $g(0) = \mathbb{E}[\varphi(\Poi(0))] = \varphi(0) = 0$. Then:

  \[g(k) = \sum_{i=0}^{k-1} g(i+1)-g(i) \geq  \Big(\sum_{i=0}^{k-1}w_{i+1}\Big)\mathbb{E}[\varphi(\Poi(1))] = \varphi(k)g(1) \ .\]

  Therefore:
  \[\alpha_{\varphi}(k) = \frac{g(k)}{\varphi(k)} \geq g(1) = \frac{g(1)}{\varphi(1)}=\alpha_{\varphi}(1) \ .\]
\end{proof}

\section{Calculations of $\alpha_{\varphi}$}
\begin{prop}
  For $\ell \in \mathbb{N}^*$ and $\varphi(j) = \min\{j,\ell\}$, we have that $\alpha_{\varphi}=1-\frac{\ell^{\ell}e^{-\ell}}{\ell!}$.
  \label{prop:lCover}
\end{prop}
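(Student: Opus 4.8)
The plan is to invoke Proposition~\ref{prop:phiLinear} to localize the minimum, then evaluate $\alpha_\varphi$ at $x=\ell$ by a direct Poisson computation.

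First I would check that $\varphi(j)=\min\{j,\ell\}$ is normalized, nondecreasing and concave (its increment sequence is $1,\dots,1,0,0,\dots$, with the $\ell$-th increment equal to $1$), and that it falls under Proposition~\ref{prop:phiLinear} with the displayed integer equal to $\ell$ and slope $a=0$: for $x\ge\ell$ we have $\varphi(x)=\ell=\varphi(\ell)+0\cdot(x-\ell)$, and $0\le a=0\le \varphi(\ell)-\varphi(\ell-1)=1$. Proposition~\ref{prop:phiLinear} then yields $\alpha_\varphi=\min_{x\in[\ell]}\alpha_\varphi(x)$, so it remains to show this minimum over $x\in\{1,\dots,\ell\}$ is attained at $x=\ell$ and to compute $\alpha_\varphi(\ell)$.

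Second, for $x\in\{1,\dots,\ell\}$ one has $\varphi(x)=x$ (the piecewise-linear extension agrees with the identity on $[0,\ell]$), so $\alpha_\varphi(x)=g(x)/x$ where $g(x):=\mathbb{E}[\varphi(\Poi(x))]$. By Proposition~\ref{prop:PoiCon}, $g$ is concave on $\mathbb{R}_+$, and $g(0)=\varphi(0)=0$; for any concave $g$ with $g(0)=0$ the ratio $x\mapsto g(x)/x$ is nonincreasing on $(0,\infty)$ (for $0<x<y$, apply concavity to $x=(x/y)\,y+(1-x/y)\cdot 0$). Hence $\min_{x\in\{1,\dots,\ell\}}\alpha_\varphi(x)=\alpha_\varphi(\ell)$.

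Third, I would compute $\alpha_\varphi(\ell)$. Let $X\sim\Poi(\ell)$; from $\min\{X,\ell\}=X-(X-\ell)^+$ we get $g(\ell)=\ell-\mathbb{E}[(X-\ell)^+]$. Expanding $\mathbb{E}[(X-\ell)^+]=e^{-\ell}\sum_{k\ge \ell+1}(k-\ell)\ell^k/k!$ and reindexing the $k\ell^k/k!$ term gives $e^{-\ell}\bigl(\ell\sum_{j\ge \ell}\ell^j/j!-\ell\sum_{j\ge \ell+1}\ell^j/j!\bigr)=\ell e^{-\ell}\cdot \ell^\ell/\ell!$, so $g(\ell)=\ell\bigl(1-\ell^\ell e^{-\ell}/\ell!\bigr)$ and $\alpha_\varphi(\ell)=g(\ell)/\varphi(\ell)=g(\ell)/\ell=1-\ell^\ell e^{-\ell}/\ell!$. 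Combining with the previous steps gives $\alpha_\varphi=1-\ell^\ell e^{-\ell}/\ell!$. (As a sanity check, substituting $a=0$ and $\varphi(k)=k$ for $k\le\ell$ into the closed form of Proposition~\ref{prop:phiLinear} evaluated at $x=\ell$ produces $1-(e^{-\ell}/\ell)\sum_{k=0}^{\ell}(\ell-k)\ell^k/k!$, and the same telescoping collapses $\sum_{k=0}^{\ell}(\ell-k)\ell^k/k!$ to $\ell\cdot\ell^\ell/\ell!$.) There is no genuine obstacle here; the only care needed is verifying the slope condition $a\le\varphi(\ell)-\varphi(\ell-1)$ for applying Proposition~\ref{prop:phiLinear} with $a=0$, and the bookkeeping in the telescoping Poisson sum, where the identity $\sum_{k\ge \ell+1}\ell^k/(k-1)!=\ell\sum_{j\ge \ell}\ell^j/j!$ is what isolates the single surviving term $\ell\cdot\ell^\ell/\ell!$.
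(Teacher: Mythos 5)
Your proof is correct, and the middle step is handled by a genuinely different (and slicker) argument than the paper's. Both proofs use Proposition~\ref{prop:phiLinear} with $a=0$ to reduce to $\min_{x\in[\ell]}\alpha_{\varphi}(x)$, and both ultimately evaluate $\alpha_{\varphi}(\ell)=1-\ell^{\ell}e^{-\ell}/\ell!$ (your telescoping via $\min\{X,\ell\}=X-(X-\ell)^+$ matches the paper's rearrangement of the Poisson series). The difference is in showing the minimum over $[\ell]$ sits at $x=\ell$: the paper computes $\alpha_{\varphi}'(x)$ explicitly on $[1,\ell]$ through a multi-line manipulation that collapses to $\frac{\ell e^{-x}}{x^2}\bigl(\sum_{k=0}^{\ell}\frac{x^k}{k!}-e^x\bigr)\leq 0$, whereas you observe that on $[0,\ell]$ one has $\varphi(x)=x$, so $\alpha_{\varphi}(x)=g(x)/x$ with $g(x)=\mathbb{E}[\varphi(\Poi(x))]$ concave (Proposition~\ref{prop:PoiCon}) and $g(0)=0$, and the ratio $g(x)/x$ of such a function is automatically nonincreasing. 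Your route avoids the derivative bookkeeping entirely and isolates the structural reason the minimum lands at $\ell$ (it would apply verbatim to any $\varphi$ that is linear through the origin up to $\ell$); the paper's computation is more self-contained but yields no extra information here. There is no circularity in invoking Proposition~\ref{prop:PoiCon}, which is proved independently in the appendix.
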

\begin{proof}
  Thanks to Proposition \ref{prop:BetterRatio}, we have that $\alpha_{\varphi} = \min_{x \in \mathbb{N}^*} \alpha_{\varphi}(x)$. Let us compute $\mathbb{E}[\varphi(\Poi(x))]$:

  \begin{equation}
    \begin{aligned}
      \mathbb{E}[\varphi(\Poi(x))] &=&& e^{-x}\sum_{k=0}^{+\infty}\varphi(k)\frac{x^k}{k!}\\
      &=&& e^{-x}\sum_{k=0}^{\ell}k\frac{x^k}{k!} + e^{-x}\sum_{k=\ell+1}^{+\infty}\ell\frac{x^k}{k!}\\
      &=&& e^{-x}x \sum_{k=0}^{\ell-1}\frac{x^k}{k!} + \ell e^{-x}\sum_{k=\ell+1}^{+\infty}\frac{x^k}{k!}\\
      &=&& e^{-x}\Big[(x-\ell)\sum_{k=0}^{\ell-1}\frac{x^k}{k!} - \ell\frac{x^{\ell}}{\ell!}\Big] + \ell e^{-x}\sum_{k=0}^{+\infty}\frac{x^k}{k!}\\
      &=&& \ell - e^{-x}\Big[\frac{x^{\ell}}{(\ell-1)!} - (x-\ell)\sum_{k=0}^{\ell-1}\frac{x^k}{k!}\Big] \ .
    \end{aligned}
  \end{equation}

  Let us show that $\alpha_{\varphi}(x)$ takes its minimum in $\ell$, where we have indeed:
  \[ \alpha_{\varphi}(\ell) = \frac{1}{\ell}\Big( \ell - e^{-\ell}\Big[\frac{\ell^{\ell}}{(\ell-1)!} - (\ell-\ell)\sum_{k=0}^{\ell-1}\frac{\ell^k}{k!}\Big]\Big) = 1 - e^{-\ell}\frac{\ell^{\ell}}{\ell!} \ . \]

  Thanks to proposition \ref {prop:phiLinear}, $\alpha_{\varphi}(x)$ is nondecreasing from $\ell$ to $+\infty$. Suppose now that $\ell \geq 2$ (otherwise the result is already proved). Since $\alpha_{\varphi}(x)$ is differentiable, we have for $1 \leq x \leq \ell$:
  \begin{equation}
    \begin{aligned}
      \alpha_{\varphi}'(x) &=&& -\frac{\ell}{x^2} + e^{-x}\Big[\frac{x^{\ell-1}}{(\ell-1)!} -  \sum_{k=0}^{\ell-1}\frac{x^k}{k!}  + \ell\sum_{k=0}^{\ell-2}\frac{x^k}{(k+1)!} + \frac{\ell}{x}\Big]\\ 
      &-&& e^{-x}\Big[\frac{x^{\ell-2}}{(\ell-2)!} - \sum_{k=0}^{\ell-2}\frac{x^k}{k!} + \ell\sum_{k=0}^{\ell-3}\frac{x^k}{(k+2)k!} - \frac{\ell}{x^2}\Big]\\
      &=&&  \frac{\ell}{x}\Big(e^{-x}\Big(1+\frac{1}{x}\Big) - \frac{1}{x}\Big) + e^{-x}\Big[\Big(\frac{\ell}{\ell-1}-1\Big)\frac{x^{\ell-2}}{(\ell-2)!} + \ell\sum_{k=0}^{\ell-3}\Big(\frac{x^k}{(k+1)!} - \frac{x^k}{(k+2)k!} \Big) \Big]\\
      &=&&  \frac{\ell}{x}\Big(e^{-x}\Big(1+\frac{1}{x}\Big) - \frac{1}{x}\Big) + e^{-x}\Big[\frac{x^{\ell-2}}{(\ell-1)!} + \ell\sum_{k=0}^{\ell-3}\frac{x^k}{k!}\Big(\frac{1}{k+1} - \frac{1}{k+2} \Big) \Big]\\
      &=&&  \frac{\ell}{x}\Big(e^{-x}\Big(1+\frac{1}{x} + \frac{x^{\ell-1}}{\ell!} + x\sum_{k=0}^{\ell-3}\frac{x^k}{k!}\frac{1}{(k+1)(k+2)}\Big) - \frac{1}{x}\Big)\\
      &=&&  \frac{\ell e^{-x}}{x^2}\Big(\Big(1+x+ \frac{x^\ell}{\ell!} + \sum_{k=0}^{\ell-3}\frac{x^{k+2}}{(k+2)!}\Big) - e^x\Big)\\
      &=&& \frac{\ell e^{-x}}{x^2}\Big(\sum_{k=0}^{\ell} \frac{x^k}{k!} -e^x\Big) \leq 0\ .
    \end{aligned}
  \end{equation}
  since the partial sum of the exponential series is bounded by its total sum. Thus $\alpha_{\varphi}(x)$ is nonincreasing from $1$ to $\ell$, and nondecreasing after, so it takes indeed its minimum in $\ell$ and the proposition is proved.
\end{proof}

\begin{prop}
  For $p \in (0,1)$ and $\varphi(j)=\frac{1-(1-p)^j}{p}$, we have that $\alpha_{\varphi} = \frac{1 - e^{-p}}{p}$.
\label{prop:VTA}
\end{prop}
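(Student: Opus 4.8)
The plan is to recognize that this $\varphi$ has geometrically dominant increments and then quote Proposition~\ref{prop:geoDominant}. First I would verify the standing hypotheses and carry out the one relevant computation. The increments are $w_i := \varphi(i) - \varphi(i-1) = \tfrac{(1-p)^{i-1} - (1-p)^{i}}{p} = (1-p)^{i-1}$, which are positive and strictly decreasing, so $\varphi$ is nondecreasing and concave; it is also normalized since $\varphi(0) = 0$ and $\varphi(1) = 1$. Using the probability generating function of a Poisson variable, $\mathbb{E}\big[(1-p)^{\Poi(x)}\big] = e^{x((1-p) - 1)} = e^{-px}$, hence for every $x \in \mathbb{R}_+$
\[
\mathbb{E}[\varphi(\Poi(x))] = \mathbb{E}\left[\frac{1 - (1-p)^{\Poi(x)}}{p}\right] = \frac{1 - e^{-px}}{p},
\]
and in particular $\mathbb{E}[\varphi(\Poi(1))] = \tfrac{1-e^{-p}}{p}$.

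The key point is that $(w_i)_{i \ge 1} = \big((1-p)^{i-1}\big)_{i \ge 1}$ is a genuine geometric sequence, so $\tfrac{w_i}{w_{i+1}} = \tfrac{1}{1-p} = \tfrac{w_{i+1}}{w_{i+2}}$ for every $i \in \mathbb{N}^*$. Thus $(w_i)_i$ is geometrically dominant (the defining inequality holds with equality throughout), and Proposition~\ref{prop:geoDominant} yields $\alpha_\varphi = \alpha_\varphi(1) = \tfrac{\mathbb{E}[\varphi(\Poi(1))]}{\varphi(1)} = \tfrac{1 - e^{-p}}{p}$, which is exactly the claimed value.

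I do not anticipate a genuine obstacle here: the content is entirely carried by Proposition~\ref{prop:geoDominant}, and all that remains is the short generating-function identity above together with the (immediate) geometric-dominance check. If one wanted a proof independent of Proposition~\ref{prop:geoDominant}, the alternative would be to note that at integer $x$ one has $\alpha_\varphi(x) = \tfrac{1 - e^{-px}}{1 - (1-p)^x}$ and to show this quantity is minimized at $x = 1$; writing $1 - p = e^{-q}$ with $q = -\ln(1-p) > p$, this reduces to the monotonicity of $x \mapsto \tfrac{1 - e^{-px}}{1 - e^{-qx}}$ on $[1, +\infty)$, which follows from a routine sign analysis of its derivative. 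The first route is shorter and is the one I would write up.
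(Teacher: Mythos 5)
Your proof is correct, and it takes a genuinely different route from the paper's. The paper proves Proposition~\ref{prop:VTA} directly: it computes the closed form $\alpha_{\varphi}(x) = \frac{1-e^{-px}}{1-e^{-qx}}$ with $q = \ln\frac{1}{1-p}$ for $x \ge 1$ (via the same Poisson generating-function identity you use), and then shows $\alpha_{\varphi}'(x) > 0$ on $[1,+\infty)$ by a convexity argument applied to the exponential — exactly the ``alternative'' you sketch at the end. Your primary route instead observes that $w_i = (1-p)^{i-1}$ is an exactly geometric sequence, hence geometrically dominant (the defining inequality holds with equality), and delegates the localization of the minimum at $x=1$ to Proposition~\ref{prop:geoDominant}. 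This is legitimate: Proposition~\ref{prop:geoDominant} only assumes $\varphi$ normalized, nondecreasing, concave with geometrically dominant increments, all of which you verify, and its proof does not depend on Proposition~\ref{prop:VTA}, so there is no circularity. What each approach buys: yours is shorter and conceptually cleaner, exhibiting the Vehicle-Target Assignment rule as the boundary ($p$-geometric) case of the geometrically dominant family already treated for approval voting; the paper's direct computation is self-contained and additionally yields the strict monotonicity of $\alpha_{\varphi}(x)$ on $[1,+\infty)$, which is slightly more information than ``the minimum over integers is attained at $1$.''
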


\begin{proof}
    By definition:
    
    \begin{equation}
      \begin{aligned}
        \alpha_{\varphi}(x) &=&& \frac{\mathbb{E}[\varphi(\Poi(x))]}{\varphi(x)} = \frac{\sum_{k=0}^{+\infty}\varphi(k)e^{-x}\frac{x^k}{k!}}{\varphi(x)}\\
        &=&& \frac{1-e^{-x}\sum_{k=0}^{+\infty}(1-p)^k\frac{x^k}{k!}}{p\varphi(x)}\\
        &=&& \frac{1 - e^{-x}e^{(1-p)x}}{p\varphi(x)} = \frac{1 - e^{-px}}{p\varphi(x)} \ .
      \end{aligned}
    \end{equation}

    If $x \geq 1$, $\alpha_{\varphi}(x) = \frac{1 - e^{-px}}{1-(1-p)^x} = \frac{1 - e^{-px}}{1-e^{-qx}}$ with $q = \ln(\frac{1}{1-p}) > 0$ and:
    \[\alpha_{\varphi}'(x) = \frac{pe^{-px}(1-e^{-qx}) - qe^{-qx}(1-e^{-px})}{(1-e^{-qx})^2} = \frac{pe^{-px} - qe^{-qx} + (q-p)e^{-(p+q)x}}{(1-e^{-qx})^2} \ .\]

    Let us take $t = \frac{p}{q} \in (0,1)$, since $q = \ln(\frac{1}{1-p}) > p > 0$, $x_1 = -px$ and $x_2 = -(p + q)x$. Then by strict convexity of the exponential function, we have:

    \[ e^{tx_1 + (1-t)x_2} < te^{x_1} + (1-t)e^{x_2} = \frac{pe^{-px} + (q-p)e^{-(p+q)x}}{q} \ .\]

    But $tx_1 + (1-t)x_2 = \frac{-p^2x}{q} + \frac{-(q-p)(p+q)x}{q}  = \frac{-p^2x}{q} + \frac{-(q^2x-p^2x)}{q} = -qx$, so we get $pe^{-px} - qe^{-qx} + (q-p)e^{-(p+q)x} > 0$, and $\alpha'_{\varphi}(x) > 0$. Thus, $\alpha_{\varphi}(x)$ increases from $1$ to infinity and takes its minimum in $1$:
    
    \[\alpha_{\varphi} = \alpha_{\varphi}(1) = \frac{1 - e^{-p}}{p} \ .\]
\end{proof}

\begin{prop}
  For $d \in (0,1)$ and $\varphi(j)=j^d$, we have that $\alpha_{\varphi} = e^{-1}\sum_{k=1}^{+\infty}\frac{k^d}{k!}$.
    \label{prop:dPower}
\end{prop}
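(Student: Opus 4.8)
The plan is to show that the infimum defining $\alpha_{\varphi}$ is attained at $x=1$, i.e.\ $\alpha_{\varphi} = \alpha_{\varphi}(1)$, and then evaluate $\alpha_{\varphi}(1)$ directly. Since $\varphi(0)=0$ and $\varphi(1)=1$, we have $\alpha_{\varphi}(1) = \mathbb{E}[\varphi(\Poi(1))]/\varphi(1) = e^{-1}\sum_{k=0}^{+\infty}\frac{k^d}{k!} = e^{-1}\sum_{k=1}^{+\infty}\frac{k^d}{k!}$, which is the claimed value (the $k=0$ term vanishes because $0^d=0$). To get $\alpha_{\varphi} = \alpha_{\varphi}(1)$, I would invoke Proposition \ref{prop:geoDominant}: it suffices to prove that the increment sequence $w_i := \varphi(i)-\varphi(i-1) = i^d-(i-1)^d$ is \emph{geometrically dominant}, i.e.\ $w_i w_{i+2} \geq w_{i+1}^2$ for all $i \in \mathbb{N}^*$, equivalently that $(w_i)$ is log-convex.

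The key step is to recognize $(w_i)$ as a moment sequence of a positive measure. Using the standard integral representation $t^{d-1} = \frac{1}{\Gamma(1-d)}\int_0^{+\infty} u^{-d}e^{-ut}\,du$, valid for $t>0$ and $d \in (0,1)$, together with $w_i = d\int_{i-1}^{i} t^{d-1}\,dt$ (an improper but convergent integral also at $i=1$, since $d>0$ and $w_1=1$), Fubini's theorem applied to the nonnegative integrand gives
\[
w_i = \int_0^{+\infty} e^{-u(i-1)}\,\rho(u)\,du, \qquad \rho(u) := \frac{d}{\Gamma(1-d)}\,u^{-d-1}\bigl(1-e^{-u}\bigr) \geq 0 ,
\]
and $\int_0^{+\infty}\rho(u)\,du = w_1 = 1 < \infty$, so $\rho(u)\,du$ is a finite positive measure. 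Then the Cauchy--Schwarz inequality yields
\[
w_{i+1} = \int_0^{+\infty} e^{-u(i-1)/2}\,e^{-u(i+1)/2}\,\rho(u)\,du \leq \Bigl(\int_0^{+\infty} e^{-u(i-1)}\rho(u)\,du\Bigr)^{1/2}\Bigl(\int_0^{+\infty} e^{-u(i+1)}\rho(u)\,du\Bigr)^{1/2} = \sqrt{w_i\,w_{i+2}} ,
\]
which is exactly the desired log-convexity. Applying Proposition \ref{prop:geoDominant} then gives $\alpha_{\varphi} = \alpha_{\varphi}(1) = e^{-1}\sum_{k=1}^{+\infty}\frac{k^d}{k!}$.

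The main obstacle is establishing the log-convexity of $(i^d-(i-1)^d)_{i \geq 1}$: a purely elementary manipulation of $i^d-(i-1)^d$ looks unpleasant, and the cleanest route is the moment representation above, which reduces everything to Cauchy--Schwarz. The remaining points — the value of $\alpha_{\varphi}(1)$, the convergence of the integral at $i=1$, and the appeal to Proposition \ref{prop:geoDominant} — are routine.
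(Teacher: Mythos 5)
Your proposal is correct, but it takes a genuinely different route from the paper. The paper works directly with $\alpha_{\varphi}(x) = e^{-x}\sum_{k\geq 0} k^d x^{k-d}/k!$ for $x\geq 1$, computes $\alpha_{\varphi}'(x)$ explicitly, and shows it is positive by reducing to the pointwise positivity of $f(k)=\frac{k+1-d}{k+1}(k+1)^d-k^d$; hence $\alpha_{\varphi}$ is increasing on $[1,+\infty)$ and the minimum sits at $x=1$. You instead route through Proposition \ref{prop:geoDominant} by proving that $w_i=i^d-(i-1)^d$ is geometrically dominant, which you obtain from the representation $w_i=\int_0^{+\infty}e^{-u(i-1)}\rho(u)\,du$ with $\rho\geq 0$ (via $t^{d-1}=\frac{1}{\Gamma(1-d)}\int_0^{+\infty}u^{-d}e^{-ut}\,du$ and Tonelli) followed by Cauchy--Schwarz. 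Both arguments are sound; the hypotheses of Proposition \ref{prop:geoDominant} are met here since $\varphi(j)=j^d$ is normalized, nondecreasing and concave, and your integral representation, the finiteness of $\int\rho$ (integrable at $0$ because $d<1$ and at $\infty$ because $d+1>1$), and the evaluation $\alpha_{\varphi}(1)=e^{-1}\sum_{k\geq 1}k^d/k!$ all check out. What each approach buys: the paper's derivative computation is self-contained but hinges on the asserted (and not fully justified there) inequality $f(k)>0$, which is a somewhat delicate two-term comparison; your moment-sequence argument replaces that with a one-line Cauchy--Schwarz and, as a bonus, certifies that the $d$-\textsc{Power} rule falls within the geometrically dominant class of \cite{DMMS20}, whereas the paper's calculation is specific to this $\varphi$ and yields slightly more (monotonicity of $\alpha_\varphi(x)$ on all of $[1,\infty)$ rather than just the location of the minimum over integers).
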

\begin{proof}
  We have for $x \geq 1$:
    \[\alpha_{\varphi}(x) = \frac{\mathbb{E}[\Poi(x)^d]}{\varphi(x)} = \frac{e^{-x}\sum_{k=0}^{+\infty}k^d\frac{x^k}{k!}}{\varphi(x)} = e^{-x}\sum_{k=0}^{+\infty}k^d\frac{x^{k-d}}{k!} \ . \]

   Then:
   \begin{equation}
     \begin{aligned}
       \alpha_{\varphi}'(x) &=&& -\alpha_{\varphi}(x)+ e^{-x}\sum_{k=1}^{+\infty}(k-d)k^d\frac{x^{k-d-1}}{k!} \\
       &=&& -\alpha_{\varphi}(x)+ e^{-x}\sum_{k=0}^{+\infty}(k+1-d)(k+1)^d\frac{x^{k-d}}{(k+1)!}\\
       &=&& -\alpha_{\varphi}(x)+ e^{-x}\Big((1-d)x^{-d} +  \sum_{k=1}^{+\infty}(k+1-d)(k+1)^{d-1}\frac{x^{k-d}}{k!}\Big)\\
       &=&& e^{-x}x^{-d}\Big(1-d + \sum_{k=1}^{+\infty}(\frac{k+1-d}{k+1}(k+1)^d - k^d)\frac{x^k}{k!}\Big) \ . 
     \end{aligned}
   \end{equation}

   But the function $f(k) = \frac{k+1-d}{k+1}(k+1)^d - k^d$ is positive on $\mathbb{R}_+^*$, so we get that $\alpha_{\varphi}'(x) > 0$ for $x \geq 1$, thus $\alpha_{\varphi}(x)$ is increasing from $1$ to $+\infty$, so $\alpha_{\varphi} = \alpha_{\varphi}(1) = e^{-1}\sum_{k=1}^{+\infty}\frac{k^d}{k!}$.
\end{proof}

\section{NP-hardness of $\delta,h$-\textsc{AryGapLabelCover}}
\label{app:NPhardnessGap}
\begin{proof}[Proof of Proposition \ref{prop:AryGapLabelCover}]
  We reduce from the Label Cover problem described in \cite{DMMS20} which is known to be an NP-hard problem. The main idea of this reduction is the usual equivalence between bipartite graphs and hypergraphs.

  \begin{defi}
    A Label Cover instance $\mathcal{L} = (A,B,E,[L],[R],\set{\pi_e}_{e \in E})$ consists of a bi-regular bipartite graph $(A,B,E)$ with right degree $t$, alphabet sets $[L],[R]$ and for every edge $e \in E$, a constraint $\pi_e :[L] \rightarrow [R]$.
    A \emph{labeling} of $\mathcal{L}$ is a function $\sigma : A \rightarrow [L]$. We say that $\sigma$ \emph{strongly satisfies} a right vertex $v \in B$ if for every two neighbours $u,u'$ of $v$, we have $\pi_{(u,v)}(\sigma(u)) = \pi_{(u',v)}(\sigma(u'))$. Moreover, we say that  $\sigma$ \emph{weakly satisfies} a right vertex $v \in B$ if there exists two neighbours $u,u'$ of $v$ such that $\pi_{(u,v)}(\sigma(u)) = \pi_{(u',v)}(\sigma(u'))$.
  \end{defi}

 \begin{theo}[$\delta$-Gap-Label-Cover$(t,R)$ from \cite{DMMS20}]
  For any fixed integer $t \geq 2$ and fixed $\delta > 0$, there exists $R_0$ such that for any integer $R \geq R_0$, it is NP-hard for Label Cover instances $\mathcal{L} = (A,B,E,[L],[R],\set{\pi_e}_{e \in E})$ with right degree $t$ and right alphabet $[R]$ to distinguish between:
  
  \begin{itemize}
  \item[\textbf{YES:}] There exists a labeling $\sigma$ that \emph{strongly satisfies} all the right vertices.
  \item[\textbf{NO:}] No labeling \emph{weakly satisfies} more than $\delta$ fraction of the right vertices.
  \end{itemize}
\end{theo}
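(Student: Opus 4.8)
This is a standard hardness-amplification statement, established in~\cite{DMMS20}; the plan is to derive it from the PCP theorem and the parallel repetition theorem, followed by a combinatorial degree-reduction step built around the delicate ``weak satisfaction'' soundness notion. I would start from a normalized form of the PCP theorem: there are absolute constants $\epsilon_0>0$ and $d_0\ge 2$ such that it is NP-hard to tell a satisfiable $3$-CNF formula from one in which no assignment satisfies a $(1-\epsilon_0)$-fraction of the clauses, among instances where every clause has three distinct variables and every variable occurs in exactly $d_0$ clauses (the variable-regularity is obtained by the standard expander-replacement trick). To such a formula I attach its clause--variable game $G$: left vertices $A$ are the clauses (left alphabet $=$ the $\le 7$ satisfying assignments of a $3$-clause), right vertices $B$ are the variables (right alphabet $\{0,1\}$), a clause $C$ is joined to its three variables, and $\pi_{(C,x)}$ reads off the value a clause assignment gives to $x$; thus $G$ is left-$3$-regular, right-$d_0$-regular, and a projection game. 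A global satisfying assignment produces a left labeling strongly satisfying every right vertex, so completeness is perfect; and an elementary averaging argument (if the three checks at a clause all pass, that clause is satisfied by the ``projected'' assignment) gives $\mathrm{val}(G)\le 1-\epsilon_0/3$ in the NO case.

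Next I would apply the parallel repetition theorem of Raz (the projection-game refinement of Rao suffices) to $G$: for a constant $r$ depending only on $\delta$ and $t$, the game $G^{\otimes r}$ is again a projection game, is left-$3^r$-regular and right-$d_0^r$-regular, has right alphabet $\{0,1\}^r$ of size $2^r$, keeps perfect completeness, and satisfies $\mathrm{val}(G^{\otimes r})\le 2^{-cr}$ for some constant $c>0$. Now the degree reduction: keep the left vertex set $A^r$ unchanged, but split each right vertex $b$ of $G^{\otimes r}$ (which has $d_0^r\ge t$ neighbors once $r$ is large enough) into one child $(b,T)$ for every $t$-element subset $T$ of $N(b)$, making $(b,T)$ adjacent to exactly the vertices of $T$ with the correspondingly restricted constraints. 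By construction the new instance has right degree exactly $t$; a one-line count shows each left vertex then has degree $3^r\binom{d_0^r-1}{t-1}$, so the instance is bi-regular; and perfect completeness survives, since in a YES instance all neighbors of each $b$ project to one common value and hence so do the $t$ neighbors of each child. Finally I pad the right alphabet with unused symbols to reach any prescribed $R\ge R_0:=2^{\,r}$; padding cannot help any labeling, and the whole reduction is polynomial time because $r$ is a constant.

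Bounding the weakly-satisfied fraction of the final instance is the step I expect to be the crux, because ``weak satisfaction'' is cheap enough that most naive degree manipulations destroy it --- and the point of using all $t$-subsets is precisely that it does not. Fix a left labeling $\sigma$; for a right vertex $b$ of $G^{\otimes r}$ the $d_0^r$ projected values $\pi_{(a,b)}(\sigma(a))$ for $a\in N(b)$ define a distribution $p_b$ on $[2^r]$ (normalized multiplicities), and a child $(b,T)$ is weakly satisfied exactly when $T$ contains two neighbors with equal projected value. A union bound over the $\binom t2$ pairs inside a uniformly random $t$-subset shows the fraction of weakly satisfied children of $b$ is $O\!\left(\binom t2\,\lVert p_b\rVert_2^2\right)\le O\!\left(\binom t2\,\max_v p_b(v)\right)$; averaging over $b$ and choosing $\sigma_B(b):=\arg\max_v p_b(v)$ identifies $\mathbb{E}_b[\max_v p_b(v)]$ with the acceptance probability of $(\sigma,\sigma_B)$ in $G^{\otimes r}$, which is at most $\mathrm{val}(G^{\otimes r})\le 2^{-cr}$. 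The decisive feature is that the overhead is the \emph{fixed} constant $\binom t2$ rather than the growing degree $d_0^r$, so a large enough constant $r$ forces the weakly-satisfied fraction below $\delta$. The remaining points --- the explicit regularity arithmetic, verifying that parallel repetition preserves the projection structure and perfect completeness, and the routine ``bipartite $\leftrightarrow$ hypergraph'' translation needed to feed this into Proposition~\ref{prop:AryGapLabelCover} --- I would settle by direct computation.
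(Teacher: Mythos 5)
Your proposal is correct, but it answers a different question than the paper does: the paper never proves this statement at all --- it is imported verbatim from \cite{DMMS20} as a black box (the only thing proved in Appendix~\ref{app:NPhardnessGap} is the routine translation between this bipartite formulation and the $h$-ary hypergraph formulation of Proposition~\ref{prop:AryGapLabelCover}). What you have written is a self-contained re-derivation of the cited theorem along the classical Feige-style route: PCP theorem plus expander-based regularization of 3-SAT, the clause--variable projection game with soundness $1-\epsilon_0/3$, parallel repetition to push the game value below $2^{-cr}$ while preserving projection structure, perfect completeness and bi-regularity, and then the degree reduction that replaces each right vertex by all $\binom{d_0^r}{t}$ of its $t$-subsets of neighbours. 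You correctly identify and execute the crux: for a fixed left labeling, the fraction of weakly satisfied children of $b$ is at most $\binom{t}{2}$ times the collision probability of the projected-value distribution $p_b$, which is at most $\binom{t}{2}\max_v p_b(v)$, and averaging the maximizer over $b$ is exactly the acceptance probability of the best-response right strategy, hence at most $\mathrm{val}(G^{\otimes r})$; the overhead being the fixed constant $\binom{t}{2}$ rather than the degree $d_0^r$ is precisely why this splitting preserves weak soundness, and padding the right alphabet to any $R\ge R_0=2^r$ is harmless since satisfaction only involves left labelings. The remaining items you defer (exact variable-regularity via expander replacement, that bi-regularity makes the uniform-edge distribution coincide with uniform right vertex then uniform neighbour, and $d_0^r\ge t$) are standard and you flag them appropriately. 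In short: the paper buys brevity by citing \cite{DMMS20}; your route buys a complete, correct proof of the imported hardness statement, essentially reconstructing the argument of Feige and \cite{DMMS20}.
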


 The reduction is the following. From $\delta$-Gap-Label-Cover$(t,R)$, we take $h=t$ and the same parameters $\delta,R$. Given an instance $\mathcal{L} = (A,B,E,[L],[R],\set{\pi_e}_{e \in E})$, we take $\mathcal{G} = (A,E',[L],[R],\set{\pi'_{e',v}}_{e' \in E',v \in e'})$ with $E' = \set{N(b), b \in B}$ with $N(b)$ the set of neighbours of $b$ in $\mathcal{L}$, and $\pi'_{e',v} = \pi'_{N(b),v} := \pi_{v,b}$ since $v \in N(b)$. Since $(A,B,E)$ is bipartite and biregular, we get that our hypergraph has all hyperedges of size $h = \abs{N(b)} = t$, and that it is regular from the regular left degree of $(A,B,E)$. By construction, the notion of weakly and strongly satisfied is the same in both cases, as well as the labelings, and thus we have the NP-hardness of $\delta,h$-\textsc{AryGapLabelCover}.

 Note that both problems are in fact linearly equivalent since we could do the same reduction backwards.
 
\end{proof}

\section{Proof of existence of partitioning systems}
\label{app:Partitioning}

\begin{proof}[Proof of Proposition \ref{prop:Partitioning}]
    The existential proof is based on the probabilistic method. We take $\mathcal{P}_i$ an $h$-equi-sized uniform random $x_{\varphi}$-cover of $[n]$. Hence in the collection $\mathcal{P}_i=(P_{i,1},\ldots,P_{i,h})$, each of the $h$ subsets is of cardinality $\frac{x_{\varphi}n}{h}$. Write $\mathcal{P} = (\mathcal{P}_1,\ldots,\mathcal{P}_R)$. We have that for any $a \in [n], \mathbb{P}(a \in P_{i,j}) = \frac{x_{\varphi}}{h}$. Note that these events are independent for different $i$s.

    By construction, the first condition is fulfilled. Let us prove the second one.

    Fix $T \subseteq [R]$ and $\mathcal{Q} := \set{P_{i,j(i)} : i \in T}$ for some function $j : T \rightarrow [h]$. We have for $a \in [n]$:
    \[ \mathbb{E}[C_a^{\varphi}(\mathcal{Q})] =  \mathbb{E}[\varphi(\abs{\mathcal{Q}}_a)] = \mathbb{E}[\varphi(\abs{\set{i \in T: a \in P_{i,j(i)}}})]\ .\]

    But the random variables $\set{X^a_i := \mathbbm{1}_{a \in P_{i,j(i)}}}_{i \in T}$ are independent and $X^a_i \sim \Ber(\frac{x_{\varphi}}{h})$, so $X^a :=\abs{\set{i \in T: a \in P_{i,j(i)}}} = \sum_{i \in T} X^a_i \sim \Bin(\abs{T},\frac{x_{\varphi}}{h})$, and thus:
    \[\mathbb{E}[C_a^{\varphi}(\mathcal{Q})] = \mathbb{E}[\varphi(\Bin(\abs{T},\frac{x_{\varphi}}{h}))] =\psi^{\varphi}_{\abs{T},h}\ .\]

    Since $\abs{\mathcal{Q}}_a\leq \abs{\mathcal{Q}} \leq R$ and $\varphi$ nondecreasing, we have $0 \leq C_a^{\varphi}(\mathcal{Q}) \leq \varphi(R)$. We claim that we can apply a Chernoff-Hoeffding bound on $C^{\varphi}(\mathcal{Q}) = \sum_{a \in [n]} C_a^{\varphi}(\mathcal{Q})$ and get:

    \[ \mathbb{P}\Big( \abs{C^{\varphi}(\mathcal{Q}) -\psi^{\varphi}_{\abs{T},h} n} > \eta n\Big) \leq 2 \text{exp}\Big(-2\Big(\frac{\eta}{\varphi(R)}\Big)^2n\Big)\ .\]

    The random variables $\{ C_a^{\varphi}(\mathcal{Q}) \}_{a \in [n]}$ are not independent in general. However, they are negatively associated \cite{JP83}, and this is sufficient for the Chernoff-Hoeffding bound to hold as pointed out in \cite{DR98}, provided that $\eta \in (0,1)$. The set of random variables $\{ C_a^{\varphi}(\mathcal{Q})  \}_{a \in [n]}$ is said to be negatively associated if for any functions $f$ and $g$ either both increasing or both decreasing and any disjoint index sets $I,J \subseteq [n]$, we have:

    \[ \mathbb{E}[f(C_a^{\varphi}(\mathcal{Q}): a \in  I) \cdot g(C_a^{\varphi}(\mathcal{Q}): a \in  J)] \leq  \mathbb{E}[f(C_a^{\varphi}(\mathcal{Q}): a \in  I)] \cdot \mathbb{E}[g(C_a^{\varphi}(\mathcal{Q}): a \in  J)] \ .\]

    Note that $C_a^{\varphi}(\mathcal{Q}) = \varphi(\abs{\mathcal{Q}}_a) = \varphi(X^a)$ is a nondecreasing function of $\{ X^a_i \}_{i \in  [R]}$, since $\varphi$ is nondecreasing and $X^a = \sum_{i \in T} X^a_i$. Thus in order to show that $\{ C_a^{\varphi}(\mathcal{Q}) \}_{a \in [n]}$ are negatively associated, it suffices to show that $\{ X^a_i \}_{i \in  [R],a \in [n]}$ are negatively associated (see Proposition P$_6$ of \cite{JP83}).

    For fixed $i \in [R]$, $\{ X^a_i \}_{a \in [n]}$ are negatively associated because it corresponds to a permutation distribution of $(0,\ldots,0,1,\ldots,1)$, with $n-\frac{x_{\varphi}n}{h}$ zeros and $\frac{x_{\varphi}n}{h}$ ones, since it describes a random subset of size $\frac{x_{\varphi}n}{h}$ (see Definition 2.10 and Theorem 2.11 of \cite{JP83}). Then, using the fact that the families $\{ X^a_i \}_{a \in [n]}$ are mutually independent, we obtain that $\{ X^a_i \}_{i \in  [R],a \in [n]}$ are negatively associated (see Property P$_7$ of \cite{JP83}). Using \cite{DR98}, this establishes the claimed Chernoff-Hoeffding bound.
    
    Since there are at most $(h+1)^R$ choices of $T$ and $\mathcal{Q}$, a union bound gives:
    \[ \mathbb{P}\Big(\exists C,\mathcal{Q} : \abs{C^{\varphi}(\mathcal{Q}) -\psi^{\varphi}_{\abs{T},h} n} > \eta n\Big) \leq 2(h+1)^R \text{exp}\Big(-2\Big(\frac{\eta}{\varphi(R)}\Big)^2n\Big)\ .\]

    Thus with probability at least $9/10$, we have that $\abs{C^{\varphi}(\mathcal{Q}) -\psi^{\varphi}_{\abs{T},h} n} \leq \eta n$, since we have taken $n \geq \eta^{-2}R\varphi(R)^2\log(20(h+1))$. So there must exists some choice of $\mathcal{P}$ that satisfies the first and second constraints of partitioning systems. Thus, we can enumerate over all choices of $\mathcal{P}$ in time exp($Rn\log(n))\cdot\text{poly}(h)$ to find such a partitioning system.
\end{proof}

\section{Proof of Theorem \ref{theo:HardnessMA}}
\label{app:HardnessMA}
\begin{proof}
We show that $\varphi$-\textsc{Resource Allocation} corresponds to $\varphi$-\textsc{MaxCoverage} under a matroid constraint. Given an instance of $\varphi$-\textsc{Resource Allocation}, consider the partition matroid $\mathcal{M}$ on $[\sum_{i \in[k]} m_i] := [m_1] + \ldots + [m_k]$, where $(B_i)_{i \in [k]} := ([m_i])_{i \in [k]}$ is a partition of the ground set and the cardinality constraint  for each $i$ is to $d_i=1$. 

Here, $I \subseteq [\sum_{i \in[k]} m_i]$ is an independent set of the matroid iff $\abs{I \cap B_i} \leq d_i = 1$, for all $i \in [k]$. This corresponds to each agent $i \in [k]$ selecting at most one element from the available $m_i$ choices. In other words, we have a bijection $f$ between tuples $(A_1, \ldots, A_k) \in \mathcal{A}_1 \times \ldots \times \mathcal{A}_k$ and maximal independent sets (bases) of $\mathcal{M}$ such that $W^{\varphi}(A) = C^{\varphi}(f(A))$. Therefore, Theorem~\ref{theo:AlgoMat} leads to a polynomial-time $\alpha_{\varphi}$-approximation algorithm for $\varphi$-\textsc{Resource Allocation}. 

For the hardness part of the theorem, the proof is exactly the same as in Theorem \ref{theo:Hardness}, but instead of $\mathcal{F} := \set{F^v_{\beta}, v \in V, \beta \in [L]}$ and $k = \abs{V}$, we take  $k = \abs{V}$ to be the number of agents and $\mathcal{A}_i := \set{F^{v_i}_{\beta}, \beta \in [L]}$ where $V =\set{v_1, \ldots, v_k}$. Hence, instead of subsets of $\mathcal{F}$ of size $k$, we only consider one set $F^v_{\beta} \in \mathcal{F}$, for each $v \in V$. The function we maximize in the reduction remains unchanged. 

To establish completeness, we note that the subset described is already of the right form and, hence, the arguments continue to hold. For proving soundness, the constraint on the shape of the subset of $\mathcal{F}$ only helps us, since it gives more constraints on the given subset from which we want to construct a labeling. Therefore, both parts of the proof work and the {\rm NP}-hardness follows.
\end{proof}

\end{document}